\newtheorem{theorem}{Theorem}
\newtheorem{definition}[theorem]{Definition}
\newtheorem{observation}[theorem]{Observation}
\newtheorem{corollary}[theorem]{Corollary}
\newtheorem{lemma}[theorem]{Lemma}
\newcommand{\eps}{\varepsilon}
\newcommand{\schedule}{\ensuremath{\mathcal{S}}}
\newcommand{\abc}[3]{\ensuremath{\textup{#1}|\,#2\,|#3}}
\newcommand{\OO}{\mathcal{O}}
\newcommand{\UB}{\text{UB}}
\newcommand{\opt}{\textsc{Opt}}
\newcommand{\alg}{\textsc{Alg}}
\newcommand{\x}{\ensuremath{\tau}} 
\newcommand{\ka}{\ensuremath{\kappa}} 
\newcommand{\w}{\ensuremath{a}} 
\def\e{{\varepsilon}}
\def\OPT{\text{OPT}}
\def\APPROX{\text{APX}}
 \newenvironment{algorithmNB}[1]%
{
\begin{center}
\begin{minipage}{\textwidth}
\rule{\textwidth}{0.5pt}\\
\noindent{\textbf{Algorithm \textsc{#1}}}

\noindent\ignorespaces
}
{
\smallskip
\vspace*{-2ex}
\end{minipage}

\rule[1ex]{\textwidth}{0.5pt}
\end{center}
}
\newcommand{\jcom}[1]{} 
\newcommand{\ncom}[1]{} 
\title{Dual techniques for scheduling\\ on a machine with varying speed\footnote{Parts of the results appeared in a preliminary version of this paper in the proceedings of \textit{ICALP '13}~\cite{MegowV13}.}}
\author{Nicole Megow\thanks{Department of Mathematics, Technische Universit{\"a}t
       Berlin, Germany.  Email: \texttt{nmegow@math.tu-berlin.de}. Supported by the German Science Foundation (DFG) under contract  ME 3825/1.}%
     \and Jos\'e Verschae\thanks{Departamento de Ingenier\'ia Industrial and Centro de Modelamiento Matem\'atico, Universidad de Chile, Santiago, Chile. Email: {\tt jverscha@ing.uchile.cl}. Supported by the Nucleo Milenio Informaci\'on y Coordinaci\'on en Redes ICM/FIC P10-024F and by FONDECYT project 3130407.}
}
\date{\today}
\begin{document}

  \maketitle

\begin{abstract}
We study scheduling problems on a machine with varying speed. Assuming a known speed function we ask for a cost-efficient scheduling solution. Our main result is a PTAS for minimizing the total weighted completion time in this setting. This also implies a PTAS for the closely related problem of scheduling to minimize generalized global cost functions. The key to our results is a re-interpretation of the problem within the well-known {\em two-dimensional Gantt chart}: instead of the standard approach of scheduling in the {\em time-dimension}, we construct scheduling solutions in the {\em weight-dimension}. 

We also consider a dynamic problem variant in which deciding upon the speed is part of the scheduling problem and we are interested in the tradeoff between scheduling cost and speed-scaling cost, which is typically the energy consumption. We observe that the optimal order is independent of the energy consumption and that the problem can be reduced to the setting where the speed of the machine is fixed, and thus admits a PTAS. Furthermore, we provide an FPTAS for the NP-hard problem variant in which the machine can run only on a fixed number of discrete speeds. Finally, we show how our results can be used to obtain a~$(2+\eps)$-approximation for scheduling preemptive jobs with release dates on multiple identical parallel machines.
\end{abstract}

\paragraph{Key words:} scheduling, speed-scaling, power-management, generalized cost functions, non-availability periods, energy-aware

\section{Introduction}

In several computation and production environments we face scheduling problems in which the speed of resources may vary. We distinguish mainly two types of varying-speed scenarios: one in which the speed is a \emph{given} function of time and another \emph{dynamic} setting in which deciding upon the processor speed is part of the scheduling problem. The first setting occurs, e.g., in production environments where the speed of a resource may change due to overloading, aging, or in an extreme case it may be completely unavailable due to maintenance or failure. 
The dynamic setting finds application particularly in modern computer architectures, where speed-scaling is an important tool for power-management. Here we are interested in the tradeoff between the power consumption and the quality-of-service. Both research directions---scheduling on a machine with given speed fluctuation as well as scheduling including speed-scaling---have been pursued quite extensively,  but seemingly separately from each other. 

The main focus of our work and the main technical contribution lie in the setting with a given speed function. We consider the problem of scheduling to minimize the sum of weighted completion times~$\sum_j w_jC_j$, a standard measure for quality-of-service. We present a PTAS for this problem which is best possible unless P$=$NP. 
In addition, we draw an interesting connection to the dynamic model which allows us to transfer some of our techniques to this setting.

Very useful in our arguments is the well-known geometric view of the min-sum scheduling
problem in a {\em two-dimensional Gantt chart}, an interpretation originally
introduced by Eastman, Even, and Isaacs~\cite{EEI64}. Crucial to our results is the deviation from the standard view of scheduling in the {\em time dimension} and switching to scheduling in the {\em weight dimension}. This dual view allows us to cope with the highly sensitive speed changes in the time dimension which prohibit standard rounding, guessing, and approximation techniques.

\subsection*{Previous work}

Research on scheduling on a machine of given varying speed has mainly focused on the special case of scheduling with non-availability periods, see e.g.~\cite{schmidt00,lee04,diedrichJST09,maCZ10}. Despite a history of more than 30 years, only recently the first constant approximation for $\min \sum w_jC_j$ was derived by Epstein et al.~\cite{epsteinLMMMSS12}. In fact, their $(4+\eps)$-approximation computes a universal sequence which has the same guarantee for any (unknown) speed function. For the setting with release dates, they give an approximation algorithm with the same guarantee for any given speed function. If the speed is only non-decreasing (and the release dates are trivial), there is an efficient PTAS~\cite{stillerW10}. In this case the complexity status remains open, whereas for general speed functions the problem is strongly NP-hard, even when for each job the weight and processing time are~equal~\cite{wangSC05}.

The problem of scheduling on a machine of varying speed is equivalent to scheduling on an ideal machine (of constant speed) but minimizing a more general global cost function~$\sum w_jf(C_j)$, where~$f$ is a nondecreasing function. In this identification, $f(C)$ denotes the time that the varying-speed machine needs to process a work volume of $C$~\cite{hoehnJ12}. The special case of only nondecreasing (nonincreasing) speed functions corresponds to concave (convex) global cost functions. In a recent work, H\"ohn and Jacobs~\cite{hoehnJ12} give a formula for computing tight guarantees for Smith's rule for any convex or concave function~$f$. They also show that the problem for increasing piecewise linear cost function is strongly NP-hard even with only two different slopes, and so is our problem when the speed function takes only two~distinct~values.   

Even more general min-sum cost functions have been studied, where each job may have its individual nondecreasing
cost function. 
For this setting, the currently best known approximation factor is~$4+\eps$~\cite{cheungS11,MestreV14}. For the more complex setting with release dates, Bansal and Pruhs~\cite{bansalP10} give a randomized $\OO(\log \log(n \max_j p_j))$-approximation algorithm. Clearly, these results translate also to the setting with varying machine speed.

Scheduling with dynamic speed-scaling was initiated by Yao, Demers, and Shenker~\cite{yaoDS95} and became a very active research field in the past fifteen years. Most work focuses on scheduling problems where jobs have deadlines by which they must finish. Thus, the speed scaling problem is a single-objective minimization problem. We refer to~\cite{iraniP05,albers10} for an overview. Closer to our setting 
is the work initiated by Pruhs, Uthaisombut, and Woeginger~\cite{pruhsUW08} where they obtain a polynomial algorithm for minimizing the total flow time given an energy budget if all jobs have the same work volume. This work is later continued by many others; see, e.\,g.,~\cite{albers_energy-efficient_2007,bansalPS09,chan_non-clairvoyant_2010} and the references therein. Most of this literature is concerned with online algorithms to minimize total (or weighted) flow time plus energy. 
The minimization of the weighted sum of completion times plus energy has been considered recently.
Angel, Bampis, and Kacem~\cite{angelBK12} derive constant approximations for non-preemptive models with unrelated machines and release dates. Carrasco, Iyengar, and Stein~\cite{carrascoIS11} obtain similar results even under precedence constraints. 

For the general objective of speed-scaling with an energy budget as considered in this paper, Angel, Bampis, and Kacem~\cite{angelBK12} also show a randomized~$(2+\eps)$-approximation slightly exceeding the energy budget for non-preemptive scheduling on unrelated machines with release dates. The bounds given for scheduling cost and the budget excess are satisfied only in expectation. 

\subsection*{Our results} 
We give several best possible algorithms for problem variants that involve scheduling to minimize the total weighted completion time on a single machine that may vary its speed.

Our main result is an efficient PTAS (Section~\ref{sec:PTAS-given-speeds}) for scheduling to minimize $\sum w_jC_j$ on a machine of varying speed (given by an oracle). This is best possible since the problem is strongly NP-hard, even when the machine speed takes only two distinct values~\cite{hoehnJ12}. 
Our results generalize recent previous results such as a PTAS on a machine with only {\em non-decreasing} speeds~\cite{stillerW10} and FPTASes for only {\em one} non-availability period~\cite{kellererS10,kacemM09}. 

Standard scheduling techniques rely on delaying jobs or rounding processing requirements. Such approaches typically fail on varying-speed machines. The reason is that the slightest error introduced by rounding might provoke an unbounded increase in the solution cost.
Similarly, adding any amount of idle time to the machine might be fatal. Our
techniques completely avoid this difficulty by a change of paradigm. To explain
our ideas it is helpful to use a 2D-Gantt chart interpretation~\cite{EEI64}; see
Section~\ref{sec:weight-space}. As observed before, e.g., in~\cite{goemans_2D_2000}, we obtain a \emph{dual} scheduling
problem by looking at the y-axis in a 2D-Gantt chart and switching the roles of
the processing times and weights. In other words, a dual solution describes a
schedule by specifying the remaining weight of the system at the moment a job
completes. %
This simple idea avoids the difficulties on the time-axis and allows to combine old with new techniques for scheduling on the weight-axis. We remark that this result translates directly to the equivalent problem \abc{1}{}{\sum_j w_j f(C_j)} (with $f$ non-decreasing).

In case that an algorithm can set the machine at arbitrary speeds, 
we show in Section~\ref{sec:cont-speeds} that the optimal scheduling sequence is independent of the available energy. This follows by analyzing a convex program that models the optimal energy assignment for a given job permutation. A similar observation
was made independently by V\'asquez~\cite{vas12} in a game-theoretic setting.  We show that computing this universal optimal sequence corresponds to the problem of scheduling with a particular concave global cost function, 
which can be solved with our PTAS mentioned above, or with a PTAS for non-decreasing speed~\cite{stillerW10}.
Interestingly, this reduction relies again on a problem transformation from time-space to weight-space in the 2D-Gantt chart. For a given scheduling sequence, we give an explicit formula for computing the optimal energy (speed) assignment. Thus, we have a PTAS for speed-scaling and scheduling for a given energy budget. We remark that the complexity of this problem is open. 

In many applications, including most modern computer architectures, machines are only capable of using a given number of discrete power (speed) states. We provide in Section~\ref{sec:discrete-speed} an efficient PTAS for this complex scenario. This algorithm is again based on our techniques relying on dual schedules. Furthermore, we obtain a $(1+\eps)$-approximation of the Pareto frontier for the energy-cost bicriteria problem. On the other hand, we show that this problem is NP-hard even when there are only two speed states. We complement this result by giving an FPTAS for a constant number of available speeds. 

In Section~\ref{sec:parallelMachines} we consider a more complex scheduling problem in the speed-scaling setting: jobs have individual release dates and must be scheduled preemptively on~$m$ identical parallel machines. We notice that our PTAS results can be utilized to obtain a $(2+\eps)$-approximation for scheduling preemptive jobs with non-trivial release dates on identical parallel machines.  Here, we apply our previous results to solve a {\em fast single machine relaxation}~\cite{chekuriMNS01} combined with a trick to control the actual job execution times. Then, we keep the energy assignments computed in the relaxation and apply {\em preemptive list scheduling} on parallel machines respecting release dates. We remark, that our deterministic algorithm guarantees that any solution it obtains has cost within a factor of~$2+\eps$ and it meets the energy budget. This cannot be guaranteed in the previous non-preemptive result for our objective function with energy budget in~\cite{angelBK12}. 

This paper expands considerably the extended abstract that appeared in the proceedings of \emph{ICALP '13}~\cite{MegowV13}. Among others, this new version contains complete proofs, full presentation of our techniques, and new approximation results for more general scheduling problems with release dates and identical parallel machines~(Section~\ref{sec:parallelMachines}).

\section{Model, definitions, and preliminaries} 
\label{sec:weight-space}

\subsection{Problem definition} We consider two types of scheduling problems. In both cases we are given a set of jobs $J=\{1,\ldots,n\}$ with work volumes (i.e., processing time at speed~$1$)~$v_j\geq 0$ and weights~$w_j\geq 0$. 
We seek a schedule on a single machine, described by a permutation of jobs, that minimizes the sum of weighted completion times. The speed of the machine may vary---this is where the problems distinguish.

In the problem {\em scheduling on a machine of given varying speed} we assume that the speed function~$s:\mathbb{R}_+\!\rightarrow \mathbb{R}_+$ is given implicitly by an oracle. Given a value $v$, the oracle returns the first point in time when the machine can finish $v$ units~of~work. That is, for a speed function~$s$ the oracle returns the value

\begin{equation}
\label{eq:functionF}
  f(v):= \inf\left\lbrace b>0: \int_0^b s(t) \geq v\right\rbrace.
\end{equation}

Here we are implicitly assuming that $s$ is integrable. Using the oracle, we can compute for a given order of jobs the execution and completion times and thus the total cost of the solution. We additionally must ensure that the numbers returned by the oracle can be handled efficiently. To avoid extra technical difficulties, we call an algorithm efficient if it runs in time polynomial in the input size and the largest encoding size of a number returned by the oracle.  

In the problem {\em scheduling with speed-scaling}  an algorithm determines not only a schedule
for the jobs but will also decide at which speed~$s\ge0$ the machine will run at any time. Running a machine at certain speed requires a certain amount of power. Power is typically modeled as a monomial (convex) function of speed,~$P(s)=s^{\alpha}$ with a small constant $\alpha > 1$. Given an energy budget $E$, we ask for the optimal power (and thus speed) distribution and corresponding schedule that minimizes $\sum_j w_j C_j$. More generally, we are interested in quantifying the tradeoff between the scheduling objective $\sum_{j} w_jC_j$ and the total energy consumption, that is, we aim for computing the Pareto curve for the bicriteria minimization problem. We consider two variants of speed-scaling: If the machine can run at an arbitrary speed level $s\in \mathbb{R}_+$, we say that we are in the \emph{continuous-speed} setting. On the other hand, if that machine can only choose among a finite set of speeds $\{s_1,\ldots,s_\ka\}$ we are in the \emph{discrete-speed} environment.

In both of our settings our solution concept is a permutation of jobs. Notice that this is no restriction since preemption or idle times cannot reduce the cost of the solution.

\subsection{From time-space to weight-space} For a schedule $\schedule$, we let $C_j(\schedule)$ denote the completion time of $j$ and we let $W^{\schedule}(t)$ denote the total weight of jobs completed strictly after $t$. Note that by definition $W^{\schedule}(t)$ is right-continuous, i.\,e., if $C_j(\schedule)=t$, the weight of $j$ does not count towards the remaining weight $W^{\schedule}(t)$. 
Whenever $\schedule$ is clear from the context we omit it. It is not hard to see that 
\begin{equation}
  \label{eq:obj-fct}
  \sum_{j\in J} w_jC_j(\schedule) = \int_0^\infty W^{\schedule}(t)dt.  
\end{equation}

\begin{figure}[tbp]
 
  \begin{subfigure}[b]{.49\textwidth}
    \begin{center}
      \begin{tikzpicture}[scale=0.5,yscale=.7,font=\scriptsize]
\tikzstyle{disc}=[draw,fill=white,circle,very thick,inner sep=0.4mm]
\tikzstyle{cont}=[fill=black,circle,very thick,inner sep=0.4mm]

\def\N{9}

\def\x{{0,1,2.5,4,8,10}}%
\def\y{{8.5,7,6,4.5,3,1.6,0}}%

 \draw[->] (-.1,0) -- (10.5,0) node[below right] {$t$};
 \draw[->] (0,-.1) -- (0,9.5) node[left] {$W$};

\begin{scope}[yshift=-1.2cm]
 \foreach \i/\j in {0/1 , 1/2, 2/3 , 3/4 , 4/5} {%
  \pgfmathsetmacro{\xa}{\x[\i]}
  \pgfmathsetmacro{\xb}{\x[\i+1]}
  \draw (\xa,0) rectangle node {\j} (\xb,1);  
 }
 \node at (5,-.4) {time-schedule};
\end{scope}

\begin{scope}[xshift=-1.2cm]
 \foreach \i in {0,...,3} {%
  \pgfmathsetmacro{\ya}{\y[\i]}
  \pgfmathsetmacro{\yb}{\y[\i+1]}
  \draw (0,\ya) rectangle node {\pgfmathparse{\i+1}\pgfmathprintnumber{\pgfmathresult}} (1,\yb);  
 }
  \draw[pattern=crosshatch dots] (0,\y[4]) rectangle (1,\y[5]);  
  \draw (0,\y[5]) rectangle node {5} (1,\y[6]);
  \node[rotate=90,above] at (0,5) {weight-schedule};
\end{scope}
\draw[->,thick] (-.7,2.3) -- (5,6) node[right] {idle weight};

 \foreach \i in {0,...,3} {%
  \pgfmathsetmacro{\xa}{\x[\i]}
  \pgfmathsetmacro{\xb}{\x[\i+1]}
  \pgfmathsetmacro{\ya}{\y[\i]}
  \pgfmathsetmacro{\yb}{\y[\i+1]}
  \draw[fill=gray,fill opacity=0.3] (\xa,\yb) rectangle (\xb,\ya);  
 }
 \draw[fill=gray,fill opacity=0.3] (\x[4],\y[6]) rectangle (\x[5],\y[5]);  
 \draw[pattern=crosshatch dots] (0,\y[4]) rectangle (\x[4],\y[5]);

 \foreach \i in {0,...,3} {%
  \pgfmathsetmacro{\xa}{\x[\i]}
  \pgfmathsetmacro{\xb}{\x[\i+1]}
  \pgfmathsetmacro{\ya}{\y[\i]}
  \draw[very thick]  (\xa,\ya)  -- (\xb,\ya) node[disc] {};
  }
 \draw[very thick]  (\x[4],\y[5])  -- (\x[5],\y[5]) node[disc] {};

%
%
%
%
%
%
%
%
%
%
\end{tikzpicture}
    \end{center} 
    \caption{Original solution.\newline}
    \label{fig:ganttA}
  \end{subfigure}~
  \begin{subfigure}[b]{.49\textwidth}
    \begin{center}
      \begin{tikzpicture}[scale=0.5,yscale=.7,font=\scriptsize]
\tikzstyle{disc}=[draw,fill=white,circle,very thick,inner sep=0.4mm]
\tikzstyle{cont}=[fill=black,circle,very thick,inner sep=0.4mm]

\def\N{9}

\def\x{{0,1,2,3.5,8,10}}%
\def\y{{8.5,7,6,4.5,3,2.8,1.8,1.6,0}}%

 \draw[->] (-.1,0) -- (10.5,0) node[below right] {$t$};
 \draw[->] (0,-.1) -- (0,9.5) node[left] {$W$};

\begin{scope}[yshift=-1.2cm]
 \foreach \i/\j in {0/1 , 1/3, 2/4 , 3/2 , 4/5} {%
  \pgfmathsetmacro{\xa}{\x[\i]}
  \pgfmathsetmacro{\xb}{\x[\i+1]}
  \draw (\xa,0) rectangle node {\j} (\xb,1);  
 }
 \node at (5,-.4) {time-schedule};
\end{scope}

\begin{scope}[xshift=-1.2cm]
 \foreach \i/\j in {0/1 ,1/, 2/3, 3/4 , 4/, 5/2, 6/ , 7/5} {%
  \pgfmathsetmacro{\ya}{\y[\i]}
  \pgfmathsetmacro{\yb}{\y[\i+1]}
  \draw (0,\ya) rectangle node {\j} (1,\yb);  
 }
  \draw[pattern=crosshatch dots] (0,\y[1]) rectangle (1,\y[2]);
  \draw[pattern=crosshatch dots] (0,\y[4]) rectangle (1,\y[5]);
  \draw[pattern=crosshatch dots] (0,\y[6]) rectangle (1,\y[7]);  
  \node[rotate=90,above] at (0,5) {weight-schedule};
\end{scope}

 
 \draw[fill=gray,fill opacity=0.3] (\x[0],\y[0]) rectangle (\x[1],\y[1]);
 \draw[fill=gray,fill opacity=0.3] (\x[1],\y[2]) rectangle (\x[2],\y[3]);  
 \draw[fill=gray,fill opacity=0.3] (\x[2],\y[3]) rectangle (\x[3],\y[4]);  
 \draw[fill=gray,fill opacity=0.3] (\x[3],\y[5]) rectangle (\x[4],\y[6]);  
 \draw[fill=gray,fill opacity=0.3] (\x[4],\y[8]) rectangle (\x[5],\y[7]);  
 \draw[pattern=crosshatch dots] (0,\y[1]) rectangle (\x[1],\y[2]);
 \draw[pattern=crosshatch dots] (0,\y[4]) rectangle (\x[3],\y[5]);
  \draw[pattern=crosshatch dots] (0,\y[6]) rectangle (\x[4],\y[7]);

 \foreach \i/\k in {0/0, 1/2 , 2/3 , 3/5 , 4/7 } {%
  \pgfmathsetmacro{\xa}{\x[\i]}
  \pgfmathsetmacro{\xb}{\x[\i+1]}
  \pgfmathsetmacro{\ya}{\y[\k]}
  \draw[very thick]  (\xa,\ya)  -- (\xb,\ya) node[disc] {};
  }

%
%
%
%
%
%
%
%
%
%
\end{tikzpicture}
    \end{center}
    \caption{New situation after decreasing the completion weight of Job 2.}
    \label{fig:ganttB}
  \end{subfigure}  
\caption{\textbf{2D-Gantt chart}. The $x$-axis shows a schedule, while the $y$-axis corresponds to the remaining weight function $W(\cdot)$ plus the idle weight (hatched) in the corresponding weight-schedule.
}
\label{fig:gantt}
\end{figure}
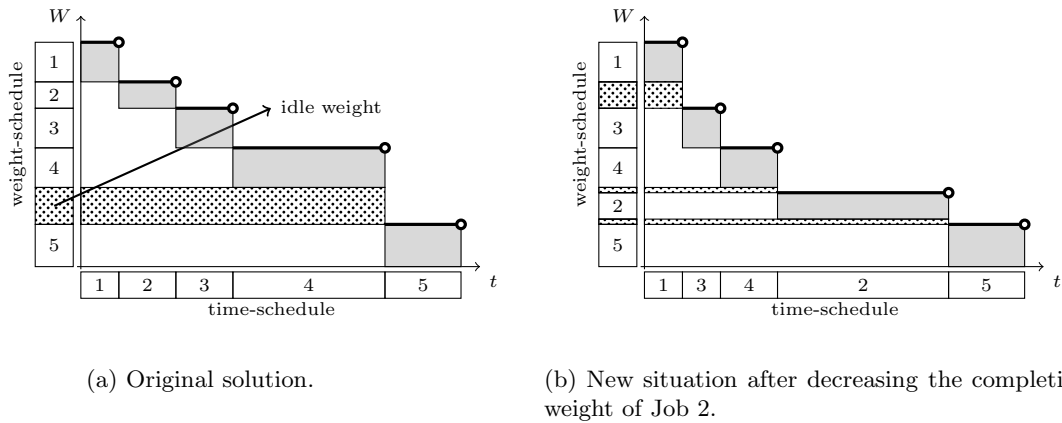

Our main idea is to describe our schedule in terms of the remaining weight function $W$. That is, instead of determining $C_j$ for each job $j$, we will implicitly describe the completion time of a job 
$j$ by the value of $W$ at the time that $j$ completes. We call this value the \emph{starting weight} of the job $j$, and denote it by $S_j^w$. Similarly, we define the \emph{completion weight} of $j$ as $C_j^w:=S_j^w+w_j$. This has a natural interpretation in the two axes of the 2D-Gantt chart (see Figure~\ref{fig:ganttA}): A typical schedule determines completion times for jobs in \emph{time-space} ($x$-axis), which is highly sensitive when the speed of the machine may vary. We call such a solution a \emph{time-schedule}. Describing a scheduling solution in terms of remaining weight can be seen as scheduling in the \emph{weight-space}~($y$-axis), yielding a \emph{weight-schedule}. 

In weight-space the weights play the role of processing times. All notions that are usually considered in schedules apply in weight-space. For example, we say that a weight-schedule is feasible if there are no two jobs overlapping, and that the machine is \text{idle} at weight value $w$ if $w\not\in [S_j^w,C_j^w]$ for all $j$. In this case we say that $w$ is \emph{idle weight} (like, for example, the hatched interval in Figure~\ref{fig:ganttA}). A non-preemptive weight-schedule immediately defines a non-preemptive time-schedule by ordering the jobs by decreasing completion weights.

Consider a weight-schedule $\schedule$ with completion weights $C_1^w\ge \ldots\ge C_n^w$, and corresponding completion times $C_{1}\le \ldots \le C_{n}$. To simplify notation let $C^w_{n+1}=0$. Then we define the cost of $\schedule$ as $\sum_{j=1}^n (C_{j}^w-C_{j+1}^w) C_j$. It is easy to check, even from the 2D-Gantt chart, that this value equals $\sum_{j=1}^n x_j^{\schedule} C_j^w$, where $x_j^{\schedule}$ is the execution time of job $j$ (in time-space). Moreover, the last expression equals Equation \eqref{eq:obj-fct} if and only if the weight-schedule does not have any idle weight. In general, the cost of the weight-schedule can only overestimate the cost of the corresponding schedule in time space, given by \eqref{eq:obj-fct}.

On a machine of varying speed, the weight-schedule has a number of technical advantages.
For instance, while creating idle {\em time} can increase the cost arbitrarily, we can create idle {\em weight} without provoking an unbounded increase in the cost. 
This gives us flexibility in weight-space and implicitly a way to delay one or more jobs in the time-schedule without increasing the cost. More precisely, we have the following observation that can be seen in the 2D-Gantt chart.
\begin{observation} 
 \label{obs:ForwardCw}
 Consider a weight-schedule $\schedule$ with enough idle weight so that
decreasing the completion weight 
 of some job $j$, while leaving the rest untouched, yields a feasible weight-schedule. This operation does
not increase the cost of the weight-schedule~$\schedule$. Indeed, notice that job $j$ is the only job for which its completion \emph{time} might increase. However, this does not increase the cost of the weight-schedule since the extra cost is dominated by the area induced by the idle weight in the original schedule.
\end{observation}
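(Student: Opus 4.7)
The plan is to rewrite the weight-schedule cost as an integral along the weight axis, after which the observation follows in one line from the monotonicity of the oracle $f$ in~\eqref{eq:functionF}. First I would swap the order of summation and integration to obtain
\[
\text{cost}(\schedule)=\sum_i x_i^\schedule C_i^w=\int_0^{\infty}\sum_{i\,:\,C_i^w>w} x_i^\schedule\;dw.
\]
For each fixed weight level $w$, the inner sum collects the execution times of precisely the jobs processed first in the associated time-schedule, because that schedule orders jobs by decreasing completion weight. Hence the inner sum equals the machine time required to finish their combined work, namely $f(V(w))$ where $V(w):=\sum_{i\,:\,C_i^w>w} v_i$. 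This gives the clean identity $\text{cost}(\schedule)=\int_0^\infty f(V(w))\,dw$.

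Next I would compare $\schedule$ with the modified weight-schedule $\tilde\schedule$. Decreasing $C_j^w$ to $\tilde{C}_j^w$ alters $V$ only on the interval $[\tilde{C}_j^w,C_j^w)$: there $j$ is removed from the sum, so $\tilde V(w)=V(w)-v_j$, whereas $\tilde V=V$ elsewhere. Since $f$ is non-decreasing (larger work volumes require at least as much time), $f(\tilde V(w))\le f(V(w))$ pointwise, and integrating yields
\[
\text{cost}(\tilde\schedule)-\text{cost}(\schedule)=\int_{\tilde{C}_j^w}^{C_j^w}\bigl[f(V(w)-v_j)-f(V(w))\bigr]dw\;\le\;0.
\]

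I do not foresee a serious obstacle. The only delicate step is justifying the identification $\sum_{i:\,C_i^w>w} x_i^\schedule=f(V(w))$, which rests on the fact that the induced time-schedule processes the ``high'' jobs back-to-back in decreasing order of $C^w$. The geometric ``idle weight'' language in the observation corresponds in this analytic view to the statement that the gain equals the shrinkage $\int_{\tilde{C}_j^w}^{C_j^w}\bigl[f(V(w))-f(V(w)-v_j)\bigr]dw$ in the time-extent of the weight levels vacated when $j$ is moved down, which is the area charged to the idle weight that enabled the decrease to be feasible in the first place.
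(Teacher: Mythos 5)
Your proof is correct. The paper itself disposes of this observation with a one-line geometric charging argument in the 2D-Gantt chart: job $j$ is the only job whose completion \emph{time} can increase, and the extra area this creates is absorbed by the area of the idle weight that made the move feasible. You instead prove the identity $\mathrm{cost}(\schedule)=\int_0^\infty f(V^{\schedule}(w))\,dw$ and conclude from the pointwise inequality $V^{\tilde\schedule}\le V^{\schedule}$ together with the monotonicity of $f$. This is not a fundamentally different idea --- both arguments are area comparisons in the 2D-Gantt chart --- but your route is exactly the machinery the paper only introduces later, in Section~\ref{sec:light}, where the same identity and the same ``pointwise domination of the remaining volume function'' argument are used to prove Lemma~\ref{lm:AlgSmith}. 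What your version buys is uniformity and rigor: it handles in one stroke the fact that execution times of \emph{all} jobs between $j$'s old and new positions change (since the machine speed varies), which the paper's informal charging argument glosses over; the price is that you must justify the prefix identity $\sum_{i:\,C_i^w>w}x_i^{\schedule}=f(V^{\schedule}(w))$, which you correctly identify as the one delicate step (it holds because the jobs with completion weight exceeding $w$ form a prefix of the induced time-schedule, processed back-to-back from time~$0$, so the sum of their execution times telescopes to $f$ of their total volume).
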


Consider Figure~\ref{fig:ganttA} as an example. Here Job 2 fits in the idle weight between Jobs~4 and~5~(hatched area). A new solution obtained by moving Job 2 to this idle weight is shown in Figure~\ref{fig:ganttB}. This operation delays Job 2 in the time-schedule, while it schedules Jobs 3 and 4 earlier. However, the total cost of the weight-schedule, i.e., the area under the curve, decreases. 

\section{A PTAS for scheduling on a machine with given speeds}
\label{sec:PTAS-given-speeds}

In what follows we give a PTAS for minimizing $\sum_j w_jC_j$ on a machine with a given speed function. In order to gain structure, we start by applying several modifications to the instance and optimal solution. Consider $0 <\e < 1/2 $. First we round the weights of the jobs to the next integer power of $1+\e$, which increases the objective function by at most a factor $1+\e$.

Additionally, we discretize the weight-space in intervals that increase exponentially. That is, we consider intervals $I_u=[(1+\e)^{u-1},(1+\e)^u)$ for $u\in \{1,\ldots,\nu\}$ where $\nu:= \lceil \log_{1+\e}\sum_{j\in J} w_j\rceil$. We denote the length of each interval $I_u$ as $|I_u|:=\e (1+\e)^{u-1}$. 

We will apply two important procedures to modify weight-schedules. They are used to create idle weight so to apply Observation~\ref{obs:ForwardCw}, and they only increase the total cost by a factor $1+\OO(\e)$. Similar techniques, applied in time-space, were used by Afrati et al.~\cite{afrati_approximation_1999} for problems on constant-speed machines.

\begin{enumerate}
\item \textit{Weight Stretch:} We multiply by $1+\e$ the completion weight of each job. This creates an idle weight interval of length $\e w_j$ before the starting weight of job $j$. This operation increases the cost by a $1+\e$ factor.
\item \textit{Stretch Intervals:} 
We delay the completion weight of each job $j$ with $C_j^w\in I_u$ by $|I_u|$, so that $C_j^w$ belongs to $I_{u+1}$. 
Then 
$|I_{u+1}|-|I_{u}|=\e^2(1+\e)^{u-1}=\e|I_{u+1}|/(1+\e)$ units of weight are left idle in $I_{u+1}$ after the transformation, unless there was only one job completely covering $I_{u}$. By moving jobs within $I_{u+1}$, we can assume that this idle weight is consecutive.
 This transformation increases the cost by at most a factor $(1+\e)^2=1+\OO(\e)$. 
\end{enumerate}

\subsection{Dynamic program} 
\label{sec:DPPTAS}

We now show our dynamic programming (DP) approach to obtain a PTAS. We first describe a DP table with exponentially many entries and then discuss how to reduce its size. Recall that our schedules in time-space do not use idle time. Therefore we can uniquely describe a schedule by specifying a non-preemptive weight-schedule and ordering the jobs accordingly in the time-axis.


Consider a subset of jobs $S\subseteq J$ and a partial schedule of $S$ in the weight-space. In our dynamic program, $S$ will correspond to the set of jobs at the beginning of the weight-schedule, i.\,e., if $j\in S$ and $k\in J\setminus S$ then $C_j^w< C_k^w$. A partial weight-schedule $\schedule$ of jobs in $S$ implies a schedule in time-space with the following interpretation. Note that the makespan of the time-schedule is completely defined by the total work volume $\sum_j v_j$. We impose that the last job of the schedule, which corresponds to the first job in $\schedule$, finishes at the makespan. This uniquely determines a value of $C_j$ for each $j\in S$, and thus also its execution time $x_j^{\schedule}$. The total cost of this partial schedule is $\sum_{j\in S} x_j^{\schedule}C_j^w$.

 Consider $\mathcal{F}_u:=\{S\subseteq J: w(S)\le (1+\e)^u\}$. That is, a set $S\in \mathcal{F}_u$ is a potential set of jobs whose completion weight belongs to $I_{u'}$ with $u'\le u$. For a given interval $I_u$ and set $S\in \mathcal{F}_u$, we construct a table entry $T(u,S)$ with a $(1+\OO(\e))$-approximation to the optimal cost of a weight-schedule of $S$ subject to $C_j^w \le (1+\e)^{u}$ for all $j\in S$.
  
 Consider now $S\in \mathcal{F}_u$ and $S'\in \mathcal{F}_{u-1}$ with $S'\subseteq S$. Let $\schedule$ be a partial schedule of $S$ where the set of jobs with completion weight in $I_u$ is exactly $S\setminus S'$. We define $\APPROX_u(S',S)=(1+\e)^u\sum_{j\in S\setminus S'} x_j^{\schedule}$, which is a $(1+\e)$-approximation to  $\sum_{j\in S\setminus S'} x_j^{\schedule}C_j^w$, the partial cost associated to $S\setminus S'$. We remark that the values $\sum_{j\in S\setminus S'} x_j^{\schedule}$ and $\APPROX_u(S',S)$ do not depend on the whole schedule $\schedule$, but only on the total work volume of jobs in $S'$. 

 We can compute $T(u,S)$ with the following formula, 
 \[
   T(u,S) = \min\{ T(u-1,S') + \APPROX_u(S',S) :S'\in \mathcal{F}_{u-1}, S'\subseteq S\}.  
 \]

The set $\mathcal{F}_u$ can be of exponential size, and thus also this DP table. In the following we show that there is a polynomial size set $\tilde{\mathcal{F}}_u$ that yields $(1+\e)$-approximate solutions. We remark that the set $\tilde{\mathcal{F}}_u$ will not depend on the speed of the machine. Thus, the same set can be used in the speed-scaling scenario.

\subsection{Light jobs}
\label{sec:light}

We structure an instance by classifying jobs by their size in weight-space. This classification allows us to determine the schedule of part of the jobs greedily, which will help to define $\tilde{\mathcal{F}}_u$ properly.

\begin{definition}
  Given a schedule and a job $j$ with starting weight $S_j^w\in I_u$, we say that $j$ is \emph{light for $S_j^w$} if $w_j\le \e^2 |I_u|$. A job that is not light is \emph{heavy for $S_j^w$}.
\end{definition}

To simplify notation, we say that a job is light or heavy when the starting weight $S_j^w$ is clear from the context. 

Given a weight-schedule for heavy jobs, we give a greedy algorithm to schedule light jobs that increases the cost by a $1+\OO(\e)$ factor. Consider any weight-schedule $\schedule$. First, remove all light jobs. Then we move jobs within each interval $I_u$, such that the idle weight in $I_u$ is consecutive. Clearly, this can only increase the cost of the solution by a factor $1+\e$. Then, we apply a preemptive greedy algorithm to assign light jobs, namely, Smith's rule~\cite{Smith56}. More precisely, for each idle weight $w$ we process the job $j$ that maximizes $v_j/w_j$ among jobs that are not completely processed yet and $w_j\le \e^2 |I_u|$. (Here we give priority to jobs with smallest weight to work volume ratio, which is the opposite as to normal Smith's rule; intuitively, this is because in weight-space jobs are scheduled in reversed order as in time-space.) To remove preemptions, we apply the Stretch Interval subroutine\footnote{The Stretch Interval procedure also applies to preemptive settings by interpreting each piece of a job as an independent job.}, creating an idle weight interval in $I_u$ of length at least $\e|I_u|/(1+\e)\ge \e|I_u|/2\ge \e^2 |I_u|$ (since $\e\le1/2$). This gives enough space in each interval $I_u$ to completely process the (unique) preempted light job with starting weight in $I_u$. The algorithm returns this last schedule, called~$\schedule'$. 
Summarizing, the algorithm is as follows.

\begin{algorithmNB}{Smith in Weight-Space}
\noindent \textit{Input:} A weight-schedule $\schedule$.

 \begin{enumerate}
  \item Remove all light jobs in $\schedule$ and move the remaining jobs within each interval $I_u$, such that the idle weight in $I_u$ is consecutive. 
  \item \emph{Reverse Smith's rule}: For $u=1,\ldots,\nu$ and each idle weight $w\in I_u$, process at $w$ a job $j$ maximizing $v_{j}/w_{j}$ among all available jobs with $w_j\le \e^2 |I_u|$.
  \item Apply the Stretch Intervals subroutine.
  \item For each $u$ move the unique preempted light job  with starting weight in $I_u$ (if any) so that it is completely processed within $I_u$.
  \item Return the constructed schedule $\schedule'$.
 \end{enumerate}
\end{algorithmNB}

 We now show that the cost of the schedule $\schedule'$ returned by the algorithm is at most a factor of $1+\OO(\e)$ larger than the cost of $\schedule$. To do so we need a few definitions.

\begin{definition} Given a weight-schedule $\schedule$, its \emph{remaining volume function} is defined as 
\[V^{\schedule}(w) := \sum_{j: C_j^w \ge w} v_j.\]\jcom{Check whether $C_j^w>w$} 
\end{definition}

Consider now the function $f(v)$ corresponding to the earliest time by which the machine can have processed a work volume of $v$, i.e., the function defined in Equation~\eqref{eq:functionF}. 
 It is easy to see---even from the 2D-Gantt chart---that $\int_{0}^{\infty} f(V^{\schedule}(w))dw$ corresponds to the cost of the weight-schedule $\schedule$. Also, notice that $f(v)$ is non-decreasing, so that $V^{\schedule}(w)\le V^{\schedule'}(w)$ for all $w\ge0$ implies that the cost of $\schedule$ is at most the cost~of~$\schedule'$. 
 
\begin{definition} For a given $w$, let $I_j(w)$ be equal $1$ if the weight-schedule processes $j$ at weight $w$, and $0$ otherwise. Then, $\chi_j(w):=(1/w_j)\int_w^{\infty} I_j(w')dw'$ corresponds to the fraction of job $j$ processed after $w$. The \emph{fractional remaining volume function} of a weight-schedule $\schedule$ is defined as
\[
  V^{\schedule}_f(w) := \sum_{j: j \text{ is light}} \chi_j(w)\cdot v_j + \sum_{j:j \text{ is heavy},C_j^w \ge w } v_j \qquad \text{for all }w\ge0.
\]
\end{definition}
Intuitively, this function is similar to the (non-fractional) remaining volume function with the difference that it treats light jobs as ``liquid''. Also, notice that $V^{\schedule}_f(w)\le V^{\schedule}(w)$ for all~$w\ge0$. 

\begin{lemma}
 \label{lm:AlgSmith}
 Let $\schedule$ be a weight-schedule and $\schedule'$ be the output of Algorithm Smith in Weight-Space on input $\schedule$. Then the cost of $\schedule'$ is at most a factor $1+\OO(\e)$ larger than the cost of $\schedule$.
\end{lemma}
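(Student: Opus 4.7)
The plan is to follow the cost $\int_0^\infty f(V^{\cdot}(w))\,dw$ through the four steps of the algorithm and bound the multiplicative increase in each step; the target overall factor is $(1+\e)\cdot(1+\e)^2 = 1+\OO(\e)$. Write $\schedule_1,\ldots,\schedule_4=\schedule'$ for the intermediate weight-schedules after each of the four steps.

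Steps~1, 3 and 4 are the routine ones. In Step~1 shifting heavy jobs within each $I_u$ to consolidate idle weight only changes completion weights by at most $|I_u|\le \e\,C_j^w$, hence the cost grows by at most the stated $1+\e$; the light jobs are set aside here and re-inserted later. Step~3 is the already-analyzed Stretch Intervals subroutine, with factor $(1+\e)^2$; crucially it leaves, inside each interval, a consecutive idle chunk of length at least $\e|I_{u+1}|/(1+\e)\ge \e^2|I_u|$, which is at least as large as the weight of any light job with $S_j^w\in I_u$. Step~4 uses precisely this slack: we take the unique light job preempted across the boundary of $I_u$ and move it entirely inside the newly created idle weight of $I_{u+1}$. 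Since the idle weight in each $I_u$ is a single chunk after Step~1, Smith's rule can preempt at most one light job per interval, namely the last one scheduled in that chunk. Moving it back amounts to decreasing its completion weight into existing idle weight, so Observation~\ref{obs:ForwardCw} guarantees that the cost does not increase.

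The technical core is Step~2, where the fractional remaining volume function $V_f$ is exploited. I will show the pointwise inequality $V_f^{\schedule_2}(w)\le V_f^{\schedule}(w)$ for every $w\ge 0$. Both functions have the same contribution from heavy jobs, so it suffices to compare the two fractional assignments of the light jobs $L$ to the idle weight of $\schedule_1$. A standard exchange argument on two adjacent light pieces $j_1,j_2$ with $v_{j_1}/w_{j_1}>v_{j_2}/w_{j_2}$ shows that scheduling $j_1$ first makes $V_f$ decrease faster on the swapped interval and equal elsewhere; iterating this exchange yields that reverse Smith minimizes $V_f(\cdot)$ pointwise among all fractional packings of $L$ into the given idle weight, and in particular dominates the assignment used by $\schedule$. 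Combining with $V_f^{\schedule}\le V^{\schedule}$ pointwise and the monotonicity of $f$,
\[
\int_0^\infty f\bigl(V_f^{\schedule_2}(w)\bigr)\,dw \;\le\; \int_0^\infty f\bigl(V_f^{\schedule}(w)\bigr)\,dw \;\le\; \int_0^\infty f\bigl(V^{\schedule}(w)\bigr)\,dw,
\]
so the $V_f$-cost of the preemptive $\schedule_2$ is at most the original cost of $\schedule$ (up to the $1+\e$ of Step~1).

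The main obstacle I expect is cleanly bridging the preemptive and non-preemptive viewpoints. First, the exchange argument must survive the presence of heavy jobs that split the idle weight into chunks; this works because swapping two adjacent light pieces across a heavy block merely rigidly translates $V_f$ on the heavy block and preserves the rest of the function. Second, I must certify that the $V_f$-cost of $\schedule_2$ really upper-bounds the $V$-cost that we actually pay for $\schedule_4$: this holds because the transition $\schedule_2\to\schedule_3\to\schedule_4$ only reassembles light-job pieces and places each preempted job into freshly created idle weight, which by Observation~\ref{obs:ForwardCw} adds nothing beyond the $(1+\e)^2$ of Stretch Intervals. Multiplying the three factors $(1+\e)$, $(1+\e)^2$ and $1$ yields the claimed $1+\OO(\e)$ bound.
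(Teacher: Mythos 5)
Your proposal follows essentially the same route as the paper: introduce the fractional remaining volume $V_f$, show by exchange that Reverse Smith's rule minimizes $V_f$ pointwise among all fractional packings of the light jobs into the idle weight left by the heavy jobs, and then push this pointwise domination through Steps 3 and 4 before applying the monotone oracle function $f$ and integrating. The decomposition into per-step multiplicative losses and the identification of the exchange argument as the technical core both match the paper's proof.

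The one place where your argument is looser than it needs to be is exactly the bridge you flag as the main obstacle: relating the $V_f$-cost of the intermediate (preemptive) schedules to the true cost $\int_0^\infty f(V^{\schedule_4}(w))\,dw$ of the final non-preemptive schedule. Appealing to Observation~\ref{obs:ForwardCw} for the move $\schedule_3\to\schedule_4$ only compares the $V$-costs of $\schedule_3$ and $\schedule_4$; it does not by itself connect either of these to the chain of $V_f$-inequalities you established, because $V$ charges the \emph{full} volume of a light job at every weight below its completion weight while $V_f$ charges it only fractionally. The paper closes this with one additional pointwise inequality: since in $\schedule_4$ every light job is processed entirely inside a single interval $I_u$, one has $V^{\schedule_4}(w)\le V_f^{\schedule_4}\bigl((1+\e)^{-1}w\bigr)$ for all $w$, which after the change of variables costs exactly one more factor $(1+\e)$. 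With that line inserted your chain becomes
$V^{\schedule_4}((1+\e)^3 w)\le V_f^{\schedule_4}((1+\e)^2 w)\le V_f^{\schedule_3}((1+\e)^2 w)\le V_f^{\schedule_2}(w)\le V_f^{\schedule}(w)\le V^{\schedule}(w)$,
and the $1+\OO(\e)$ bound follows; your overall factor accounting still lands at $(1+\e)^{\OO(1)}$, so nothing in the conclusion is affected, but the fractional-to-integral conversion should be stated explicitly rather than absorbed into the Stretch-Intervals factor.
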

\begin{proof}
Let $\schedule_i$ be the schedule constructed after Step $i$ of the algorithm for each $i\in \{1,2,3,4\}$. In particular, $\schedule_1$ schedules only heavy jobs and $\schedule_4=\schedule'$. First we observe that for any given $w\ge0$, $V^{\schedule_2}_f(w)$ is a lower-bound on $V^{\hat{\schedule}}_f(w)$ for any schedule $\hat{\schedule}$ that coincides with $\schedule_2$ on the heavy jobs. This follows by a simple exchange argument, since the greedy Smith-type rule in Step 2 chooses the job that packs as much volume as possible in the available weight among all light jobs. We conclude that $V_f^{\schedule_2}(w)\le V_f^{\schedule}(w)$ for all $w$.

Observe that applying Stretch Intervals can delay any piece of a job by at most a factor $(1+\e)^2$. Therefore $V_f^{\schedule_3}(w)\le V_f^{\schedule_2}((1+\e)^{-2}w)$. Also, in Step 4 pieces of jobs are only moved backwards and thus $V_f^{\schedule_4}\le V_f^{\schedule_3}$. Finally, we notice that each light jobs in $\schedule_4$ is processed completely within an interval $I_u$, and thus $V^{\schedule_4}(w)\le V_f^{\schedule_4}((1+\e)^{-1}w)$. 

Combining all of our observations we obtain that 
\[
 V^{\schedule_4}((1+\e)^3w)\le V_f^{\schedule_4}((1+\e)^{2}w)\le V_f^{\schedule_3}((1+\e)^{2}w)\le V_f^{\schedule_2}(w)\le V_f^{\schedule}(w) \le V^{\schedule}(w)\quad \text{for all }w\ge0.
\]
Taking the function $f(\cdot)$ and integrating implies that
\[
 \int_{0}^{\infty} f(V^{\schedule_4}((1+\e)^{3}w))dw\le \int_{0}^{\infty} f(V^{\schedule}(w))dw.
\]
 Finally, the right hand side of this inequality is the cost of $\schedule$, and a simple change of variables implies that the left hand side is $(1+\e)^{-3}$ times the cost of $\schedule'=\schedule_4$. The lemma follows.
\end{proof}

The next corollary follows directly from our previous result. 
\begin{corollary}
  \label{cor:SmithOPT}
  At a loss of a factor $1+\OO(\e)$ in the objective function, we can assume the following. For a given interval $I_u$, consider any pair of jobs $j,k$ whose weights are at most $\e^2|I_u|$. If both jobs are processed in $I_u$ or later and $v_k/w_k\le v_{j}/w_j$, then $C_j^w\le C_k^w$.
\end{corollary}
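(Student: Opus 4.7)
The plan is to exhibit, for any weight-schedule $\schedule^*$, a modified weight-schedule $\schedule'$ whose cost exceeds that of $\schedule^*$ by at most a factor $1+\OO(\e)$ and that already satisfies the claimed Smith-type ordering. In particular, starting from the optimum (after the constant number of rounding steps from the beginning of Section~\ref{sec:PTAS-given-speeds}), I would simply run Algorithm \textsc{Smith in Weight-Space} on it and take $\schedule'$ to be its output. Lemma~\ref{lm:AlgSmith} immediately gives the cost guarantee; it only remains to verify the structural ordering in~$\schedule'$.

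So the bulk of the work is to check that the greedy rule in Step~2 is preserved through Steps~3--4. Fix an interval $I_u$ and two jobs $j,k$ with $w_j,w_k\le \e^2|I_u|$ that are both processed at weight~$\ge (1+\e)^{u-1}$, and suppose $v_k/w_k\le v_j/w_j$. Since $|I_{u'}|\ge |I_u|$ for all $u'\ge u$, both $j$ and $k$ remain eligible (``light'') at every idle weight in every interval $I_{u'}$ with $u'\ge u$ that the reverse Smith rule scans. Hence at the first idle weight $w^\star$ at which either of the two jobs is scheduled in Step~2, job $j$ is picked (since it has the strictly higher, or equal, ratio), and processing of $j$ is completed before processing of $k$ ever begins, up to at most one preemption per job. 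In particular, the fractional completion weight of $j$ in $\schedule_2$ is at most that of $k$.

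It remains to observe that Steps~3 and~4 do not reverse this order. Stretch Intervals shifts pieces of jobs uniformly upward by a multiplicative factor, hence preserves the relative weight-ordering of any two pieces. Finally, Step~4 ``re-assembles'' at most one preempted light job per interval $I_{u'}$ by sliding its two pieces together within the idle room created by Stretch Intervals; this changes the completion weight of that one job by at most $|I_{u'}|$, and after the slide both pieces lie in $I_{u'}$. A short case analysis then shows that the relative order of $C_j^w$ and $C_k^w$ coming out of Step~2 is retained: in every interval at most one job is affected, and that job's completion weight still lies in the interval in which its piece with the larger weight originally lived.

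The one mild obstacle is this last bookkeeping: one has to rule out that reassembling the unique preempted light job in some interval $I_{u'}$ pushes its completion weight past that of the other job. This is handled by noting that the idle weight created by Stretch Intervals in $I_{u'}$ has length $\ge \e|I_{u'}|/(1+\e)\ge \e^2|I_{u'}|$, which already exceeds $w_j$ and $w_k$, so the preempted job can be reassembled \emph{in place} inside $I_{u'}$ without crossing the other. Therefore $\schedule'$ satisfies the stated inequality $C_j^w\le C_k^w$, and combined with the $(1+\OO(\e))$-bound from Lemma~\ref{lm:AlgSmith} the corollary follows.
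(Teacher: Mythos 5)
Your proposal is correct and follows essentially the same route as the paper, which simply derives the corollary by running Algorithm \textsc{Smith in Weight-Space} on a near-optimal weight-schedule and invoking Lemma~\ref{lm:AlgSmith} for the $1+\OO(\e)$ cost bound. Your additional bookkeeping that Steps~3--4 preserve the Reverse Smith order of light jobs (and that the reassembled preempted job fits in the idle weight of length $\ge\e^2|I_{u'}|$) is exactly the detail the paper leaves implicit, and it checks out.
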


\subsection{Localization}
\label{sec:localization}

The objective of this section is to compute, for each job $j\in J$, two values $r_j^w$ and $d_j^w$ so that job $j$ is scheduled completely within $[r_j^w,d_j^w)$ in some $(1+\OO(\e))$-approximate weight-schedule.  We call $r_j^w$ and $d_j^w$ the \emph{release-weight} and \emph{deadline-weight} of job $j$, respectively. Crucially, we need that the length of the interval $[r_j^w,d_j^w)$ is not too large, namely that $d_j\in \OO(\text{poly}(1/\e) r_j)$. Such values can be obtained by using Corollary~\ref{cor:SmithOPT} and techniques from~\cite{afrati_approximation_1999}. The release- and deadline-weights will help us finding a compact set $\tilde{\mathcal{F}}_u$.

We consider an initial value for $r_j^w$ and then increase its value iteratively. We will restrict ourselves to values of $r_j^w$ that are integer powers of $1+\e$. Consider an arbitrary weight-schedule. Recall that for a job with completion weight $C_j^w$, the Weight Stretch subroutine increases the completion weight $(1+\e)C_j^w$ and hence the starting weight to $S_j^w = \e C_j^w$. Applying the procedure twice we get a solution that satisfies $S_j^w\ge \e (1+\e)C_j^w\ge \e (1+\e)w_j$. Thus, we can safely define $r_j^w$ as $\e w_j$ rounded up to an integer power of $1+\e$.

We now show how to adapt techniques from~\cite{afrati_approximation_1999} used for time-schedules. Let $J_u$ be the set of all jobs with $r_j^w$ equal to $(1+\e)^{u-1}$. We partition $J_u$ into light and heavy jobs, depending if their weight is smaller or larger than $\e^2 |I_u|$. 
Note that a heavy job in $J_u$ can have weights $w$ with $\e^2 |I_u| < w \le 1/\e (1+\e)^{u-1}$, where the last inequality follows since $r_j^w\ge \e w_j$. Therefore, since we are assuming that the weights of jobs are integer powers of $1+\e$, for a fixed $u$ we only need to consider heavy jobs with weights

\[w\in \Omega_u:=\left\lbrace(1+\e)^i: \e^2 |I_u| < (1+\e)^i\le \frac{(1+\e)^{u-1}}{\e}, \text{ where }i\in \mathbb{Z} \right\rbrace.\] 

Crucially, note that  $|\Omega_u|\in \OO(\log_{1+\e} 1/\e)\subseteq \OO(1/\e\cdot \log 1/\e )$. Based on this we give the following decomposition of the set of jobs with a given release-weight.
\begin{definition}
 Given release-weights for each job, we define $J_u=\{j: r_j^w=(1+\e)^{u-1}\}$. Additionally, we decompose $J_u$ into a set of light jobs $L_u:=\{j \in J_u: w_j\le \e^2 |I_u|\}$, and sets $H_{u,w}=\{j\in J_u\,:\,w_j=w\}$ of heavy jobs of weight $w$ for each $w\in \Omega_u$.
\end{definition}

Now we consider all jobs in $L_u$. If $w(L_u)$ is larger than $(1+\e^2)|I_u|$ then some jobs in $L_u$ will have to start in $I_{u+1}$ or later. By Corollary~\ref{cor:SmithOPT} we can choose the set of possible jobs with starting weight in $I_u$ greedily, and increase the release-weight of the rest. Similarly, since the weight of each job in $H_{u,w}$ is the same, we can always give priority to jobs with the largest work volume. With this idea we can show the following lemma.

\begin{lemma}
\label{lm:boundRelease}
 We can compute in polynomial time release-weights $r_j^w$ for each job $j$ such that there exists a $(1+\OO(\e))$-approximate weight-schedule respecting the release-weights and for any interval~$I_u$ we have that $w(J_u) \in \OO(1/\e^3\cdot \log 1/\e \cdot|I_u|).$ And this weight-schedule satisfies the property of Corollary~\ref{cor:SmithOPT}.
\end{lemma}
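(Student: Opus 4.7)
The plan is a single left-to-right sweep over the intervals. Initialize $r_j^w:=(1+\e)^{\lceil \log_{1+\e}(\e w_j)\rceil}$, i.e., $\e w_j$ rounded up to a power of $1+\e$; the two applications of Weight Stretch recalled just before the lemma guarantee that in some $(1+\OO(\e))$-approximate weight-schedule every job has $S_j^w\ge\e w_j$, so this initial value is a valid lower bound. Then for $u=1,2,\ldots,\nu$ in order, inspect the jobs currently lying in $J_u$ (possibly including ones pushed in from earlier iterations) and promote some of them to $J_{u+1}$ by updating $r_j^w$ to $(1+\e)^u$. Since release-weights are only increased, the procedure terminates in polynomial time.

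The promotion policy within a fixed $u$ treats $L_u$ and each $H_{u,w}$ (for $w\in\Omega_u$) separately, guided by Corollary~\ref{cor:SmithOPT}. For $L_u$, sort by decreasing $v_j/w_j$ and keep the minimal prefix $P$ with $w(P)\ge|I_u|$, bumping the remainder; since the last job of $P$ has weight at most $\e^2|I_u|$, $w(P)<(1+\e^2)|I_u|$. For each $H_{u,w}$, sort by decreasing $v_j$ and keep the first $\lceil|I_u|/w\rceil$ jobs; the retained weight is at most $|I_u|+w\le|I_u|(1+1/\e^2)=\OO(|I_u|/\e^2)$, using $w\le(1+\e)^{u-1}/\e$ from the definition of $\Omega_u$. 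Adding the light contribution and summing over the $|\Omega_u|=\OO(\log(1/\e)/\e)$ heavy classes yields $w(J_u)\in\OO(|I_u|\log(1/\e)/\e^3)$, the bound claimed.

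It remains to exhibit a $(1+\OO(\e))$-approximate weight-schedule satisfying the new release-weights together with the property of Corollary~\ref{cor:SmithOPT}. Start from some weight-schedule $\schedule^*$ with both properties in force before the sweep (existence by Lemma~\ref{lm:AlgSmith} and Corollary~\ref{cor:SmithOPT}) and verify inductively that no bumped job $j$ violates $C_j^{w,*}\ge(1+\e)^u$. For a bumped light $j\in L_u\setminus P$: Corollary~\ref{cor:SmithOPT} orders $L_u$ (all scheduled in $I_u$ or later, since $S_j^w\ge(1+\e)^{u-1}$) in weight-space by decreasing $v/w$, so the subset $A\subseteq L_u$ with $C_{j'}^{w,*}\in I_u$ is a prefix of that order; capacity of $I_u$ forces $w(A)\le|I_u|$, hence $A\subsetneq P$, so $j\notin A$ and $C_j^{w,*}\ge(1+\e)^u$. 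For a bumped heavy job of weight $w$, the corresponding statement rests on an exchange argument: capacity bounds the number of $H_{u,w}$-jobs completing in $I_u$ by $\lceil|I_u|/w\rceil$, and because all these jobs share the weight $w$, one may assume by swapping that they are the $\lceil|I_u|/w\rceil$ with the largest $v_j$.

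The hard part will be this last exchange for heavy jobs, since Corollary~\ref{cor:SmithOPT} is stated only for the light case. I plan to verify it by analyzing the weight-schedule cost $\sum_j x_j C_j^w$ and showing that, for two adjacent jobs $j,k\in H_{u,w}$ with $v_j>v_k$ yet $C_j^{w,*}>C_k^{w,*}$, swapping their weight-space positions yields a feasible weight-schedule (the weights agree, so no capacity changes) whose cost is not larger after an auxiliary Weight Stretch absorbs the mismatch introduced on the time-axis. Iterating such swaps inside every $H_{u,w}$ produces the desired sorted schedule at a total cost overhead of $(1+\OO(\e))$, completing the argument.
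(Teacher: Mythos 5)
Your construction is the paper's: initialize $r_j^w$ to $\e w_j$ rounded up to a power of $1+\e$, sweep the intervals left to right, trim $L_u$ to a prefix of the decreasing-$v_j/w_j$ order and each $H_{u,w}$ to its largest-volume members, and count using $|\Omega_u|\in\OO(\log(1/\e)/\e)$. Two steps of your justification need repair, however. For bumped light jobs you establish $C_j^{w,*}\ge(1+\e)^u$, but respecting the release-weight $(1+\e)^u$ means $S_j^{w,*}\ge(1+\e)^u$, and a job may straddle the right endpoint of $I_u$; arguing via completion weights proves the wrong inequality. The statement you need does hold for your $P$, but via starting weights: the set $B$ of $L_u$-jobs with \emph{starting} weight in $I_u$ is a prefix of the order by Corollary~\ref{cor:SmithOPT}, all but its last member complete before that last member starts inside $I_u$, so $w(B\setminus\{\text{last}\})<|I_u|$ and hence $B\subseteq P$ by minimality of $P$. (The paper instead keeps the maximal prefix of weight at most $(1+\e^2)|I_u|$ and observes directly that the min-ratio job cannot start inside $I_u$; either threshold works.)

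The heavy-job exchange, as you describe it, would fail: paying a Weight Stretch per adjacent swap accumulates a factor $(1+\e)^{\#\text{swaps}}$, and a single class $H_{u,w}$ can already require $\Omega(1/\e^4)$ swaps, so the total overhead is not $1+\OO(\e)$. No stretch is needed, because the swap is exactly cost-non-increasing. If two weight-$w$ jobs are adjacent in both weight- and time-space, the combined time slot $[t,t+T]$ and the two completion weights $C$ and $C-w$ are invariant under the swap, and their contribution to $\sum_j x_jC_j^w$ equals $T(C-w)+x\cdot w$, where $x$ is the execution time of whichever job runs first in time, i.e., the time needed to process its volume starting at $t$. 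Since $f$ is nondecreasing, $x$ is minimized by placing the smaller-volume job first in time (larger completion weight), which is exactly the order your trimming assumes; this is the paper's ``simple interchange argument'' and it costs nothing.
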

\begin{proof}
 Initialize $r_j^w$ as $\e w_j$ rounded up to an integer power of $(1+\e)$ and let $J_u, L_u$ and $H_{u,w}$ be defined as above. By Corollary~\ref{cor:SmithOPT} we know that within an interval $I_u$ we can order light jobs and process first the job with largest $v_j/w_j$ ratio. Thus, if the total weight of jobs in $L_u$ is larger than $(1+\e^2)|I_u|$ we increase the release-weight of a job $j^*\in \arg\min_{j\in L_u} v_j/w_j$ to $(1+\e)^{u}$. Note that after doing this $j^*$ does not belong to $L_u$ anymore. We iterate this procedure until $w(L_u)\le (1+\e^2)|I_u|$.

 We do a similar technique for jobs in $H_{u,w}$. If $w(H_{u,w})> |I_u|+w$ and $|H_{u,w}|$ contains more than one job, then we can delay the release-weight of a job $j^*\in H_{u,w}$ with smallest $v_j$. This follows by a simple interchange argument, since if there are two jobs with the same weight then the one with smallest work has smaller (larger) completion time (weight) in an optimal solution. After modifying the release date of $j^*$ this job does not belong to $H_{u,w}$ anymore.

 This way we obtain a set $H_{u,w}$ with
 \[
  w(H_{u,w})\le |I_u| + w\le |I_u| + \frac{1}{\e} (1+\e)^{u-1} \in \OO(1/\e^2)\cdot |I_u|.
 \]
 We execute the two procedures described above for each $u=0,\ldots,\nu$ where $\nu=\lceil \log_{1+\e} \sum_{j\in J} w_j \rceil$ until the following property holds: for all $u\in \{0,\ldots,\nu\}$ and $w\in \Omega_u$ we have that $w(L_u)\le (1+\e^2)|I_u|$ and $w(H_{u,w})\in \OO(1/\e^2)\cdot |I_u|$. The result follows since $|\Omega_u|\in \OO(1/\e \cdot \log 1/\e )$. 
 \end{proof}

 We use the previous lemma to define the deadline-weights by using the following idea. 
 For~$s$ large enough (but constant), Stretch Intervals creates enough idle weight in $I_{u+s}$ to fit all jobs released at $(1+\e)^u$ that have not yet finished by $(1+\e)^{u+s+1}$. This allows us to apply Observation~\ref{obs:ForwardCw}. 

\begin{lemma} 
\label{lm:localization}
We can compute in polynomial time values $r_j^w$ and $d_j^w$ for each $j\in J$ such that: (i) there exists a $(1+\OO(\e))$-approximate weight-schedule that processes each job $j$ within $[r_j^w,d_j^w)$, (ii) there exists a constant $s\in \OO(\log(1/\e)/\e)$ such that $d_j^w\le r_j^w\cdot (1+\e)^s$, (iii) $r_j^w$ and $d_j^w$ are integer powers of $(1+\e)$, (iv) within each $L_u$ jobs are processed following Reverse Smith's rule, and (v) the values $r_j^w$ an $d_j^w$ are independent of the speed of the machine.
\end{lemma}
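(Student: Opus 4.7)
The plan is to take the release-weights $r_j^w$ directly from Lemma~\ref{lm:boundRelease} and then combine the Stretch Intervals subroutine with Observation~\ref{obs:ForwardCw} to install deadline-weights $d_j^w$ at only a $(1+\OO(\e))$ overhead. Lemma~\ref{lm:boundRelease} already furnishes a weight-schedule $\schedule$ that respects the release-weights and satisfies Corollary~\ref{cor:SmithOPT}, and since its construction depends only on the weights and work volumes rather than on the speed function, properties (iii), (iv) and (v) are in place for the release component from the start.

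Next I fix the constant $s$. Using the bound $w(J_u)\in\OO((1/\e^3)\log(1/\e)\cdot|I_u|)$ together with the fact that Stretch Intervals creates $\e|I_v|/(1+\e)$ idle weight in every interval $I_v$, I pick $s$ as the smallest integer for which $\e|I_{u+s}|/(1+\e)\ge w(J_u)$ holds for all $u$. Since $|I_{u+s}|/|I_u|=(1+\e)^s$, a short calculation yields $s\in\OO(\log(1/\e)/\e)$. I then declare $d_j^w$ to be the upper endpoint of $I_{u+s}$ whenever $r_j^w=(1+\e)^{u-1}$: this makes $d_j^w$ an integer power of $1+\e$ and gives $d_j^w\le r_j^w(1+\e)^{s+1}$, so after absorbing the constant additive shift into $s$, properties (ii), (iii) and (v) follow immediately.

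For existence in (i) I apply Stretch Intervals to $\schedule$, paying a factor $(1+\e)^2=1+\OO(\e)$ and producing idle weight of length $\e|I_v|/(1+\e)$ at the late end of every $I_v$. For each $u$ I then collect the jobs $j\in J_u$ whose completion weight still exceeds $d_j^w$ and move each of them backward into the idle region of $I_{u+s}$; by the choice of $s$ the total weight to be reassigned is at most $w(J_u)$ and therefore fits. Observation~\ref{obs:ForwardCw} applied to each such move (each only decreases one completion weight while leaving everything else untouched) shows that the cost of the weight-schedule does not go up, so the resulting schedule is $(1+\OO(\e))$-approximate and respects both the release- and the deadline-weights.

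The point I expect to require the most care is preserving (iv) through these modifications. Stretch Intervals performs a uniform shift inside each interval, so it preserves the relative order of completion weights everywhere, and in particular inside every $L_u$. Because $\schedule$ originally satisfies Corollary~\ref{cor:SmithOPT}, the jobs of $L_u$ whose completion weight exceeds $d_j^w$ are precisely the ones with the smallest $v_j/w_j$ ratios, i.e., those that should appear last in reverse-Smith order. Inserting them into the late-end idle weight of $I_{u+s}$ in reverse-Smith order keeps them strictly after every non-shifted job of $L_u$, so Corollary~\ref{cor:SmithOPT} and hence (iv) are preserved. Verifying that the simultaneous moves for different values of $u$ do not interfere is then straightforward, since each $u$ uses the idle weight of a distinct interval $I_{u+s}$.
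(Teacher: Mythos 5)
Your proposal follows the paper's proof almost step for step: the same release-weights from Lemma~\ref{lm:boundRelease}, the same choice of $s$ so that $w(J_u)$ fits into the idle weight that Stretch Intervals creates $s$ intervals later, the same use of Observation~\ref{obs:ForwardCw} to pull the leftover jobs of $J_u$ back into that idle region, and the same (re-indexed) definition $d_j^w=r_j^w(1+\e)^{s}$, with the order-preservation argument for property (iv) spelled out a bit more explicitly than in the paper.

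There is, however, one genuine gap. You assert that Stretch Intervals produces idle weight of length $\e|I_v|/(1+\e)$ at the late end of \emph{every} interval $I_v$. This is exactly what the procedure does \emph{not} guarantee when a single job $k$ completely covers $I_v$ (the paper's description of Stretch Intervals explicitly excepts this case): a non-preemptive job spanning the whole interval cannot be split to make room inside it. In that situation your plan of moving the stragglers of $J_u$ ``into the idle region of $I_{u+s}$'' has no idle region to target, and the argument for property (i) breaks down. The paper handles this with a separate case: when one job $k$ covers $I_{u+s-1}$, the slack that the stretch would have created there instead sits just before $S_k^w$, and the displaced jobs of $J_u$ are inserted at that point, which still respects their deadline-weight. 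You need to add this case analysis (or show the case cannot arise, which it can, e.g.\ for a job of very large weight). The remaining ingredients --- the bound $s\in\OO(\log(1/\e)/\e)$, the non-interference of the moves across different values of $u$, and the preservation of Reverse Smith's rule within each $L_u$ --- are correct and match the paper.
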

\begin{proof}
Consider the release-weights given by the previous lemma and consider the associated sets $J_u$ for each $u$. Then, since $w(J_u) \in \OO(1/\e^3\cdot \log 1/\e \cdot|I_u|)$, there exists an integer $s\in \OO(\log_{1+\e} (1/\e^4\cdot \log 1/\e))\subseteq \OO(\log(1/\e)/\e)$ such that $w(J_u)\le \e |I_{u+s-1}|/(1+\e)$.

 Consider now the $(1+\OO(\e))$-approximate solution obtained from the previous lemma (which, by construction, also satisfies the property of Corollary~\ref{cor:SmithOPT}). By construction of $r_j^w$, we can assume that the starting weight of $j$ in this schedule is at least $r_j^w$. Now we apply Stretch Intervals. This creates $\e |I_{u+s-1}|/(1+\e)$ idle weight in interval $I_{u+s-1}$, unless there was one job completely covering $I_{u+s-1}$. If that is not the case, then we can move all jobs in $J_u$ with starting weight in $I_{u+s}$ or larger to be completely processed inside $I_{u+s-1}$. By Observation~\ref{obs:ForwardCw}, doing this can only increase the objective function by a $1+\OO(\e)$ factor. Similarly, if there was a job $k$ completely covering $I_{u+s-1}$, then the idle weight that $I_{u+s-1}$ should have contained can be considered to be just before the starting weight of $k$. In this case we can move all jobs in $J_u$ that were being processed after $I_{u+s-1}$ to just before~$S_k^w$. 

 In either case we constructed a solution where each job in $J_u$ is completely processed in $[(1+\e)^{u-1}, (1+\e)^{u+s-1})$. Properties (i)-(iii) in the lemma follows by defining $d_j^w = (1+\e)^{u+s-1}=r_j^w (1+\e)^s$ for each job ~$j\in J_u$. Also property (iv) follows since our original schedule satisfies the property of Corollary~\ref{cor:SmithOPT} and our modification does not change the relative order of jobs in $J_u$. Finally (v) follows since while defining $r_j^w$ and $d_j^w$ we never used the speed of the machine.
\end{proof}

\subsection{Compact Search Space}
\label{sec:search}


Given the job classification and localization in the previous subsections, we are now ready to reduce the running time of the dynamic program in Section~\ref{sec:DPPTAS} to polynomial time. To that end, recall the definition of families of job sets $\mathcal{F}_u$. 
We will define a polynomial-size version of it,~$\tilde{\mathcal{F}}_u$.
Instead of describing a set $S\in \tilde{\mathcal{F}}_u$, we describe $R=J\setminus S$, that is, the jobs with completion weights in $I_{u+1}$ or later. 
 That is, we define a set $\mathcal{D}_u$ that will contain the complements of sets in $\tilde{\mathcal{F}}_u$.
 In order to define $\mathcal{D}_u$ we use the release- and deadline-weights given by Lemma~\ref{lm:localization}. If $R\in \mathcal{D}_u$, then $R$ must contain all jobs $j\in \overline{R}:=\{k\in J: r_k^w\ge (1+\e)^u\}$. 
\begin{observation}
\label{obs:Rdecomp0}
Each set $R\in \mathcal{D}_u$ is of the form $R'\cup \overline{R}$, where every job $j\in R'$ has $r_j^w\le (1+\e)^{u-1}$. 
\end{observation}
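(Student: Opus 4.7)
My plan is to derive the observation directly by unpacking two earlier definitions. First I would take an arbitrary $R \in \mathcal{D}_u$ and invoke the requirement, built into the specification of $\mathcal{D}_u$ just before the observation, that $R$ must contain every job in $\overline{R} = \{k \in J : r_k^w \ge (1+\e)^u\}$. This containment lets me define $R' := R \setminus \overline{R}$ and obtain $R = R' \cup \overline{R}$ as a disjoint union, which is exactly the structural decomposition claimed in the statement.

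Next I would verify the bound on $r_j^w$ for jobs $j \in R'$. By construction $j \notin \overline{R}$, hence $r_j^w < (1+\e)^u$. To promote this strict inequality to $r_j^w \le (1+\e)^{u-1}$, I would appeal to property (iii) of Lemma~\ref{lm:localization}, which guarantees that every release-weight is an integer power of $1+\e$. Since no integer power of $1+\e$ lies strictly between $(1+\e)^{u-1}$ and $(1+\e)^u$, the desired bound follows at once.

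I do not anticipate any real obstacle here: the observation is essentially a tautology obtained by combining the defining condition for $\mathcal{D}_u$ with the discreteness of the grid on which the release-weights live, both of which are established earlier. The only point worth emphasizing in writing is that the discreteness from Lemma~\ref{lm:localization}(iii) is what turns a strict inequality into the sharp bound $(1+\e)^{u-1}$.
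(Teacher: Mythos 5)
Your argument is correct and matches what the paper intends: the observation is stated without proof precisely because it follows immediately from the defining requirement $\overline{R}\subseteq R$ for members of $\mathcal{D}_u$ together with the fact that release-weights are integer powers of $1+\e$ (Lemma~\ref{lm:localization}(iii)), which is exactly the two-step unpacking you give. Your explicit note that discreteness upgrades the strict inequality $r_j^w<(1+\e)^u$ to $r_j^w\le(1+\e)^{u-1}$ is the only nontrivial point, and you handle it correctly.
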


Thus we only need to describe all possibilities for $R'$. For a job $j\in R'$ we can assume that $d_j^w \ge (1+\e)^{u+1}$. Therefore, by Lemma~\ref{lm:localization}, we have that $r_j^w\ge (1+\e)^{u+1-s}$, where $s\in \OO(\log(1/\e)/\e)$. 

\begin{observation}
\label{obs:Rdecomp}
 Each set $R=R'\cup \overline{R}\in \mathcal{D}_u$ is of the form $ \left(\displaystyle{ \bigcup_{v=u+2-s}^{u} R'_v}\right)\cup \overline{R}$, where $R'_v:=\{j\in R': r_j^w=(1+\e)^{v-1}\}$.
\end{observation}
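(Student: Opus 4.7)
The plan is to combine Observation \ref{obs:Rdecomp0} with the localization guarantee of Lemma \ref{lm:localization} to restrict the admissible release-weights of jobs in $R'$. Observation \ref{obs:Rdecomp0} already gives the outer decomposition $R = R' \cup \overline{R}$ with $r_j^w \le (1+\e)^{u-1}$ for each $j \in R'$, so what remains is to show that $R'$ itself is partitioned by the value of $r_j^w$ taking one of the values $(1+\e)^{v-1}$ with $v \in \{u+2-s,\ldots,u\}$.

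To do that, I would fix an arbitrary $j \in R'$ and derive a matching lower bound on $r_j^w$. Since $R$ is the complement of a set $S \in \tilde{\mathcal{F}}_u$, whose members have completion weight at most $(1+\e)^u$, every $j \in R'$ has $C_j^w > (1+\e)^u$ in the underlying $(1+\OO(\e))$-approximate schedule. Lemma \ref{lm:localization} guarantees that this schedule processes $j$ entirely within $[r_j^w, d_j^w)$, so $d_j^w > (1+\e)^u$; since $d_j^w$ is an integer power of $(1+\e)$ by Lemma \ref{lm:localization}(iii), this forces $d_j^w \ge (1+\e)^{u+1}$. Invoking Lemma \ref{lm:localization}(ii), which provides $d_j^w \le r_j^w (1+\e)^s$, we obtain $r_j^w \ge (1+\e)^{u+1-s}$.

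Writing $r_j^w = (1+\e)^{v-1}$, which is well-defined by Lemma \ref{lm:localization}(iii), the two bounds $(1+\e)^{u+1-s} \le r_j^w \le (1+\e)^{u-1}$ translate to $v \in \{u+2-s,\ldots,u\}$, so that $R'$ is the disjoint union of the sets $R'_v$ over these $v$, which is precisely the asserted decomposition. I do not anticipate any real obstacle: the whole argument is an index-shift computation driven by Lemma \ref{lm:localization}. The only point that deserves care is the passage from $d_j^w > (1+\e)^u$ to $d_j^w \ge (1+\e)^{u+1}$, which is where property (iii) of the localization lemma (deadline-weights are integer powers of $1+\e$) is essential.
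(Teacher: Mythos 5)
Your argument is correct and is essentially the paper's own: the text immediately preceding the observation asserts $d_j^w \ge (1+\e)^{u+1}$ for $j\in R'$ and then applies Lemma~\ref{lm:localization}(ii) to get $r_j^w \ge (1+\e)^{u+1-s}$, which combined with Observation~\ref{obs:Rdecomp0} gives exactly your index range $v\in\{u+2-s,\ldots,u\}$. You merely make explicit the (correct) justification that $d_j^w>(1+\e)^u$ together with property (iii) forces $d_j^w\ge(1+\e)^{u+1}$.
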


Then, we aim to find a collection of subsets that can play the role of $R'_v$. If the size of this collection is at most a polynomial number $k$, we could conclude that $|\mathcal{D}_u|\le k^{s-1}= k^{\OO(\log(1/\e)/\e)}$.

In order to do so, recall that $J_v$ denotes the set of all jobs with release-weights equal to $(1+\e)^{v-1}$, and that we can write $J_v=L_v \cup (\bigcup_w H_{v,w})$ where $w\in\Omega_v$ and $|\Omega_v|\in \OO(\log_{1+\e}1/\e)$. Thus, defining $R'_{v,w}:= R'_v\cap H_{v,w}$ we can further decompose $R'_v$ as $(R'_v\cap L_v)\cup(\bigcup_w R'_{v,w})$. Now notice that $R'_{v,w}$ is a subset of $H_{v,w}$ which, as seen in the proof of the next observation, has a very simple structure.

\begin{observation}
\label{obs:NumberHeavy}
 Without loss of generality, we can restrict ourselves to consider sets $R'_{v,w}$ among $\OO(1/\e^2)$ distinct options.
\end{observation}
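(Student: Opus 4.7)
The plan is to first bound the cardinality $|H_{v,w}|$ and then use an interchange argument to show that, in some $(1+\OO(\e))$-approximate weight-schedule, the set $R'_{v,w}$ must be a prefix of a fixed ordering of $H_{v,w}$, reducing the number of distinct options to something linear in $|H_{v,w}|$.

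First I would bound the cardinality. From the proof of Lemma~\ref{lm:boundRelease} we already have $w(H_{v,w})\le |I_v|+w$. Since every job in $H_{v,w}$ has weight exactly $w$, this gives
\[
|H_{v,w}| \;=\; \frac{w(H_{v,w})}{w} \;\le\; \frac{|I_v|}{w}+1.
\]
By definition jobs in $H_{v,w}$ are heavy for starting weights in $I_v$, so $w>\e^2|I_v|$. Hence $|H_{v,w}|\le 1/\e^2+1\in\OO(1/\e^2)$.

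Next I would use the same exchange argument that was invoked in the proof of Lemma~\ref{lm:boundRelease}: among two jobs of equal weight, the one with smaller work volume can be assumed to have the larger completion weight in any $(1+\OO(\e))$-approximate solution. Since all jobs in $H_{v,w}$ share the weight $w$ and the release-weight $(1+\e)^{v-1}$, they are freely interchangeable (neither the release-weight constraint of Lemma~\ref{lm:localization} nor the cost is affected by swapping two of them). Therefore, if we sort the elements of $H_{v,w}$ by \emph{increasing} work volume, we may assume that completion weights along this ordering are non-increasing. Consequently the set of jobs of $H_{v,w}$ whose completion weight lies in $I_{u+1}$ or later — which is precisely $R'_{v,w}$ — must be a prefix of this ordering.

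The number of prefixes of a set of size $|H_{v,w}|$ is $|H_{v,w}|+1$, and by the first step this is $\OO(1/\e^2)$, which establishes the observation. The only delicate point I expect is ensuring the interchange is neutral with respect to the localization constraints imposed earlier: this is immediate here because the two jobs being swapped share both $r_j^w$ and $w_j$, so their exchange preserves release- and deadline-weights as well as the light/heavy classification used throughout Section~\ref{sec:localization}.
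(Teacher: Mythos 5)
Your proof is correct and follows essentially the same route as the paper's: bound $|H_{v,w}|\le 1+|I_v|/w\in\OO(1/\e^2)$ via $w(H_{v,w})\le |I_v|+w$ and the heaviness threshold, then use the equal-weight/equal-release interchange to fix the order by work volume so that $R'_{v,w}$ is one of at most $|H_{v,w}|+1$ prefixes. The extra remark about the swap preserving the localization constraints is a harmless (and correct) refinement of the same argument.
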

\begin{proof}
 Let $w\in \Omega_v$. Each job in $H_{v,w}$ has weight $w$ and, as seen in the proof of Lemma~\ref{lm:boundRelease}, we have that $w(H_{v,w}) \le |I_v|+w$. Thus $H_{v,w}$ contains at most $1+|I_v|/w$ many jobs. Since by definition of $H_{v,w}$ we have that $w\ge \e^2 |I_v|$, we obtain that $|H_{v,w}|\in  \OO(1/\e^2)$. Moreover, all jobs in $H_{v,w}$ has the same weight $w$ and the same release-weight. Therefore, we know that these jobs are ordered by their work volume in an optimal solution. Thus, we can restrict ourselves to sets $R'_{v,w}$ that respect this order. The observation follows since there are at most $|H_{v,w}|+1\in \OO(1/\e^2)$ many sets that respect this order. 
\end{proof}

Given $v$, the index $w$ ranges over $|\Omega_v|\in\OO(\log(1/\e)/\e)$ many values. Thus the following holds.

\begin{observation}
 \label{obs:NumberHeavyAll}
 For each $v$ the set $\bigcup_{w} R'_{v,w}$ can be chosen over $(1/\e^2)^{\OO(\log(1/\e)/\e)} =2^{\OO(\log(1/\e)^2/\e)}$ many alternatives.
\end{observation}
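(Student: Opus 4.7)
The plan is to combine Observation~\ref{obs:NumberHeavy} with the bound on $|\Omega_v|$ via an independent-choices counting argument. The key preliminary observation is that the sets $\{H_{v,w}\}_{w \in \Omega_v}$ partition the heavy jobs of $J_v$ by weight, since by definition $H_{v,w} = \{j \in J_v : w_j = w\}$. Consequently, for different values $w, w' \in \Omega_v$, any admissible $R'_{v,w} \subseteq H_{v,w}$ and $R'_{v,w'} \subseteq H_{v,w'}$ are automatically disjoint, and specifying the union $\bigcup_{w} R'_{v,w}$ is equivalent to specifying each $R'_{v,w}$ independently.

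Next I would count. By Observation~\ref{obs:NumberHeavy}, for each fixed $w \in \Omega_v$ it suffices to consider at most $\OO(1/\e^2)$ choices of $R'_{v,w}$. By the discussion preceding the definition of $\Omega_v$ in Section~\ref{sec:localization}, we have $|\Omega_v| \in \OO(\log(1/\e)/\e)$. Since the choices across distinct $w \in \Omega_v$ are independent by the partition argument above, the total number of alternatives for $\bigcup_{w} R'_{v,w}$ is at most
\[
  \prod_{w \in \Omega_v} \OO(1/\e^2) \;=\; \bigl(\OO(1/\e^2)\bigr)^{\OO(\log(1/\e)/\e)} \;=\; (1/\e^2)^{\OO(\log(1/\e)/\e)}.
\]

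Finally, I would rewrite this bound in the claimed exponential form. Taking $\log_2$ of $(1/\e^2)^{c \log(1/\e)/\e}$ yields $2c \log(1/\e)^2/\e$, so $(1/\e^2)^{\OO(\log(1/\e)/\e)} = 2^{\OO(\log(1/\e)^2/\e)}$, matching the statement. There is no substantive obstacle here: the observation is essentially the direct combinatorial consequence of Observation~\ref{obs:NumberHeavy} and the cardinality bound on $\Omega_v$; the only subtlety to flag explicitly is the disjointness of the $H_{v,w}$, which is what makes the choices independent and the multiplicative count valid.
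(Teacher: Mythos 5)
Your proposal is correct and follows exactly the paper's (largely implicit) argument: multiply the $\OO(1/\e^2)$ choices per $w$ from Observation~\ref{obs:NumberHeavy} over the $|\Omega_v|\in\OO(\log(1/\e)/\e)$ values of $w$, then convert to the exponential form. The explicit remark on the disjointness of the sets $H_{v,w}$ is a harmless elaboration of what the paper takes for granted.
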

 
We use a similar argument for $R'_v\cap L_v$. Indeed, as seen in the proof of Lemma~\ref{lm:boundRelease}, $w(L_v)\le (1+\e^2)|I_v|$ and jobs in $L_v$ will be processed as light jobs (by Lemma~\ref{lm:localization}). We now show that we can group light jobs together in order to diminish the possibilities for $L_v$. This is done as follows. Set jobs in $L_v$ in a list ordered by Reverse Smith's rule, as in Algorithm Smith in Weight-Space. Then we greedily find groups of jobs in $L_v$ by going through the list of jobs from left to right such that each group has total weight in $[\e^2|I_v|,2\e^2|I_v|]$ (except from the last group that might have smaller total weight). Recalling that $w(L_v)\in (1+\e^2)|I_v|$, we obtain that this procedure creates at most $\OO(1/\e^2)$ groups. Let $L_{v,i}$ be the $i$-th of these groups.

\begin{lemma}
 There exists a $(1+\OO(\e))$-approximate weight-schedule such that: (i) it satisfies the release- and deadline-weights of Lemma~\ref{lm:localization}, (ii) in each group $L_{v,i}$ all jobs are processed consecutively, and (iii) within each set $L_v$ jobs are processed following Reverse Smith's rule. 
\end{lemma}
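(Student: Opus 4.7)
My plan is to take the $(1+\OO(\e))$-approximate weight-schedule $\schedule^*$ guaranteed by Lemma~\ref{lm:localization} and rearrange it so each group $L_{v,i}$ becomes contiguous, at the cost of at most a further $1+\OO(\e)$ factor. Since the groups $L_{v,i}$ are defined as consecutive runs in the Reverse Smith order of $L_v$, making each group contiguous is compatible with property~(iii), and the essential task is to push the non-$L_{v,i}$ jobs out from between members of the same $L_{v,i}$.

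Structurally, I would first apply Weight Stretch and (possibly several rounds of) Stretch Intervals to $\schedule^*$, producing a schedule $\schedule_1$ of cost at most $(1+\OO(\e))$ times the cost of $\schedule^*$, in which every weight-interval $I_u$ contains a contiguous idle-weight block of length strictly larger than $2\e^2|I_u|$, and in which every job $j$ still lies inside its window $[r_j^w,d_j^w)$. Since any group $L_{v,i}$ has total weight at most $2\e^2|I_v|$, the idle block in $I_v$ is by design large enough to host a whole group $L_{v,i}$.

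I would then construct $\schedule'$ by the following explicit rearrangement inside each interval $I_u$: for every group $L_{v,i}$ whose weight-span meets $I_u$, I move all of its jobs into the idle region of $I_v$ as a single contiguous block in Reverse Smith order, and I let the remaining jobs of the interval close up so that they keep their original relative order. For a group whose span crosses the boundary between $I_v$ and $I_{v+1}$, I first drag the whole group into whichever of the two intervals already contains its largest-completion-weight member; since both idle blocks dominate the group's weight, this is feasible. Property~(i) is preserved because each move is contained inside the window $[(1+\e)^{v-1},(1+\e)^{v+s-1})$, and property~(iii) is preserved because two jobs of the same $L_v$ are never swapped.

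The main obstacle is bounding the cost increase of this last rearrangement by $1+\OO(\e)$; each individual job may now sit at a completion weight quite different from the one it had in $\schedule_1$. My plan is to follow the remaining-volume-function argument of Lemma~\ref{lm:AlgSmith} and show that $V^{\schedule'}(w)\le V^{\schedule_1}\!\bigl((1+\e)^{-1}w\bigr)$ for every $w\ge 0$: each job's completion weight is displaced by at most $|I_u|\le \e(1+\e)^{u-1}\le \e w$, so a multiplicative $(1+\e)$-shrink of the argument absorbs the displacement. Applying $f$ to both sides and integrating, together with the change-of-variable step that appears in the proof of Lemma~\ref{lm:AlgSmith}, then yields the desired $(1+\OO(\e))$-approximation bound and completes the proof.
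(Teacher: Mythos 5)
Your overall strategy (create idle weight via Stretch Intervals, then consolidate each group $L_{v,i}$ into a single interval, then bound the cost via the remaining-volume function) is the same as the paper's, but there is a genuine gap in the quantitative accounting of idle weight, and this accounting is precisely the heart of the paper's proof. You size the idle block in $I_u$ at ``strictly larger than $2\e^2|I_u|$'' and justify feasibility by noting that \emph{one} group $L_{v,i}$ has weight at most $2\e^2|I_v|$. But a single interval $I_{u}$ may have to absorb a partially-processed (boundary-straddling) group from \emph{every} release class $L_v$ with $v\le u$ --- there are up to $s\in\OO(\log(1/\e)/\e)$ such classes, and within each $L_v$ one group can straddle each boundary of $I_u$. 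The total extra weight that $I_u$ must accommodate is therefore $\sum_{v\le u}\OO(\e^2|I_v|)=\Theta(\e\,|I_u|)$, which exceeds your $2\e^2|I_u|$ budget by a factor of $1/\e$. The paper's proof is exactly this summation ($4\e^2\sum_{v\le v'}|I_v|\in\OO(\e|I_{v'}|)$), followed by the observation that $\OO(1)$ applications of Stretch Intervals do create $\Theta(\e|I_{v'}|)$ idle weight in $I_{v'}$; your construction would go through if you raised the target idle length to $c\,\e|I_u|$ for a suitable constant $c$ and performed this sum.

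A secondary problem is the instruction ``for every group $L_{v,i}$ whose weight-span meets $I_u$, I move all of its jobs into the idle region of $I_v$.'' Taken literally this is infeasible: all groups of $L_v$ meet some interval, and $w(L_v)$ can be as large as $(1+\e^2)|I_v|$, which cannot fit into an idle block of length $\OO(\e^2|I_v|)$ in $I_v$. What you presumably intend (and what the paper does) is to leave groups that already lie inside one interval where they are, merely permuting jobs within that interval to make them contiguous, and to relocate only the straddling groups into the interval that already contains part of them. Once the idle-space budget is corrected as above, your displacement bound of a multiplicative $1+\OO(\e)$ per job and the ensuing $V^{\schedule'}(w)\le V^{\schedule_1}((1+\OO(\e))^{-1}w)$ argument is a valid way to finish, essentially equivalent to the paper's appeal to Observation~\ref{obs:ForwardCw} and the cost of $\OO(1)$ Stretch Intervals applications.
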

\begin{proof}
 Consider the schedule given by Lemma~\ref{lm:localization}, and thus within each $J_v$ jobs follow Reverse Smith's rule. Let us fix an interval $I_{v'}$. Within this interval, the schedule can only process jobs in $J_v$ with $v\leq v'$. Within a given $J_v$ we follow Reverse Smith's rule, thus there is at most two sets $L_{v,i}$ that are partially processed in $I_{v'}$. They require at most $4\e^2 |I_v|$ extra weight within $I_{v'}$ in order to be completely processed in $I_{v'}$. Summing over all $v\le v'$, we obtain that in total we require 
 \[
 4\e^2 \sum_{v\le v'} |I_v| = 4\e^3 \sum_{v\le v'} (1+\e)^v \in \OO(\e |I_{v'}|)  
 \]
 extra space in $I_{v'}$. The result follows since we can create enough idle time within $I_{v'}$ by applying $\OO(1)$ 
times the procedure Stretch Intervals. We remark that the procedure described works \emph{simultaneously} for all intervals $I_{v'}$. 
\end{proof}

With this lemma, we can find a compact description to $R'_v\cap L_v$. Indeed, to specify $R'_v\cap L_v$, i.\,e., the jobs in $L_v$ that are processed in $I_{u+1}$ or later, we just need to determine the index $i$ such that jobs in $L_{v,k}$ with $k\ge i$ are in $R'_v$ and jobs in $L_{v,k}$ with $k<i$ are not in $R'_v$. Since $i$ ranges over $\OO(1/\e^2)$ many options, we obtain the following.

\begin{observation}
 The set $R'_v\cap L_v$ can be chosen over $\OO(1/\e^2)$ different options.
\end{observation}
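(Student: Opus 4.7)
My plan is to exploit the two structural properties granted by the preceding lemma: that in the candidate $(1+\OO(\e))$-approximate weight-schedule each group $L_{v,i}$ is processed as a single consecutive block, and that within $L_v$ the jobs appear in Reverse Smith order. Since the groups were formed by sweeping left to right through the Reverse-Smith-ordered list of $L_v$, these two properties together will force the groups themselves to appear along the weight-axis in index order, which in turn makes the completion weights of the jobs in $L_{v,i}$ monotone in~$i$.

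Building on this monotonicity, I would next argue that $R'_v\cap L_v$---by definition the subset of $L_v$ with completion weight strictly above $(1+\e)^u$---must coincide with a contiguous suffix of the sequence $L_{v,1},\dots,L_{v,q}$. Concretely, there is a unique threshold $i^{\ast}\in\{1,\dots,q+1\}$ with $R'_v\cap L_v = L_{v,i^{\ast}}\cup L_{v,i^{\ast}+1}\cup\cdots\cup L_{v,q}$, where $i^{\ast}=q+1$ encodes the empty set. Thus the entire set is described by a single integer~$i^{\ast}$.

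To finish I would bound the number of thresholds. From Lemma~\ref{lm:boundRelease} we have $w(L_v)\le (1+\e^2)|I_v|$, and the greedy grouping forces every group except possibly the last to have total weight at least $\e^2|I_v|$, yielding $q\in\OO(1/\e^2)$. Hence there are at most $q+1\in\OO(1/\e^2)$ admissible choices of $i^{\ast}$, and therefore at most that many distinct sets $R'_v\cap L_v$, as claimed.

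The main obstacle I anticipate is a clean justification of the group monotonicity: heavy jobs and jobs from other $L_{v'}$ may be interleaved between two consecutive $L_v$-groups along the weight-axis, so I must argue carefully that it is the \emph{completion weights of the groups} (and not merely the relative order of $L_v$-jobs among themselves) that respect the group indexing. Once property (iii) of the previous lemma is used to promote the Reverse-Smith order within $L_v$ to an order on the groups as weight-intervals, the remainder is essentially a counting exercise.
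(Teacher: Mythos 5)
Your proposal is correct and follows essentially the same route as the paper: the greedy grouping bound $w(L_v)\le(1+\e^2)|I_v|$ with per-group weight at least $\e^2|I_v|$ gives $\OO(1/\e^2)$ groups, and the consecutiveness of each $L_{v,i}$ together with Reverse Smith order within $L_v$ reduces $R'_v\cap L_v$ to the choice of a single threshold index. The paper treats this as immediate from the preceding lemma, so your extra care about group monotonicity along the weight-axis is a welcome but not essentially different elaboration.
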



Combining this last observation and Observation~\ref{obs:NumberHeavyAll}, we obtain that $R'_v$ can take at most $k\le 2^{\OO(\log^2(1/\e)/\e)}$ many different options. By Observation~\ref{obs:Rdecomp}, we conclude that $R'$ belongs to a set of size at most $k^{s-1}\le 2^{\OO(\log^3(1/\e)/\e^2)}$. With this and Observation~\ref{obs:Rdecomp0}, we can define $\mathcal{D}_u$ having size at most $2^{\OO(\log^3(1/\e)/\e^2)}$. Finally, we define $\tilde{\mathcal{F}}_u=\{R: R^c\in \mathcal{D}_u\}$ for each $u$.

\begin{lemma}
 \label{lm:compactFu}
 We can construct in polynomial time a set $\tilde{\mathcal{F}}_u$ for each $u$ that satisfies the following: (i) there exists a $(1+\OO(\e))$-approximate weight-schedule in which the set of jobs with completion weight at most $(1+\e)^u$ belongs to $\tilde{\mathcal{F}}_u$ for each interval $u$, (ii) the set $\tilde{\mathcal{F}}_u$ has cardinality at most $2^{\OO(\log^3(1/\e)/\e^2)}$, and (iii) the set $\tilde{\mathcal{F}}_u$ is completely independent of the speed of the machine.
\end{lemma}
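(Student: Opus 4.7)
The plan is to construct $\tilde{\mathcal{F}}_u$ explicitly from the enumerations assembled in this subsection and then verify the three claims in turn. Concretely, recall that a set $R \in \mathcal{D}_u$ decomposes as $\overline{R} \cup \bigcup_{v=u+2-s}^{u} R'_v$ with $R'_v = (R'_v \cap L_v) \cup \bigcup_{w\in\Omega_v} R'_{v,w}$ (Observations~\ref{obs:Rdecomp0} and~\ref{obs:Rdecomp}). I would let each $R'_v \cap L_v$ range over the $\OO(1/\e^2)$ admissible block-suffixes $\bigcup_{i\geq i^*} L_{v,i}$, and each $R'_{v,w}$ range over the $\OO(1/\e^2)$ decreasing-volume suffixes of $H_{v,w}$. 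Then $\tilde{\mathcal{F}}_u := \{J \setminus R : R \in \mathcal{D}_u\}$ is the candidate family, enumerated explicitly.

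For claim (ii) I would just multiply the counts assembled above. By Observation~\ref{obs:NumberHeavyAll} combined with the $\OO(1/\e^2)$ bound on $R'_v \cap L_v$, each $R'_v$ admits at most $2^{\OO(\log^2(1/\e)/\e)}$ options. Taking the product over the $s-1 \in \OO(\log(1/\e)/\e)$ relevant values of $v$ yields $|\tilde{\mathcal{F}}_u| \le 2^{\OO(\log^3(1/\e)/\e^2)}$, and the enumeration is polynomial-time for fixed $\e$. Claim (iii) follows by inspection: every object used in the construction ($r_j^w$, $d_j^w$, the exponent $s$, the groups $L_{v,i}$, and the decreasing-volume order of $H_{v,w}$) depends only on $\{v_j, w_j\}_{j \in J}$ and on $\e$, inheriting speed-independence from clause (v) of Lemma~\ref{lm:localization}.

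The substantive step is claim (i). Starting from the $(1+\OO(\e))$-approximate weight-schedule produced by the preceding lemma---which respects the release- and deadline-weights, runs each $L_v$ in Reverse Smith order, and processes each block $L_{v,i}$ contiguously---I would define $R := \{j : C_j^w > (1+\e)^u\}$ and argue $R \in \mathcal{D}_u$. The inclusion $\overline{R} \subseteq R$ is immediate from the release-weight bound. For each $v$ and each $w \in \Omega_v$, the interchange argument from Observation~\ref{obs:NumberHeavy} forces $R \cap H_{v,w}$ to be one of the enumerated decreasing-volume suffixes, and Reverse Smith monotonicity within $L_v$ combined with block contiguity forces $R \cap L_v = \bigcup_{i \geq i^*} L_{v,i}$ for some~$i^*$.

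The main obstacle I anticipate is that the boundary $(1+\e)^u$ could a priori cut through the interior of a block $L_{v,i}$ or of some $H_{v,w}$, spoiling the suffix structure just claimed. I would resolve this by a constant number of further applications of Stretch Intervals near $(1+\e)^u$, which inject $\Omega(\e |I_u|)$ idle weight at the boundary---enough to snap each of the polynomially many partially-cut groups (at most one per pair $(v,w)$, so $\OO(\log^2(1/\e)/\e^2)$ in total) cleanly to one side of the boundary at a further $(1+\OO(\e))$ cost, via Observation~\ref{obs:ForwardCw}. Since this local rerouting is again a function only of $\e$ and the job data, it preserves property (iii) throughout.
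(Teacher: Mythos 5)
Your proposal is correct and follows essentially the same route as the paper: the same decomposition $R=\overline{R}\cup\bigcup_v R'_v$ with $R'_v$ split into volume-ordered suffixes of the $H_{v,w}$ and a block-suffix of the grouped light jobs $L_{v,i}$, the same product-of-counts bound, and the same observation that everything is speed-independent. The boundary issue you flag is exactly what the paper's unnumbered grouping lemma resolves (also via $\OO(1)$ applications of Stretch Intervals and Observation~\ref{obs:ForwardCw}, applied simultaneously for all intervals), so your patch reproduces the paper's argument rather than diverging from it.
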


With this lemma and the discussion at the beginning of this section we obtain a PTAS, which is 
 best possible from an approximation point of view, since the problem is known to be 
strongly~NP-hard~\cite{hoehnJ12}.

\begin{theorem}\label{thm:PTAS-given-speeds}
  There exists an efficient PTAS 
  for minimizing the weighted sum of completion times on a machine with given varying speed. 
\end{theorem}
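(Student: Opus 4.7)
The plan is to combine the compact search space from Lemma~\ref{lm:compactFu} with the dynamic program sketched in Section~\ref{sec:DPPTAS}. First I would preprocess the instance: round every weight $w_j$ up to the next integer power of $1+\e$ (a $1+\e$ loss), apply Weight Stretch twice to guarantee $S_j^w \ge \e(1+\e) w_j$ in the target schedule (another $(1+\e)^2$ loss), and invoke Lemma~\ref{lm:localization} to compute release- and deadline-weights $r_j^w, d_j^w$ together with the job groupings $H_{v,w}$ and $L_{v,i}$. All of these operations only inflate the optimum by a $1+\OO(\e)$ factor, and crucially they are oblivious to the speed function, so the resulting structure is compatible with the DP.

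Next I would run the dynamic program restricted to entries $T(u,S)$ with $S \in \tilde{\mathcal{F}}_u$ using the recursion
\[
T(u,S) = \min\bigl\{\,T(u-1,S') + \APPROX_u(S',S)\;:\;S' \in \tilde{\mathcal{F}}_{u-1},\; S'\subseteq S\,\bigr\},
\]
with base case $T(0,\emptyset)=0$ and returning $T(\nu,J)$. Correctness amounts to two observations. First, by Lemma~\ref{lm:compactFu}(i) there is a $(1+\OO(\e))$-approximate weight-schedule whose ``prefix by completion weight $\le (1+\e)^u$'' lies in $\tilde{\mathcal{F}}_u$ for every $u$, so the optimal value restricted to the DP is within a $1+\OO(\e)$ factor of $\OPT$. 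Second, replacing the true completion weights $C_j^w \in I_u$ by the upper endpoint $(1+\e)^u$ in $\APPROX_u(S',S)$ costs another factor of at most $1+\e$. Because the oracle $f(\cdot)$ of~\eqref{eq:functionF} lets us read off makespan and hence every $x_j^{\schedule}$ from the total work volume of $S'$, the value $\APPROX_u(S',S)$ depends only on $w(S\setminus S')$ and $\sum_{j\in S'} v_j$, and is computable in polynomial time.

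For running time, the table has $\nu \cdot \max_u |\tilde{\mathcal{F}}_u|$ entries. By Lemma~\ref{lm:compactFu}(ii), $|\tilde{\mathcal{F}}_u| \le 2^{\OO(\log^3(1/\e)/\e^2)}$, and $\nu = \lceil\log_{1+\e}\sum_j w_j\rceil$ is polynomial in the input size. Each transition enumerates at most $|\tilde{\mathcal{F}}_{u-1}|$ candidates and performs one oracle call plus constant arithmetic. So the total time is bounded by $\mathrm{poly}(n, \log W, \log f_{\max}) \cdot 2^{\OO(\log^3(1/\e)/\e^2)}$, which is efficient in the sense defined in Section~\ref{sec:weight-space}.

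The main step where one must be careful is checking that every structural reduction used to define $\tilde{\mathcal{F}}_u$ (Reverse Smith's rule on light jobs via Lemma~\ref{lm:AlgSmith}, the grouping of $L_v$ into the blocks $L_{v,i}$, and the ordering of $H_{v,w}$ by work volume) is compatible with the DP: namely, that once the DP fixes which blocks and which heavy jobs belong to $S\setminus S'$, the corresponding pieces can be arranged inside $I_u$ in a way that respects all the structural constraints simultaneously without further loss. This is exactly what Lemma~\ref{lm:localization}(iv) and the grouping lemma for $L_v$ guarantee, so the $1+\OO(\e)$ overhead compounds multiplicatively across the $\nu$ stages into a $(1+\OO(\e))^{\OO(1)} = 1+\OO(\e)$ bound overall. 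Rescaling $\e$ by a constant gives a $(1+\e)$-approximation, and combined with the strong NP-hardness of~\cite{hoehnJ12} this is best possible.
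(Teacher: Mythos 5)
Your proposal is correct and follows essentially the same route as the paper: the theorem is obtained by running the dynamic program of Section~\ref{sec:DPPTAS} over the compact families $\tilde{\mathcal{F}}_u$ of Lemma~\ref{lm:compactFu}, with correctness delegated to the structural lemmas and the proof itself reduced to the running-time count $2^{\OO(\log^3(1/\e)/\e^2)}\cdot \log(\sum_j w_j)\cdot n$. One small slip: $\APPROX_u(S',S)=(1+\e)^u\bigl(f(v(J\setminus S'))-f(v(J\setminus S))\bigr)$ depends on the work volumes $v(S')$ and $v(S)$ via the oracle, not on $w(S\setminus S')$, but this does not affect the argument.
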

\begin{proof}
It remains to argue that the described algorithm is efficient. It is easy to see that the time for creating sets $\tilde{\mathcal{F}}_u$ is dominated by the time needed to solve the dynamic program. Moreover, the number of entries of the table is $2^{\OO(\log^3(1/\e)/\e^2)}\cdot \log(\sum_j w_j)$, and the time needed to fill each entry is $2^{\OO(\log^3(1/\e)/\e^2)}\cdot n$. Therefore the running time\footnote{We remark that in this expression we consider arithmetic operations to take time $\OO(1)$, and thus we neglect the size of the numbers output by the oracle. However considering this effect can only add a polynomial term on the maximum encoding size of a number output by the oracle. Recall that we allow efficient algorithms to be of that form.} is $2^{\OO(\log^3(1/\e)/\e^2)}\cdot \log(\sum_j w_j)\cdot 2^{\OO(\log^3(1/\e)/\e^2)}\cdot n = 2^{\OO(\log^3(1/\e)/\e^2)}\cdot \log(\sum_j w_j)\cdot n$.
\end{proof}

\section{Speed-scaling for continuous speeds}
\label{sec:cont-speeds}

We now consider the dynamic speed-scaling setting in which the machine can run at any non-negative speed $s$, and it is part of the scheduling problem to decide upon the speed. Running the machine at speed~$s$ implies a power consumption rate of $P(s)=s^{\alpha}$ for some constant $\alpha\ge 1$. The total energy consumed is the power consumption integrated over time. We study the problem of minimizing $\sum_j w_j C_j$ for a given amount of available energy $E$.

In this setting, we may assume that an optimal solution executes each job at a uniform speed. This follows directly from the convexity of the power function~\cite{yaoDS95}. Let~$s_j$ be the speed at which job~$j$ is
running. Then~$j$'s power consumption is~$p_j=s_j^{\alpha}$, and its
execution time
is~$x_j=v_j/s_j=v_j/p_j^{1/\alpha}$. The energy that
is required for processing~$j$ is~$E_j=p_j\cdot x_j = p_j\cdot v_j/ s_j = s_j^{\alpha -1}\cdot v_j=
v_j^{\alpha}/x_j^{\alpha-1}$. 

Let $\pi$ be a sequence of jobs in a schedule, where $\pi(j)$ is the index of
the $j$-th job in the sequence. We can compute the
optimal energy assignment for all jobs in a given sequence~$\pi$ using a total
amount of energy~$E$ by a convex program. We rewrite the
objective
function as $\sum_{j=1}^n w_j C_j = \sum_{j=1}^n w_{\pi(j)}
\sum_{k=1}^j x_{\pi(k)} = \sum_{j=1}^n x_{\pi(j)} \sum_{k=j}^n w_{\pi(k)}$ and
define
$W^{\pi}_{\pi(j)}=\sum_{k=j}^n w_{\pi(k)}$. Note that
$x_j=\left(v_j^{\alpha}/E_j\right)^{1/(\alpha-1)}$, and that $W^{\pi}_j$ is the
total remaining weight just before $j$ is completed in any schedule concordant
with~$\pi$. Then the problem can be formulated as
\begin{equation}
\label{eq:SWCTcontSpeed}
\min \left\lbrace \sum_{j=1}^n W_j^{\pi} \cdot 
\left(\frac{v_j^{\alpha}}{E_j}\right)^{1/(\alpha-1)}: \quad \sum_{j=1}^n E_j
\leq E \text{, and }   E_j \geq 0 \quad \forall j \in                      
\{1,\ldots,n\}\right\rbrace. 
\end{equation}

This program has linear constraints and a convex objective
function. Such programs can be solved in polynomial time up
to an arbitrary precision~\cite{nesterovN94} with the Ellipsoid method. 
However, the well-known Karush-Kuhn-Tucker (KKT)~\cite{bertsekas99} conditions yield a explicitly formula for the optimal energy assignment.

The problem in~\eqref{eq:SWCTcontSpeed} is clearly feasible, for example, choose $E_j=0$ for each~$j\in\{1,\ldots,n\}$. Moreover, an optimal solution satisfies the first constraint with equality. Indeed, we
allow arbitrary non-negative speeds and thus arbitrary energy assignments, and the
smallest increase in the assigned energy decreases the total
cost. For the same reason and with a positive energy budget, an
optimal solution never assigns zero energy to any job; hence $E_j> 0$ for each job $j$. With these observations the KKT conditions reduce to the following.

\begin{lemma}[KKT conditions] 
  \label{lem:kkt} A vector $(E_1,\ldots,E_n)$ is an optimal solution
to the convex program in \eqref{eq:SWCTcontSpeed} if and only if
  \begin{itemize}
  \item[(a)] $(E_1,\ldots,E_n)$ is feasible and satisfies $\sum_{j=1}^n E_j
= E$ and  $E_j > 0$ for all $j$, and
  \item[(b)] there exists a parameter $\lambda \geq 0$ such that
$\nabla g(E_1,\ldots,E_n) + \lambda \cdot \boldsymbol{1} = 0$,
  \end{itemize}
 where $\boldsymbol{1}$ denotes a vector with ones in each coordinate and $g$ is the objective function in \eqref{eq:SWCTcontSpeed}.
\end{lemma}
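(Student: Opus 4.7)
The plan is to apply the standard KKT theorem for convex optimization. The objective $g$ is a nonnegative combination of functions of the form $E_j \mapsto c_j E_j^{-1/(\alpha-1)}$, each of which is convex on $(0,\infty)$ since $\alpha>1$, so $g$ is convex. The feasible region is cut out by the linear constraints $\sum_j E_j \leq E$ and $E_j \geq 0$, hence convex and satisfying Slater's condition trivially (any strictly positive vector with $\sum E_j < E$ is strictly feasible). Consequently KKT is both necessary and sufficient for optimality, so the lemma amounts to rewriting the generic KKT system in a form specialized to this program.

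For the forward direction (optimal $\Rightarrow$ (a) and (b)), the text preceding the lemma already established (a): the budget constraint must be tight, for otherwise a small uniform increase of some $E_j$ strictly decreases $g$; and each $E_j>0$, since $g(E_1,\dots,E_n)\to\infty$ whenever any coordinate tends to $0$ (using $\alpha>1$), while assigning a tiny positive energy to a job that had $E_j=0$ is feasible and finite. Given (a), write the Lagrangian
\[
\mathcal{L}(E,\lambda,\mu) = g(E) + \lambda\Bigl(\sum_{j=1}^n E_j - E\Bigr) - \sum_{j=1}^n \mu_j E_j,
\]
with $\lambda\geq 0$ (from the inequality $\sum E_j\leq E$) and $\mu_j\geq 0$. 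Stationarity reads $\nabla g(E) + \lambda\boldsymbol{1} - \mu = 0$, and complementary slackness $\mu_j E_j = 0$ combined with $E_j>0$ forces $\mu_j=0$ for each $j$. This collapses stationarity to $\nabla g(E) + \lambda\boldsymbol{1}=0$, proving (b).

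For the reverse direction, suppose $(E_1,\dots,E_n)$ satisfies (a) and (b) with some $\lambda\geq 0$. Setting $\mu_j=0$ for all $j$, the triple $(E,\lambda,\mu)$ satisfies primal feasibility, dual feasibility, complementary slackness, and stationarity for the Lagrangian above. Since $g$ is convex and the constraints are linear, any KKT point is a global minimizer of the convex program, so $(E_1,\dots,E_n)$ is optimal.

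The only mildly delicate step is verifying that an optimum actually lies in the open region $\{E_j>0\}$ where $g$ is smooth, so that classical KKT applies rather than a subdifferential version; this is precisely the $E_j>0$ claim in (a), and it follows from the blow-up of $g$ near the boundary as noted above. (As a sanity check, a direct computation from $\nabla g$ yields $\partial g/\partial E_j = -\tfrac{1}{\alpha-1} W_j^\pi v_j^{\alpha/(\alpha-1)} E_j^{-\alpha/(\alpha-1)} < 0$, which is consistent with $\lambda>0$ and with the budget being tight.) Everything else is a routine specialization of the KKT theorem, so I would keep the argument short and emphasize only the verification of convexity, of the strictly-positive-energy property, and of the vanishing of the non-negativity multipliers via complementary slackness.
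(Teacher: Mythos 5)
Your proof is correct and follows essentially the same route as the paper, which states the lemma as a direct specialization of the standard KKT conditions after observing (in the preceding text) that at optimality the budget constraint is tight and every $E_j>0$, so that the non-negativity multipliers vanish by complementary slackness. Your write-up merely makes explicit the convexity of $g$, Slater's condition, and the sufficiency direction, all of which the paper leaves implicit.
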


\begin{theorem}\label{thm:E_j} 
The optimal solution to \eqref{eq:SWCTcontSpeed} is given by

\begin{equation*}
E_j = v_j \cdot
\left(W_j^{\pi}\right)^{(\alpha-1)/\alpha} \cdot \frac{E}{\gamma_{\pi}} \text{,
where } \gamma_{\pi} = \sum_{j=1}^n v_{j} \cdot
\left(W_{j}^{\pi}\right)^{(\alpha-1)/\alpha}.
  \end{equation*}
\end{theorem}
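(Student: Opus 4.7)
\medskip
\noindent\textbf{Proof plan.} The plan is a direct application of the KKT conditions stated in Lemma~\ref{lem:kkt}. The program \eqref{eq:SWCTcontSpeed} is convex with a single linear equality-type constraint active at optimum, so the KKT conditions are both necessary and sufficient. I therefore just need to (i) compute $\nabla g$, (ii) read off a candidate $(E_1,\ldots,E_n)$ in terms of the multiplier $\lambda$, (iii) eliminate $\lambda$ via the energy budget, and (iv) check that the resulting vector is positive and feasible so that Lemma~\ref{lem:kkt} applies.

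\medskip
\noindent\textbf{Step 1: partial derivatives.} Writing $g(E_1,\ldots,E_n)=\sum_{j=1}^n W_j^{\pi}\,v_j^{\alpha/(\alpha-1)}\,E_j^{-1/(\alpha-1)}$, I differentiate each term to get
\[
  \frac{\partial g}{\partial E_j} \;=\; -\frac{1}{\alpha-1}\, W_j^{\pi}\, v_j^{\alpha/(\alpha-1)}\, E_j^{-\alpha/(\alpha-1)}.
\]
Condition (b) of Lemma~\ref{lem:kkt} then asserts the existence of $\lambda\ge 0$ such that for every $j$,
\[
  \lambda \;=\; \frac{1}{\alpha-1}\, W_j^{\pi}\, v_j^{\alpha/(\alpha-1)}\, E_j^{-\alpha/(\alpha-1)}.
\]

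\medskip
\noindent\textbf{Step 2: solve for $E_j$ and eliminate $\lambda$.} Solving the last display for $E_j$ yields
\[
  E_j \;=\; \bigl(\lambda(\alpha-1)\bigr)^{-(\alpha-1)/\alpha}\cdot v_j\cdot \bigl(W_j^{\pi}\bigr)^{(\alpha-1)/\alpha}.
\]
Summing over $j$ and using the budget constraint $\sum_j E_j=E$ (which is tight at any optimum as noted in the paragraph preceding Lemma~\ref{lem:kkt}) gives $\bigl(\lambda(\alpha-1)\bigr)^{-(\alpha-1)/\alpha}=E/\gamma_{\pi}$, which upon substitution produces exactly the claimed closed-form expression for $E_j$.

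\medskip
\noindent\textbf{Step 3: verify the candidate is optimal.} Plugging the candidate into the formula for $\lambda$ gives a single well-defined nonnegative value (since all $W_j^{\pi},v_j,E>0$), so condition (b) holds. The candidate is strictly positive coordinate-wise and sums to $E$, so (a) also holds. Because $g$ is convex on the positive orthant and the feasible region is a convex polytope, KKT is sufficient as well as necessary; hence this point is the unique optimum. The only place where one needs to be slightly careful is that the formula uses $\alpha>1$ (the $(\alpha-1)$ in exponents), but the paper's convention $\alpha>1$ from the model definition makes this harmless; I would simply remark on this at the start. I do not expect any real obstacle—the whole argument is a textbook KKT calculation once the form of $g$ is written out.
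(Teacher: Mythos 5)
Your proposal is correct and follows essentially the same route as the paper: apply Lemma~\ref{lem:kkt}, differentiate $g$, solve the stationarity condition for $E_j$ in terms of $\lambda$, and eliminate $\lambda$ via the tight budget constraint $\sum_j E_j = E$. The only (harmless) difference is that you additionally spell out the sufficiency check in Step~3, which the paper leaves implicit in the ``if and only if'' of Lemma~\ref{lem:kkt}.
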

\vspace{-.3cm}
\begin{proof}
  Since we fix a permutation $\pi$, we omit the
 extra script in $W_j^{\pi}$ and $\gamma_{\pi}$ during the rest of this proof. Let $(E_1,\ldots,E_n)$ be an
optimal solution to~\eqref{eq:SWCTcontSpeed}. By Lemma~\ref{lem:kkt}(b),
  there is a~$\lambda \geq 0$ such that for every job~$j\in
  \{1,\ldots,n\}$ holds
  \begin{equation*} 
    W_j \cdot v_j^{\alpha/(\alpha-1)} \cdot
    \frac{-1}{\alpha-1} \cdot E_j^{-\alpha/(\alpha-1)} + \lambda = 0\,,
  \end{equation*} which is equivalent to
  \begin{equation}\label{eq:E_j} E_j = v_j \cdot
    W_j^{(\alpha-1)/\alpha} \cdot
    \left(\frac{1}{(\alpha-1)\lambda}\right)^{(\alpha-1)/\alpha} \,.
  \end{equation} 
  To determine the Lagrange multiplier~$\lambda$ we use Lemma~\ref{lem:kkt}(a),
  \begin{equation*} 
    E \ =\ \sum_{j=1}^n E_j \ =\ \sum_{j=1}^n v_j
    \cdot W_j^{(\alpha-1)/\alpha} \cdot
    \left(\frac{1}{(\alpha-1)\lambda}\right)^{(\alpha-1)/\alpha} =\ \gamma
    \cdot \left(\frac{1}{(\alpha-1)\lambda}\right)^{(\alpha-1)/\alpha} \,.
  \end{equation*} 
  Then, we can express the values~$E_j$
  in~\eqref{eq:E_j} independently of~$\lambda$ and conclude that $E_j = E \cdot v_j \cdot  W_j^{\frac{\alpha-1}{\alpha}}/\gamma.$
\end{proof}

Using this optimal energy assignment (Theorem~\ref{thm:E_j}), the scheduling problem at hand reduces to finding the permutation $\pi$ that minimizes
\begin{align} 
    \sum_{j=1}^n w_j C_j(E)  &=  \sum_{j=1}^n W_j^{\pi} \cdot
\left(\frac{v_j^{\alpha}}{E_j}\right)^{\frac{1}{\alpha-1}}  = 
\frac{1}{E^{\frac{1}{\alpha-1}}} \cdot \left(\sum_{j=1}^n  v_j \cdot 
\left(W_j^{\pi}\right)^{\frac{\alpha-1}{\alpha}}\right)^{\frac{\alpha}{\alpha-1}
} \label{eq:scheduling-cost}\,,
  \end{align}
where the last equation comes from the definition of $\gamma_{\pi}$ (see Theorem~\ref{thm:E_j}) and standard transformations. Interestingly, the optimal job sequence is independent of the energy
distribution, and furthermore it is independent of the overall energy budget. In other words, one scheduling sequence is universally optimal for all energy budgets. As we will see this sequence is obtained by solving~{\em in~weight-space} a~(standard) scheduling problem 
with a cost function that depends on the power function. A similar observation
was independently made by V\'asquez~\cite{vas12}.

\begin{theorem}\label{thm:opt-sequence} 
Given a power function $P(s)=s^{\alpha}$, there is a universal sequence that
minimizes $\sum_j w_j C_j$ for any energy budget. The sequence is given by
reversing an optimal solution of the scheduling problem~\abc{1}{}{\sum w_j
C_j^{(\alpha-1)/\alpha}} (on a single machine of unit speed).
\end{theorem}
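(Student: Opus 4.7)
The plan is to extract both claims of the theorem directly from equation~\eqref{eq:scheduling-cost}. After substituting the optimal energy assignment of Theorem~\ref{thm:E_j}, the total scheduling cost factorizes as
\begin{equation*}
\sum_{j=1}^{n} w_j C_j \;=\; \frac{1}{E^{1/(\alpha-1)}}\,\gamma_{\pi}^{\alpha/(\alpha-1)},
\end{equation*}
where $\gamma_{\pi}=\sum_{j} v_j (W_j^{\pi})^{(\alpha-1)/\alpha}$. For $\alpha>1$ the map $x\mapsto x^{\alpha/(\alpha-1)}$ is strictly increasing on $\mathbb{R}_+$, so for each fixed energy budget $E>0$ the problem of minimizing the total cost over $\pi$ reduces to minimizing $\gamma_{\pi}$. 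Since $E$ enters only through the prefactor $E^{-1/(\alpha-1)}$, which is independent of $\pi$, any permutation $\pi^{\ast}$ that minimizes $\gamma$ minimizes the total cost simultaneously for every energy budget; this establishes the universality claim.

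Next I would reinterpret $\gamma_{\pi}$ as a standard weighted completion time objective, now scheduling in the weight dimension. Let $\pi'$ denote the reverse of $\pi$. Consider the auxiliary instance on a single machine of unit speed in which each job $j$ has processing time $w_j$ and weight $v_j$, and execute the jobs in the order $\pi'$. Then the completion time $C_j^{\pi'}$ of each job $j$ equals the cumulative sum of $w_{\pi'(k)}$ over all jobs preceding and including $j$ in $\pi'$, which by a direct index reversal coincides with $W_j^{\pi}$. Substituting this identity gives
\begin{equation*}
\gamma_{\pi}\;=\;\sum_{j=1}^{n} v_j\bigl(C_j^{\pi'}\bigr)^{(\alpha-1)/\alpha},
\end{equation*}
which is precisely the objective of the standard scheduling problem~\abc{1}{}{\sum w_j C_j^{(\alpha-1)/\alpha}} (with the ``weights'' of that problem being the $v_j$ and its ``processing times'' being the $w_j$), evaluated at the sequence~$\pi'$.

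Combining the two steps, minimizing $\gamma_{\pi}$ over all permutations $\pi$ is equivalent to minimizing $\sum_{j} v_j (C_j^{\pi'})^{(\alpha-1)/\alpha}$ over all permutations $\pi'$, and the desired universal sequence $\pi^{\ast}$ is obtained by reversing an optimal sequence for this auxiliary standard problem. The only genuine computation involved is the index-reversal identity $C_j^{\pi'}=W_j^{\pi}$, and no real obstacle is expected; the care required is mainly notational, namely to keep the two distinct roles of ``weight'' and ``processing time'' cleanly separated between the original speed-scaling problem and its weight-space reformulation.
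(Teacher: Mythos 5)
Your proposal is correct and follows essentially the same route as the paper: it uses Equation~\eqref{eq:scheduling-cost} to separate the energy budget from the permutation (giving universality) and then identifies $\gamma_{\pi}$ with the objective of \abc{1}{}{\sum w_j C_j^{(\alpha-1)/\alpha}} after swapping the roles of $v_j$ and $w_j$ and reversing the sequence. The only cosmetic difference is that you carry out the index-reversal identity $C_j^{\pi'}=W_j^{\pi}$ explicitly, where the paper appeals to the 2D-Gantt chart interpretation.
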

\begin{proof} 
  Equation~\eqref{eq:scheduling-cost} implies that the optimal job sequence is independent of the available energy budget~$E$  since it only plays a role in the factor outside the sum, which is independent of the permutation. Since the exponent $\alpha/(\alpha-1)$ is constant, the problem of finding the optimal sequence under an optimal energy-distribution reduces to finding the sequence that minimizes 
$$\sum_{j=1}^n  v_j \cdot \left(W_j^{\pi}\right)^{(\alpha-1)/\alpha}.$$
Now recall the reinterpretation that the 2D-Gantt chart view offers (see
Section~\ref{sec:weight-space}). Then $W_j^{\pi}$ is the completion weight of job
$j$ in a schedule that follows sequence $\pi$ in time-space (and the reverse
order in weight-space). We conclude that this problem is equivalent to the
scheduling problem in {\em weight-space} with {\em varying speed on the
weight-axis or general cost function in the weight-space}. This problem can be
directly translated into minimizing the total weighted completion time on a
machine with varying speed (or the desired form with a generalized cost
function) by re-interpreting
weight-space~as~time-space. 
We simply define
a new problem in time-space with processing times $v'_j=w_j$ and
weight~$w_j'=v_j$, where the objective is to find a permutation of jobs minimizing $\sum_{j=1}^n w_{\pi(j)}'
\cdot f\left(\sum_{k=1}^j v'_{\pi(k)}\right)$, for $f:x\rightarrow
 x^{(\alpha-1)/\alpha}$. This is a problem of the
desired type. By Section~\ref{sec:weight-space} it is easy to see that a
solution $\pi'$ to the new problem in time-space, has a corresponding solution~$\pi$ in the weight-space with same total cost; $\pi$ is the reverse of~$\pi'$. 
\end{proof}

Thus, the scheduling part of the speed-scaling scheduling problem reduces to a
problem which can be solved by our PTAS from Section~\ref{sec:PTAS-given-speeds}.
Since the cost function~$f(x)= x^{(\alpha-1)/\alpha}$ is concave for~$\alpha>1$,
the specialized PTAS in~\cite{stillerW10} also solves it. Combining
Theorems~\ref{thm:E_j} 
and~\ref{thm:opt-sequence} 
gives the main result.



\begin{theorem}\label{thm:continuous-PTAS}
  Let $\alpha\ge1$ be a (constant) rational number. There is a PTAS for the continuous speed-scaling and scheduling problem with
a given energy budget~$E$ for continuous speed and power function $P(s)=s^{\alpha}$. Indexing jobs in this order,
the $(1+\eps)$-approximate pareto curve describing the approximate scheduling
cost as a function of the available energy is given by the right-hand-side of
Equation~\eqref{eq:scheduling-cost}.

\end{theorem}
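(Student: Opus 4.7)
The plan is to assemble the theorem directly from the three pieces already in place: Theorem~\ref{thm:opt-sequence}, which reduces the problem to finding a good scheduling sequence independently of $E$; Theorem~\ref{thm:PTAS-given-speeds}, which gives a PTAS for that sequencing subproblem; and Theorem~\ref{thm:E_j}, which provides a closed-form optimal energy distribution for any fixed sequence. The overall algorithm is thus: (i) use the PTAS to compute a $(1+\delta)$-approximate sequence~$\pi$ for the transformed problem \abc{1}{}{\sum w_j f(C_j)} with $f(x)=x^{(\alpha-1)/\alpha}$ on a unit-speed machine (equivalently, an instance with $v_j'=w_j$, $w_j'=v_j$ in time-space); (ii) output $\pi$ reversed as the job order and assign energies via the formula $E_j = E \cdot v_j (W_j^{\pi})^{(\alpha-1)/\alpha}/\gamma_\pi$ from Theorem~\ref{thm:E_j}.

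To bound the approximation ratio, I would use Equation~\eqref{eq:scheduling-cost}, which expresses the optimal scheduling cost for a fixed sequence $\pi$ (under the optimal energy allocation) as $E^{-1/(\alpha-1)} \cdot \gamma_\pi^{\alpha/(\alpha-1)}$. Since $\pi$ is a $(1+\delta)$-approximation for minimizing $\gamma_\pi=\sum_j v_j(W_j^\pi)^{(\alpha-1)/\alpha}$, the resulting cost is at most $(1+\delta)^{\alpha/(\alpha-1)}$ times the optimum. With $\alpha$ a fixed rational constant, the exponent $\alpha/(\alpha-1)$ is an absolute constant, so choosing $\delta := \eps(\alpha-1)/(2\alpha)$ (for $\eps$ small enough) yields $(1+\delta)^{\alpha/(\alpha-1)}\le 1+\eps$. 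Hence the final solution is a $(1+\eps)$-approximation, and the running time remains polynomial because Theorem~\ref{thm:PTAS-given-speeds} (or the concave-function PTAS of~\cite{stillerW10}, which applies because $f$ is concave and non-decreasing when $\alpha>1$) has running time polynomial in~$n$ for every fixed $\delta$.

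For the Pareto-curve statement, the crucial observation is already in Theorem~\ref{thm:opt-sequence}: the optimal (and hence also the approximate) permutation does not depend on $E$, since $E$ enters Equation~\eqref{eq:scheduling-cost} only through the factor $E^{-1/(\alpha-1)}$ outside the sum. Thus the single sequence $\pi$ computed above, together with the closed-form energy assignment, parameterizes an approximate Pareto curve: for every budget $E$, plugging the $\gamma_\pi$ associated with our sequence into Equation~\eqref{eq:scheduling-cost} gives a value within $1+\eps$ of the Pareto-optimal scheduling cost at budget~$E$. No further computation is needed to describe the curve beyond the one invocation of the PTAS.

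The main obstacle I anticipate is the propagation of the approximation factor through the exponent $\alpha/(\alpha-1)$, which is dangerous when $\alpha$ is close to $1$. However, since $\alpha$ is fixed (and rational), this is a benign constant-factor rescaling of $\eps$, and no further structural argument is needed. A minor technical point is to verify that the transformation used in Theorem~\ref{thm:opt-sequence}, which re-interprets weight-space as time-space, produces an instance of the form covered by Theorem~\ref{thm:PTAS-given-speeds}: this is immediate because the implied speed function $s(t)$ corresponding to $f(x)=x^{(\alpha-1)/\alpha}$ is non-decreasing and accessible by oracle at polynomial encoding size (using that $\alpha$ is rational), so the efficiency hypothesis on the oracle is satisfied.
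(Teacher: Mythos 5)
Your proposal is correct and follows essentially the same route as the paper: reduce to \abc{1}{}{\sum w_j C_j^{(\alpha-1)/\alpha}} via Theorem~\ref{thm:opt-sequence}, absorb the exponent $\alpha/(\alpha-1)$ into a rescaling of $\eps$ (legitimate since $\alpha$ is a fixed constant), invoke the PTAS of Theorem~\ref{thm:PTAS-given-speeds}, and read the Pareto curve off Equation~\eqref{eq:scheduling-cost} using the budget-independence of the sequence. The only slight imprecision is your claim that the oracle values have polynomial encoding size: $x^{(\alpha-1)/\alpha}$ is generally irrational even for rational $\alpha$, so (as the paper notes) one must instead supply a $(1+\eps)$-approximate oracle $\tilde f$ computed by standard numerical methods, which suffices for a PTAS.
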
 
\begin{proof}
 The previous theorem argues that our energy problem is equivalent to~\abc{1}{}{\sum w_j
C_j^{(\alpha-1)/\alpha}} in terms of optimal solutions. However, approximation factors are not exactly preserved: as can be seen from Equation~\eqref{eq:scheduling-cost}, a solution with cost $Z$ for~\abc{1}{}{\sum w_jC_j^{(\alpha-1)/\alpha}} corresponds to a solution of cost $Z^{\frac{\alpha}{\alpha-1}}$ for the speed-scaling problem. Hence, a $\beta$-approximation algorithm for the static-speed problem yields an approximation factor of $\beta^{\alpha/(\alpha-1)}$ for the dynamic-speed problem. Since $\alpha\ge1$ is a constant, taking $\beta=(1+\e)$ yields an approximation factor of $(1+\e)^{\alpha/(\alpha-1)}= 1+\OO(\e)$ for the speed-scaling problem (for small enough $\e>0$). Therefore it suffices to give a PTAS for~\abc{1}{}{\sum w_j C_j^{(\alpha-1)/\alpha}}. 

To apply Theorem~\ref{thm:PTAS-given-speeds} it suffices to specify the oracle function $f$. In our case $f(x)=x^{(\alpha-1)/\alpha}$ might yield irrational numbers. However, since we aim for a PTAS it suffices to define a polynomial time oracle $\tilde{f}$ that approximates $f$ within a $1+\e$ factor. This can be done with standard techniques from numerical analysis, e.g., Newton's method~\cite{StoerBulirsch}.
\end{proof}


\section{Speed-scaling for discrete speeds}
\label{sec:discrete-speed}

In this section we consider a more realistic setting, where the machine speed can be chosen from a set of $\ka$ different speeds~$s_1 > \ldots > s_{\ka}>0$. We also allow to run the machine at zero speed, which we assume to induce zero power consumption. For this problem we resolve the complexity status and show that it is NP-hard even when $\ka=2$. For arbitrarily many speed states we give a PTAS, and if $\ka$ is constant an FPTAS. 


\subsection{A PTAS for discrete speeds}
\label{sec:ptas-discrete}

To derive our algorithm, we adapt the PTAS for scheduling on a machine with given varying speed (Section~\ref{sec:PTAS-given-speeds}) and incorporate the allocation of energy. Fortunately, many of the techniques to derive that PTAS, in particular the computation of sets $\mathcal{\tilde{F}}_u$, are independent of the speed of the machine. Thus we can use them without modifications. 


Consider the power function~$P(s)$ to be an arbitrary computable function. We adopt the same definitions of weight intervals $I_u$ and sets $\mathcal{F}_u$ as in Section~\ref{sec:PTAS-given-speeds}. For a subset of jobs~$S\in \mathcal{F}_u$ and a value $z\geq 0$, let $E[u,S,z]$ be the minimum total energy necessary for scheduling~$S$ such that all completion weights are in interval~$I_u$ or before 
and the scheduling cost is at most~$z$, i.e., $\sum_{j\in S} x_j \cdot C_j^w \leq z$ where $x_j$ is the execution time under some feasible speed assignment. 
The recursive definition of a state is as follows:
\[
E(u,S,z) = \min\{ E(u-1,S',z') + \APPROX_u(S\setminus S', z-z') : S'\in \mathcal{F}_{u-1}, S'\subseteq S\}.
\]

Here~$\APPROX_u(S\setminus S', z-z')$ is the minimum energy necessary for scheduling all jobs $j\in S\setminus S'$ with $C_j^w\in I_u$, such that their partial (rounded) cost $\sum_{j\in S\setminus S'} x_j (1+\e)^u$ is at most $z-z'$.
\begin{lemma}\label{lem:LP-energy}
  The value $\APPROX_u(S\setminus S', z-z')$ can be computed in polynomial time.
\end{lemma}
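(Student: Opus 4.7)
The plan is to recognize $\APPROX_u(S\setminus S',z-z')$ as the optimum of a polynomial-size linear program. The starting observation is that every job in $S\setminus S'$ has completion weight in the \emph{same} interval $I_u$, so the rounded cost it contributes to the scheduling objective is $(1+\e)^u\, x_j$. Hence the cost budget $\sum_{j\in S\setminus S'} x_j(1+\e)^u \le z-z'$ collapses to a single bound on the total execution time, namely $\sum_{j\in S\setminus S'} x_j \le T := (z-z')/(1+\e)^u$. Moreover, the rounded cost is completely insensitive to the order in which these jobs are scheduled within $I_u$, so it suffices to decide, for each job, how long the machine runs at each available speed while processing it.

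For each $j\in S\setminus S'$ and speed index $i\in\{1,\ldots,\ka\}$ I would introduce a variable $t_{ji}\ge 0$ representing the amount of time $j$ is processed at speed $s_i$. Since the algorithm controls the speed function and may switch among the discrete levels $\{s_1,\ldots,s_\ka\}$ at any moment, even during a single job, these variables capture all relevant degrees of freedom: $j$'s execution time equals $\sum_i t_{ji}$, the work completed on $j$ equals $\sum_i s_i t_{ji}$, and the energy consumed on $j$ equals $\sum_i P(s_i) t_{ji}$. Then $\APPROX_u(S\setminus S',z-z')$ coincides with the optimum of the linear program
\begin{align*}
 \min\ & \sum_{j\in S\setminus S'}\sum_{i=1}^{\ka} P(s_i)\, t_{ji} \\
 \text{s.t. } & \sum_{i=1}^{\ka} s_i\, t_{ji} = v_j \quad \forall j\in S\setminus S', \\
 & \sum_{j\in S\setminus S'}\sum_{i=1}^{\ka} t_{ji} \le T, \\
 & t_{ji}\ge 0,
\end{align*}
which has $|S\setminus S'|\cdot\ka$ variables and $|S\setminus S'|+1$ non-trivial constraints and is therefore solvable in polynomial time by any standard LP algorithm.

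To conclude, I would verify the two inequalities between the LP value and $\APPROX_u$. Any feasible schedule for the jobs in $S\setminus S'$ induces, by reading off the time each job runs at each speed, a feasible LP solution of identical energy and identical total execution time, so $\mathrm{LP}\le \APPROX_u$. Conversely, given a feasible LP solution one recovers a schedule by placing the jobs of $S\setminus S'$ in an arbitrary order inside $I_u$ and, during each job's processing window, operating the machine at speed $s_i$ for exactly $t_{ji}$ time units; this has the same energy and total time. The only point worth checking carefully is that splitting a single job's processing among several discrete speeds is permitted by the model, which it is, since the speed function is chosen by the algorithm and need not be constant on a job's processing interval. As a byproduct, an optimal basic feasible LP solution has at most $|S\setminus S'|+1$ non-zero entries, so all but at most one job runs at a single speed in the extracted schedule.
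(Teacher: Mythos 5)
Your proposal is correct and follows essentially the same approach as the paper: both reduce the computation to a polynomial-size LP by observing that, since every job in $S\setminus S'$ completes in $I_u$, the rounded cost constraint collapses to a single bound on total execution time. The only (immaterial) difference is that you use per-job variables $t_{ji}$, whereas the paper aggregates them into $\ka$ variables $\ell_i=\sum_j t_{ji}$ for the total time spent at each speed, which suffices because the objective and both constraints depend only on these sums.
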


\begin{proof} 
We set an LP computing $\APPROX_u(S\setminus S', z-z')$. Let the solution variable $\ell_i\geq 0$,~$i\in\{1,\ldots,\ka\}$, denote the length of the time interval in which the machine is running at speed~$s_i$. Consider the following LP, 
\begin{align}
  \nonumber
  \min &\sum_{i=1}^{\ka} \ell_i \cdot P(s_i)\\
   &\sum_{i=1}^{\ka} \ell_i \cdot s_i = \sum_{j\in S\setminus S'} v_j \label{eq:LP2}, \\
   &\sum_{i=1}^{\ka} \ell_i \cdot (1+\eps)^u \leq z-z' \label{eq:LP3},\\
   \nonumber 
   &\ell_i \geq 0.
\end{align}
Here~\eqref{eq:LP2} guarantees that the total processing volume $v(S'\setminus S)$ can be 
completed, and~\eqref{eq:LP3} that the total scheduling cost does not exceed~$z-z'$.
\end{proof}

We let the DP fill the table for~$u\in \{0,\ldots,\nu\}$ with $\nu=\lceil \log_{1+\e} \sum_{j\in J} w_j \rceil$ and $z\in [1,z_{\UB} ]$ for some upper bound such as $z_{\UB}=\sum_{j\in J} w_j \sum_{k=1}^j v_j/s_{\ka}$. Then among all end states $[\nu, J, \,\cdot\,]$ with value at most the energy budget~$E$ we choose the one with minimum cost~$z$. Then we obtain the corresponding $(1+\eps)$-approximate solution for energy $E$ by backtracking.

This DP has an exponential number of entries. However, we can apply results from Section~\ref{sec:PTAS-given-speeds} and standard rounding techniques to reduce the running time.

\begin{theorem}\label{thm:ptas-discrete}
  There is an efficient PTAS for minimizing the total scheduling cost for speed-scaling with a given energy budget.
\end{theorem}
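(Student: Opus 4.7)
The plan is to leverage essentially all of the machinery from Section~\ref{sec:PTAS-given-speeds} and simply replace the ``what is $C_j$ given $C_j^w$?'' computation (which came for free from the oracle in the given-speed setting) by ``what is the cheapest energy assignment that achieves the desired scheduling cost?''. The authors have already set up the DP with states $E(u,S,z)$ and shown (via the LP in Lemma~\ref{lem:LP-energy}) that the per-state transition $\APPROX_u(S\setminus S',z-z')$ is polynomially computable. The remaining work is twofold: (i) argue that we only need to explore a polynomial-size subset of states, and (ii) put the error analysis together to verify we still obtain a $(1+\e)$-approximation.

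For (i), I would restrict $S$ to range over $\tilde{\mathcal{F}}_u$ from Lemma~\ref{lm:compactFu}, which is legitimate because that lemma was proved \emph{without} using any property of the speed function (property (iii) of Lemma~\ref{lm:compactFu}), so the same $(1+\OO(\e))$-approximate weight-schedule structure is available in the discrete-speed setting. For the $z$-axis, I would discretize $z$ to integer powers of $(1+\e)$ in the range $[1,z_{\UB}]$ with $z_{\UB}=\sum_j w_j\sum_{k=1}^j v_k/s_\ka$, giving $\OO(\log_{1+\e} z_{\UB})=\text{poly}(n,1/\e,\text{input size})$ many values. In each transition, we round $z-z'$ down to the next power of $(1+\e)$ when feeding it to $\APPROX_u$; since the recursion has depth $\nu=\OO(\log_{1+\e}\sum_j w_j)$, the total multiplicative loss from this rounding compounds as $(1+\e)^{\nu}$ per partial cost level, but each job's cost is only charged once (at the level its completion weight lives in), so the loss is actually $(1+\e)$ overall.

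For (ii), I would chain the $(1+\OO(\e))$ factors: $(1+\e)$ from rounding weights to powers of $(1+\e)$; $(1+\OO(\e))$ from Weight Stretch and Stretch Intervals used to establish release-/deadline-weights (Lemma~\ref{lm:localization}); $(1+\OO(\e))$ from the compactification of $\tilde{\mathcal{F}}_u$; $(1+\e)$ from replacing the true completion weight by the upper endpoint $(1+\e)^u$ in $\APPROX_u$; and $(1+\e)$ from geometric rounding of $z$. The LP in Lemma~\ref{lem:LP-energy} is exact for the local subproblem: its fractional solution $(\ell_1,\dots,\ell_\ka)$ corresponds to a feasible schedule because the machine is allowed to switch among discrete speeds freely, so the LP and the true minimum energy coincide. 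I would then select the minimum $z$ in the column $E(\nu,J,\cdot)$ whose value does not exceed the budget $E$, and backtrack through the DP to recover the schedule and speed assignment.

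The main obstacle I anticipate is making the $z$-discretization clean: naively a multiplicative error is introduced at each level of the recursion, which could blow up to $(1+\e)^\nu$. The way to control this is to observe that the ``true'' cost of a weight-schedule equals $\sum_u (\text{cost incurred in interval }I_u)$ and that $\APPROX_u$ already over-approximates the interval-$u$ contribution by at most $(1+\e)$; rounding $z-z'$ geometrically only affects the \emph{threshold} against which we check feasibility, not the cost itself, so the cumulative loss on the optimal trajectory is $(1+\e)$, not $(1+\e)^\nu$. With that observation the running time is $\text{poly}(n,\log z_{\UB})\cdot 2^{\OO(\log^3(1/\e)/\e^2)}$, i.e., an efficient PTAS, and replacing $\e$ by $\e/C$ for a suitable constant $C$ yields the claimed $(1+\e)$-guarantee.
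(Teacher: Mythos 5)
Your overall architecture coincides with the paper's: reuse the speed-independent sets $\tilde{\mathcal{F}}_u$ of Lemma~\ref{lm:compactFu}, compute the per-interval energy via the LP of Lemma~\ref{lem:LP-energy}, discretize the cost coordinate $z$, and pick the smallest feasible $z$ in the final column. The one genuine gap is your claim that discretizing $z$ to integer powers of $1+\e$ loses only a single factor $1+\e$ overall. The difficulty is not that a job's cost is charged at more than one level; it is that the DP must represent the \emph{accumulated} cost $z_u=\sum_{u'\le u}c_{u'}$ (with $c_{u'}$ the cost incurred in $I_{u'}$) by a grid point $\tilde z_u$ at \emph{every} level $u$, while the transition into level $u$ needs an incremental budget $\tilde z_u-\tilde z_{u-1}\ge c_u$, since otherwise $\APPROX_u$ reports an inflated (or infinite) energy. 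Whenever $c_u>0$ you therefore need $\tilde z_u>\tilde z_{u-1}$, and two distinct powers of $1+\e$ above $\tilde z_{u-1}$ differ by at least $\e\,\tilde z_{u-1}$; so you are forced to accept $\tilde z_u\ge(1+\e)\tilde z_{u-1}$ even when $c_u$ is minuscule. On an instance where almost all cost sits in one early weight interval and each of $\Theta(\nu)$ later intervals carries a tiny positive cost, the optimal trajectory is only representable at $\tilde z_\nu\ge(1+\e)^{\Theta(\nu)}z^*$, and since $\nu=\lceil\log_{1+\e}\sum_j w_j\rceil$ this blow-up is polynomial in $\sum_j w_j$ --- not a PTAS. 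Your remark that rounding ``only affects the threshold, not the cost itself'' does not rescue the argument: the state with the small $z$-value simply does not survive the transition, so the algorithm cannot find it by backtracking.

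The fix is exactly the paper's: round $z$ to integer powers of $1+\delta$ with $\delta=(1+\eps)^{1/\nu}-1$, so that the per-level loss compounds to $(1+\delta)^\nu=1+\eps$ over the $\nu$ state transitions, while the number of grid points $\OO(\log_{1+\delta}z_{\UB})=\OO(\nu\log z_{\UB}/\eps)$ remains polynomial in the input size and $1/\eps$. With that single change the rest of your proof goes through and matches the paper's: the reuse of $\tilde{\mathcal{F}}_u$ (justified precisely by property (iii) of Lemma~\ref{lm:compactFu}), the LP for the per-interval energy, and the final selection and backtracking are all as in the paper.
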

\begin{proof}
  The DP computes a $(1+\eps)$-approximation in exponential time. In Lemma~\ref{lm:compactFu}, we showed how to reduce the exponential number of subsets in~$\mathcal{F}_u$ to a polynomial number at the cost of a factor $1+\OO(\eps)$ in the total scheduling cost. Recall that the sets $\tilde{\mathcal{F}}_u$ given by that lemma are independent of the speed of the machine. Therefore we can use these sets directly in our setting. 
  
  It remains to reduce the number of possible values of cost $z\in [0,z_{\UB}]$. At the cost of a factor $1+\eps$, we may round up in each state the scheduling cost to next integer power of $1+\delta$ with $\delta= (1+\eps)^{1/\nu}-1$. In each state transition of the DP, we loose up to a factor $1+\delta$ in the scheduling cost, which amounts to at most a factor $(1+\delta)^\nu = 1+\eps$ under~$\nu$ state transitions. When restricting to powers of $1+\delta$ then the number of different values in $z\in [0,z_{\UB}]$ is bounded by $\OO(\log_{1+\delta}z_{\UB})  = 
\OO(\nu\cdot \log z_{\UB} / \eps )$. Thus, the number of states in the table is polynomial. We conclude that the algorithm runs in polynomial time.
\end{proof}

\subsection{Speed-scaling with discrete speeds is NP-hard}

We show that speed-scaling for discrete speeds is NP-hard. We provide a reduction based on the problem of minimizing the total weighted tardiness of jobs with a common due date, $1|d_j=d|\sum w_j T_j$, which is known to be NP-hard~\cite{yuan92}. Here, $T_j=\max\{C_j - d, 0\}$ denotes the tardiness of job $j$. We use the following generalization of this result for our reduction.

\begin{lemma}\label{thm:2convex-hard}
  The problem of minimizing $\sum w_jf(C_j)$ on a single machine of unit speed is NP-hard even when $f$ is increasing, convex and piecewise linear with only one breakpoint.
\end{lemma}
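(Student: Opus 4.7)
The plan is to reduce from $1|d_j=d|\sum w_jT_j$, minimizing total weighted tardiness on a single machine with a common due date, which is NP-hard by~\cite{yuan92}. Given such an instance with integer processing times $p_j$, integer weights $w_j$, and integer due date $d$, I construct an instance of $\abc{1}{}{\sum_j w_jf(C_j)}$ on the same jobs (so $v_j := p_j$) and same weights, with cost function
\begin{equation*}
f(x) \;=\; \begin{cases} \e\,x & \text{if } x \le d,\\ \e\,d + (x-d) & \text{if } x > d,\end{cases}
\end{equation*}
for a rational parameter $\e\in(0,1)$ to be specified. Both pieces agree at $x=d$, so $f$ is continuous; the slope jumps from $\e$ up to $1$, so $f$ is convex and piecewise linear with exactly one breakpoint; and $f$ is strictly increasing because both slopes are positive. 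Hence $f$ meets all the structural requirements of the lemma.

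The key calculation is that for any schedule,
\begin{equation*}
\sum_j w_j f(C_j) \;=\; \e\sum_j w_j C_j \,+\, (1-\e)\sum_j w_j T_j,
\end{equation*}
where $T_j=\max\{C_j-d,0\}$. Let $K:=(\sum_j w_j)(\sum_j p_j)$, a trivial upper bound on $\sum_j w_jC_j$ for every schedule of the given instance; $K$ has polynomial encoding size. I set $\e:=1/(2K+1)$, which is a rational of polynomial bit-size. Since $p_j,w_j,d$ are integers, the value $\sum_j w_j T_j$ is a nonnegative integer for any schedule, so two distinct such values differ by at least $1$.

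It then follows that if schedule $A$ has strictly smaller $\sum_j w_j T_j$ than schedule $B$,
\begin{equation*}
\sum_j w_j f(C_j^A) - \sum_j w_j f(C_j^B) \;\le\; \e\,K - (1-\e) \;<\; 0
\end{equation*}
by the choice of $\e$; therefore any minimizer of $\sum_j w_jf(C_j)$ is also a minimizer of $\sum_j w_j T_j$. A polynomial-time algorithm for $\abc{1}{}{\sum w_jf(C_j)}$ in this class of $f$ would thus solve $1|d_j=d|\sum w_jT_j$ in polynomial time, giving the claimed NP-hardness. The only subtle point is picking $\e$ small enough that the weighted completion-time term can never overturn a unit gap in weighted tardiness while keeping $\e$ of polynomial encoding size; the crude bound $K$ handles both requirements at once, and no other step presents a real obstacle.
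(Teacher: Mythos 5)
Your proposal is correct and follows essentially the same route as the paper: the same reduction from $1|d_j=d|\sum w_jT_j$, the same two-piece cost function (your $f$ is exactly the paper's $f_\e$), and the same idea of choosing an inverse-polynomial $\e$ so that the $\e\sum_j w_jC_j$ contribution can never overturn a unit gap in integer-valued weighted tardiness. The only cosmetic differences are the form of the decomposition (you write $\e\sum_j w_jC_j+(1-\e)\sum_j w_jT_j$, the paper isolates $\sum_j w_jT_j$ plus an explicitly bounded remainder) and that you conclude via coincidence of minimizers while the paper states a threshold equivalence on the decision versions.
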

\begin{proof}
  Let $\e\ge0$ and define the cost function
  \[
  f_\e (x) = \begin{cases}
    \e\cdot x  & \text{ if } 0 \le x < d,\\
    x - d + \e d & \text{ if } d \le x.
  \end{cases}
  \]
Note that $T_j = f_0 (C_j)$ is the tardiness of job $j$. Now we show that, for $\e>0$ small enough, minimizing $\sum_j w_j T_j$ is equivalent to minimizing $\sum_j w_j f_{\e}(C_j)$. 

Let $k\in \mathbb{N}$, and assume that $w_j, p_j$ and $d$ are natural numbers for all $j$. It is known that the problem of deciding whether there exists a schedule with $\sum_j w_j T_j\le k$ is NP-hard~\cite{yuan92}. Now notice that
\begin{align*}
  \sum_j w_j f_{\e}(C_j) & = \sum_{j: C_j < d} w_j \e C_j  + \sum_{j: C_j \ge d} (C_j -d + \e d) w_j\\
  & = \e \cdot\left( \sum_{j: C_j < d} w_j C_j + \sum_{j: C_j \ge d} d w_j \right) + \sum_j w_j f_0(C_j). 
\end{align*}
Defining $\e= 1/(d \sum_j w_j)\le 1$ (which can be described with polynomially many bits) we obtain that
\[
0 \le \sum_j w_j f_{\e}(C_j) - \sum_j w_j f_{0}(C_j) = \e \cdot\left( \sum_{j: C_j < d} w_j C_j + \sum_{j: C_j \ge d} d w_j \right) < \e d \sum_j w_j \le 1.
\]
Therefore $\sum_j w_j f_{0}(C_j) \le k$ if and only if $\sum_j w_j f_{\e}(C_j) \le k+1$. We conclude that minimizing $\sum_j w_j f_{\e}(C_j)$ is NP-hard, where $\e\le 1$ is considered as part of the input. 

\end{proof}

Now we can prove the main result.

\begin{theorem}\label{thm:2speeds-hard}
  The problem of minimizing $\sum_j w_jC_j$ on a single machine for discrete speeds is NP-hard, even if the number of available power levels is $2$.
\end{theorem}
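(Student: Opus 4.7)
The plan is to reduce from the NP-hard problem, established by Lemma~\ref{thm:2convex-hard}, of minimizing $\sum_j w_j f(C_j)$ on a unit-speed machine with $f$ increasing, convex, and piecewise linear with a single breakpoint. Given such an instance, with jobs $\{1,\ldots,n\}$ of processing times $p_j$ and weights $w_j$ and a cost function $f$ having slopes $0 < m_1 < m_2$ meeting at a breakpoint $d^{*} > 0$, I build a two-speed scheduling instance with the same jobs (setting $v_j = p_j$), speeds $s_1 = 1/m_1 > s_2 = 1/m_2 > 0$, power exponent $\alpha = 2$ (so that all numbers stay rational), and energy budget $E = d^{*}(s_1 - s_2) + V s_2$ where $V = \sum_j v_j$. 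This is clearly a polynomial-time reduction.

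The core of the argument is to show that this construction preserves the optimal value. First, for any feasible schedule let $V_1$ denote the total work processed at speed $s_1$ and $V_2 = V - V_1$ that at speed $s_2$. The energy constraint $V_1 s_1 + V_2 s_2 \le E$ combined with $V_1 + V_2 = V$ gives $V_1(s_1 - s_2) \le E - V s_2 = d^{*}(s_1 - s_2)$, i.e., $V_1 \le d^{*}$; the budget is chosen precisely to cap the speed-$s_1$ work at $d^{*}$. Idle time may be ignored, since it only delays jobs while consuming no energy.

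Next, via a standard exchange argument, I would assume that an optimal schedule places all speed-$s_1$ intervals before all speed-$s_2$ intervals: front-loading the high-speed time only increases the cumulative-work function $W(t)$ pointwise, weakly decreasing every completion time. Under such an $s_1$-first profile with $V_1$ units at speed $s_1$, the completion time of the $j$-th processed job equals $f_{V_1}(\sum_{k \le j} v_{\pi(k)})$, where $f_{V_1}(v) = v/s_1$ for $v \le V_1$ and $f_{V_1}(v) = V_1/s_1 + (v - V_1)/s_2$ otherwise. A direct computation shows that at every fixed $v$, $f_{V_1}(v)$ is non-increasing in $V_1$, so the optimum saturates $V_1 = d^{*}$. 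With this choice, $f_{V_1}$ coincides exactly with the input function $f$.

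Combining both structural reductions, the 2-speed objective $\sum_j w_j C_j$ equals $\sum_j w_j f(\sum_{k \le j} p_{\pi(k)})$, the objective of the original unit-speed instance under permutation $\pi$. Hence the optimal values match and NP-hardness transfers. The main conceptual hurdle is pinning down the speed profile through the saturation $V_1 = d^{*}$ together with the $s_1$-first ordering; once both are in place, the piecewise-linear cost function $f$ reappears intact and the correspondence is immediate. The same reduction works for any fixed rational $\alpha > 1$ by replacing $s_i$ with $s_i^{\alpha-1}$ in the budget formula.
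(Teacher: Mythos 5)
Your proposal is correct and follows essentially the same route as the paper: reduce from Lemma~\ref{thm:2convex-hard}, set the two speeds to the reciprocals of the slopes of $f$, and calibrate the energy budget so that exactly $d$ units of work can be done at the fast speed, after which an interchange/saturation argument pins the speed profile and makes the completion times coincide with $f$ applied to the unit-speed completion times. The only differences are cosmetic (you fix $\alpha=2$ and work with general slopes, whereas the paper keeps $\alpha$ general and uses the specific slopes $\eps$ and $1$), so no further comparison is needed.
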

\begin{proof}
The problem with $k>2$ speed states can be reduced to the case with $2$ speed states, by adding dummy states of arbitrarily slow speed. Therefore, we prove hardness of the case of two speeds $s_1>s_2$.  

Consider a scheduling instance on a unit-speed processor with the objective of minimizing $\sum_j w_j f_{\e} (C_j)$, where $f_{\e}$ is defined in the proof of the previous lemma. We define an equivalent scheduling instance for minimizing $\sum_j w_j C_j$ on a machine with two possible speed states. In the new instance, the job set is the same and the values $w_j$ and $v_j$ for each job $j$ are also preserved. Let~$s_1=1/\e$ and~$s_2=1$. The total energy budget is~$E=V+d(1/\eps^{\alpha-1}-1)$, where $V$ denotes the total work volume, $\sum_j v_j$. 
A simple interchange argument shows that in an optimal solution the machine runs at decreasing speeds. The time point when the speeds changes is uniquely defined by the energy budget and the total work volume. In this case, the machine runs at speed~$s_1$ until~$\x=\eps d$ and then it runs at speed~$s_2$. Also, the total work volume finished by~$\x$ is~$\x \cdot s_1 = d$.

Consider now a schedule without idle time on a machine with the speed profile just described. Assume that by relabeling the jobs the completion times satisfy that $C_1< C_2 < \ldots < C_n$. Consider scheduling the jobs in a unit speed machine using the same permutation of jobs. In this new schedule the completion times are $C_j' = \sum_{k\le j} v_k$ for all $j$. If it easy to check that $f_\e(C_j')= C_j$. We conclude that the problem of minimizing $\sum_j w_j f_\e(C_j')$ is equivalent to minimizing $\sum_j w_j C_j$ on a machine that has speed $1/\e$ in interval $[0,\e d]$ and speed $1$ afterwards until all jobs are done. By Theorem~\ref{thm:2convex-hard} both problems are NP-hard, wich concludes the proof. 
\end{proof}

\subsection{FPTAS for a constantly many discrete speed-states}
\label{sec:fptases}

Consider the setting where the number of different (non-zero) speeds $\ka$ is constant. We give an FPTAS for this case. Again we will use the dual scheduling view and construct a solution in weight-space. Notice that in this problem setting, jobs may run at more than one speed. We call those jobs {{\em split jobs}. Our approach is as follows: We first use enumeration to determine split jobs, their position in the weight-axis, and the speeds at which they shall run. Then we design an exponential-time dynamic program that fills the remaining jobs running at a single speed into the gaps left between the split jobs. We show then how to reduce the running time of this method to polynomial time by rounding and state-cleaning and loosing only a small factor in the scheduling cost.

Using a standard scaling argument, we may assume w.l.o.g.  that all job weights have integer values.  

\subsubsection{Guessing split jobs and partition of the weight-axis} \label{sec:split-jobs}

Recall that in any optimal solution the speed of the machine is decreasing over time. Thus there are at most $\ka-1$ many split jobs each running at a constant number of different speeds. We show that by restricting the set of possible completion weights to a polynomial size, we may guess in polynomial time the subset of split jobs, the speeds at which each of them is running,  and their completion weights at an affordable loss in the total cost. The placement of split jobs in the weight-axis leads naturally to a partition of the weight-axis into (at most)~$\ka$ intervals to which the remaining non-split jobs shall be assigned.

\begin{lemma}\label{lem:guess-compl-weight}
  By increasing the scheduling cost by at most a factor~$1+\eps$, we may assume that the completion weights of split jobs are integer powers of $1+\beta$ for $\beta= (1+\eps)^{1/n}-1$. 
\end{lemma}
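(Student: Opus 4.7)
The plan is to start from an optimal weight-schedule $\schedule^*$ and modify it iteratively by rigidly shifting its split jobs upward in weight-space, rounding each completion weight up to the nearest integer power of $1+\beta$.

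First I would list the split jobs of $\schedule^*$ in increasing order of completion weight, say $k_1,\ldots,k_s$. Because any optimal solution runs at non-increasing speeds, the number of split jobs is at most $\ka-1\le n$, so $s\le n$. I would then process them in this order: at iteration $i$, set $\hat{C}_{k_i}^w$ to be the smallest integer power of $1+\beta$ that is at least the \emph{current} completion weight of $k_i$, define $\delta_i$ as the corresponding rounding gap, and shift every job currently located strictly above $k_i$ upward in weight-space by exactly $\delta_i$ (so that $k_i$ itself ends up precisely at $\hat{C}_{k_i}^w$). The shift is rigid on the jobs above $k_i$, so their mutual gaps are preserved; $k_i$ only moves upward, so no job below it is touched. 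Hence feasibility is maintained throughout, and at the end every split job sits on an integer power of $1+\beta$.

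The key technical step is an induction showing $\hat{C}_{k_i}^w\le (1+\beta)^i\, C_{k_i}^w$ for every $i$. At iteration $i$, the current completion weight of $k_i$ equals $C_{k_i}^w+\sum_{j<i}\delta_j = C_{k_i}^w + (\hat{C}_{k_{i-1}}^w-C_{k_{i-1}}^w)$, since the cumulative shifts accumulated below $k_i$ coincide with the displacement of $k_{i-1}$. Applying the inductive hypothesis together with $C_{k_{i-1}}^w\le C_{k_i}^w$, this value is at most $(1+\beta)^{i-1}C_{k_i}^w$; one final rounding adds another factor $1+\beta$. Since $s\le n$, I obtain $\hat{C}_{k_i}^w\le (1+\beta)^n C_{k_i}^w = (1+\eps)\, C_{k_i}^w$ for every split job, and an analogous telescoping estimate (using the largest-indexed split job $k_{i^*}$ still strictly below a given non-split job $j$, for which $C_{k_{i^*}}^w\le C_j^w$) yields the same bound for every non-split job.

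Finally I would note that this purely weight-space modification does not change the job order or the speed assignment of $\schedule^*$, so the execution times $x_j$ are preserved; hence the new weight-schedule cost is $\sum_j x_j\,\hat{C}_j^w \le (1+\eps)\sum_j x_j\, C_j^w = (1+\eps)\cdot\textup{cost}(\schedule^*)$. The main obstacle will be the inductive bound, because the rounding shifts accumulate across iterations: the argument only closes because processing split jobs in order of increasing completion weight allows the previous displacement $\hat{C}_{k_{i-1}}^w-C_{k_{i-1}}^w$ to be absorbed into a single multiplicative factor via $C_{k_{i-1}}^w\le C_{k_i}^w$, preventing an additive blow-up that would otherwise ruin the telescoping.
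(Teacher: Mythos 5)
Your proof is correct and rests on the same idea as the paper's: charge one factor of $1+\beta$ per split job, so that over at most $\ka-1\le n$ split jobs the total loss is $(1+\beta)^n=1+\eps$. The paper realizes this slightly more directly by applying the global Weight Stretch procedure once per split job and rounding that job's completion weight \emph{down} into the idle weight so created (via Observation~\ref{obs:ForwardCw}), which avoids the need for your telescoping induction on accumulated upward shifts; both arguments are sound.
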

\begin{proof}
This follows by multiplying the completion weight of each job by $1+\beta$ as in the Weight Stretch procedure; see Section~\ref{sec:PTAS-given-speeds}. Each time we do this we can decrease the completion weight of one split job to a integer power of $1+\beta$. This increases the total cost by a factor $1+\beta$ each time, which amounts to at most a factor $(1+\beta)^{\ka-1} < 1+\eps$ for at most~$\ka-1<n$ split jobs. 
\end{proof}


\begin{lemma} \label{lem:guess-split-jobs}
By loosing at most a factor $1+\eps$ in the scheduling cost, we can enumerate in time $\OO(n^{2\ka-2}\cdot \nu^{\ka-1})$, with $\nu=\lceil \log_{1+\e} \sum_{j\in J} w_j \rceil$,  the set of split jobs, the speeds at which they run, and their completion weight.
\end{lemma}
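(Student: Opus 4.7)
The plan is to exploit the monotonicity of the optimal speed profile, combine it with the completion-weight rounding already provided by Lemma~\ref{lem:guess-compl-weight}, and then count. First I would observe that, since the power function $P(s)=s^{\alpha}$ is convex and we minimize total weighted completion time, a standard exchange argument forces every optimal schedule to run the machine at nonincreasing speed over time (swapping two adjacent pieces of work that are out of order reduces cost). With only $\ka$ distinct speed values available, the time axis is therefore partitioned into at most $\ka$ consecutive ``phases'' of fixed speed, separated by at most $\ka-1$ transition points. Since a job is split only when a transition occurs strictly inside it, at most $\ka-1$ jobs are split.

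Next I would enumerate, for each of the at most $\ka-1$ split jobs, three pieces of information: its identity, the contiguous range of speed levels it uses, and its completion weight. Picking the identities of the split jobs amounts to choosing an ordered tuple of length at most $\ka-1$ from $n$, contributing $\OO(n^{\ka-1})$ possibilities. The range of speeds used by a single split job is determined by a pair of indices in $\{1,\dots,\ka\}$, so $\OO(\ka^{2})=\OO(1)$ options per split job and $\OO(1)$ in total (since $\ka$ is constant). For the completion weights I would invoke Lemma~\ref{lem:guess-compl-weight} to round them to integer powers of $1+\beta$ with $\beta=(1+\e)^{1/n}-1$, losing only a factor $1+\e$ in the scheduling cost. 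The number of such powers in $[1,\sum_{j} w_j]$ is
\[
\log_{1+\beta}\!\Bigl(\sum_{j} w_j\Bigr)\;=\;\Theta\!\Bigl(n\log_{1+\e}\sum_{j} w_j\Bigr)\;=\;\Theta(n\nu),
\]
which yields at most $\OO((n\nu)^{\ka-1})$ choices for the tuple of completion weights. Multiplying the three contributions gives a total enumeration of $\OO(n^{\ka-1})\cdot\OO((n\nu)^{\ka-1})\cdot\OO(1)=\OO(n^{2\ka-2}\,\nu^{\ka-1})$ guesses, matching the claimed bound. Correctness is immediate: at least one guess agrees with the rounded optimum, and the only cost incurred is the $1+\e$ factor from Lemma~\ref{lem:guess-compl-weight}.

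The step I expect to be most delicate is arguing that the monotone-speed structure is compatible with the rounding of completion weights, and that the guessed speed ranges of consecutive split jobs are mutually consistent: together with the intermediate single-speed stretches that will house the non-split jobs, they must concatenate into a valid nonincreasing sequence of speeds drawn from $\{s_{1}>\dots>s_{\ka}\}$ with no levels skipped erroneously. Both issues can be handled by restricting the enumeration to guesses whose speed ranges glue together into a prefix of the decreasing speed order, but this consistency check must be spelled out carefully before the downstream dynamic program is allowed to rely on the guessed skeleton.
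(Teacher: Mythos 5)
Your proposal is correct and follows essentially the same route as the paper: both use the monotonicity of the optimal speed profile to bound the number of split jobs by $\ka-1$, enumerate their identities and (consistent, consecutive) speed ranges in $\OO(n^{\ka-1})$ ways, and combine this with the $\OO(n\nu)$ rounded completion-weight candidates per split job from Lemma~\ref{lem:guess-compl-weight} to reach $\OO(n^{2\ka-2}\nu^{\ka-1})$. The only difference is presentational — you spell out the exchange argument for nonincreasing speeds and the gluing consistency of the guessed speed ranges, which the paper treats as given from the surrounding discussion.
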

\begin{proof}
  The speed of the machine is decreasing and jobs run non-preemptively. Hence, a split job will run at two or more decreasing speeds $s_i>s_{i+1}>\ldots s_{i'}$ while there is no other job running at speed~$s_k$ with $i<k<i'$. However, not all available speeds might be used.  There are $\OO(n^{\ka-1})$ many choices for selecting the set of (at most)~$\ka-1$ split jobs and the speeds at which each of them is running. 

Given a set of jobs we enumerate all possible completion weights for split jobs. Thereby, we restrict to powers of $1+\beta$ loosing at most a factor $1+\e$ in the cost (Lemma~\ref{lem:guess-compl-weight}). There are $\lceil \log_{1+\beta}\sum_jw_j\rceil = \lceil \log_{(1+\e)^{1/n}}\sum_jw_j\rceil \in \OO(n\cdot \nu)$ many possible completion weights per job. Thus, in total we have to consider $\OO(n^{\ka-1}\cdot (n\nu)^{\ka-1})$
many choices for split jobs with their speeds and positions in the weight-axis.
\end{proof}

Consider a fixed choice for split jobs $j_1,\ldots,j_{\ka-1}$ and their completion weights~$C_{j_1}^w < C_{j_2}^w < \ldots < C_{j_{\ka-1}}^w$. For convenience we add dummy jobs with zero-weight and -work volume if there are less than $\ka-1$ split jobs. The set of $\ka-1$ split jobs partitions the weight-space into~$\ka$ subintervals~$I_1,\ldots,I_{\ka}$ of idle weight between the placed split jobs. More precisely, $I_i=[\w_i,\w_{i+1}-w_{j_i}]$ where $\w_i=C_{j_{i-1}}^w$, for~$i\in\{2,\ldots,\ka\}$, and~$\w_1=0$. Let the last interval~$I_{\ka}$ be bounded from above by $\sum_{j\in J}w_j-\w_{\ka}$. Intervals may also be empty.

To obtain a schedule, we have to fill the remaining jobs non-preemptively in these idle-weight intervals (keeping the split jobs where they are). All jobs in one subinterval will run at the same speed. Again, recall that the speeds are only decreasing in time which means that they are increasing in weight-space.  We simply guess the uniform speed~$s'_i$ associated with~$I_i$ such that $s'_1 \le s'_2\le \ldots \le s'_{\ka}$ in accordance with the speeds of the split jobs between intervals. I.e., each speed~$s$ for a split jobs~$j_i$ separating intervals~$I_i$ and $I_{i+1}$ must satisfy~$s'_i \leq s \leq s'_{i+1}$. Notice that because of the dummy jobs there might be more than one interval with the same speed.

\begin{corollary}
  By losing at most a factor~$1+\eps$ in the scheduling cost we can reduce in time $\OO(\ka^{\ka}\cdot n^{2\ka-2}\cdot \nu^{\ka-1})$ the speed-scaling problem to non-preemptive scheduling in weight-space in a given set of available idle-weight intervals~$I_1,I_2,\ldots,I_{\ka}$ and speed~$s'_i$ for jobs being assigned to~$I_i$.
\end{corollary}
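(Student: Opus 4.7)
The plan is to take the enumeration provided by Lemma~\ref{lem:guess-split-jobs} and refine it by additionally guessing, for each guessed split-job configuration, the uniform speed of the non-split jobs in each of the $\ka$ idle-weight intervals it induces. By Lemma~\ref{lem:guess-split-jobs}, at a multiplicative cost of $1+\eps$ in the scheduling cost, we can iterate over all candidate choices of the (at most $\ka-1$) split jobs, the decreasing speed sequence each of them uses, and their completion weights, using $\OO(n^{2\ka-2}\cdot \nu^{\ka-1})$ total iterations. Each such iteration fixes the positions $\w_1<\w_2<\ldots<\w_{\ka}$ and hence the idle-weight intervals $I_1,\ldots,I_{\ka}$ described in the paragraph preceding the corollary.

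Next, for every such iteration I would enumerate the uniform speed $s'_i\in\{s_1,\ldots,s_\ka\}$ associated with $I_i$. Since the machine speed is non-increasing in time, and therefore non-decreasing in weight, the admissible tuples $(s'_1,\ldots,s'_\ka)$ are monotone non-decreasing sequences of length $\ka$ over $\ka$ speed levels; moreover, any split job $j_i$ placed between $I_i$ and $I_{i+1}$ imposes the two-sided constraint $s'_i\le s\le s'_{i+1}$ for each speed $s$ it uses. Without exploiting this structure, the number of assignments to enumerate is trivially bounded by $\ka^{\ka}$, which suffices for the stated running time; the monotonicity and consistency constraints only prune this set.

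Putting these two enumerations together, the total number of guesses is
\[
\OO(\ka^{\ka}\cdot n^{2\ka-2}\cdot \nu^{\ka-1}),
\]
and no additional loss in the objective is incurred because the speed enumeration is exact. After fixing the split jobs, their speeds, their completion weights, and the speed $s'_i$ of each idle-weight interval, the only remaining task is to place the non-split jobs non-preemptively into the intervals $I_1,\ldots,I_{\ka}$ using the corresponding speeds, which is precisely the non-preemptive weight-space scheduling problem claimed in the corollary.

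The only delicate point I anticipate is bookkeeping around the dummy jobs introduced when strictly fewer than $\ka-1$ actual split jobs exist, and the verification that the admissible speed tuples $(s'_1,\ldots,s'_\ka)$ respect both the monotonicity of the speed profile in time and the boundary speeds imposed by the guessed split jobs; but this is a notational check rather than a genuine technical obstacle, and the crude bound $\ka^{\ka}$ absorbs it.
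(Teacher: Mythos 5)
Your proposal is correct and follows essentially the same route as the paper: the factor $\OO(n^{2\ka-2}\cdot\nu^{\ka-1})$ and the $1+\eps$ loss come from Lemma~\ref{lem:guess-split-jobs}, and the additional factor $\ka^{\ka}$ comes from exactly (hence losslessly) enumerating the non-decreasing (in weight-space) speed tuples $(s'_1,\ldots,s'_{\ka})$ consistent with the split jobs' speeds. The monotonicity/boundary constraints and the dummy-job bookkeeping are, as you note, only pruning and notation, and the crude $\ka^{\ka}$ bound is all that is claimed.
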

  
  \subsubsection{Dynamic program} \label{sec:DP}
  
We construct a DP that finds a partition of the set of non-split jobs into $\ka$ subsets each of which is assigned to an individual interval~$I_i$. The jobs in each individual set are scheduled according to Reversed Smith rule in weight-space, that is, in non-decreasing order of ratios~$w_j /v_j$. Let all jobs be indexed in this order. 

The dynamic program generates a state~$[k,z,y_1,\ldots,y_{\ka}]$ if there is a feasible
schedule of jobs $1,\ldots,k$, in which the  total weight scheduled in interval~$I_i$~(excluding the split job)  is $y_i$. The total scheduling cost~(including split jobs) is~$z:=\sum_{j=1}^kx_jC_j^w$, with~$x_j=v_j/s'_{i}$ being the execution time of a job~$j$ in interval~$i$.  The value of the state~$[k,z,y_1,\ldots,y_{\ka}]$ is the minimum energy that is necessary for obtaining such a schedule. The dynamic program starts with the states~$[0,z,0,\ldots,0]$. For each~$z$-value a linear program computes the minimum energy that is necessary to obtain this scheduling value when scheduling only the set of split jobs~$J_s$. 
It determines the power assigned to each split job and thus their actual execution times. Let~$\ell_{ji}$ be the amount of time that split job~$j\in J_s$ is running at a valid speed~$s'_i$ (given by Lemma~\ref{lem:guess-split-jobs}).
\begin{align*}
  \min & \sum_{j \in J_s} \sum_{i=1}^{\ka} \ell_{ji} P(s'_i)\\
  & \sum_{j\in J_s} C_j^w \cdot \sum_{i=1}^{\ka} \ell_{ji} \leq z, &\\
  & \sum_{i=1}^{\ka} \ell_{ji} s'_i  = v_j\, \qquad &\text{for all }j\in J_s,\\
  & \ell_{ji} \geq 0 \qquad & \text{for all } j\in J_s, i\in \{1,\ldots,\ka\},\\
  & \ell_{ji} = 0 \qquad & \text{for all } j\in J_s,\, s'_i \textup{ not valid for } j \,.
\end{align*}

After computing the starting states, the DP computes all states by moving from
any state $[j-1,z,y_1,\ldots,y_{\ka}]$ to at most $\ka$ new states~$[j,z',y_1',\ldots,y'_{\ka}]$ by assigning job~$j$ to intervals~$I_i$ for $i\in\{1,\ldots,\ka\}$. Then
\begin{eqnarray}
  z'=z + \frac{v_j}{s'_i}\cdot \left(\w_i+y_i+w_j\right) \   \textup{and}\  y_i' =y_i+w_j \  \textup{and}\  y_{i'}'=y_{i'} \textup{ for } i'\neq i\,, \label{eq:before-k-energy}
\end{eqnarray}
provided that $y_i'\leq |I_i|-w_{j_i}$, where $j_i$ is the $i$th split job. 
The value of the new state is
\begin{equation}
  E[j,z,y_1,\ldots,y_{\ka}] = E[j-1,z,y_1,\ldots,y_{\ka}] + \frac{v_j}{s'_i}\cdot P(s'_i). \label{eq:DP-energy}
\end{equation}
If there exists another state with smaller energy value $E'[j,z,y_1,\ldots,y_{\ka}] < E[j,z,y_1,\ldots,y_{\ka}]$ we discard the new one with larger energy value.

An optimal schedule can be obtained by finding a state $E[n,z,y_1,\ldots,y_{\ka}] \leq E$ with minimum
$z$ and backtracking from that state. Since the $z$-values are bounded by
$z_{UB}:=\sum_{j=1}^nw_j(\sum_{\ell=1}^j v_{\ell}/s_{\ka})$ and the $y_i$-values are bounded by $|I_i|$,
the running time of this dynamic programming algorithm is $\OO(n\cdot z_{UB}\cdot \max_i|I_i|^{\ka})$.

\subsubsection{Rounding}  \label{sec:rounding}

In a fully polynomial-time algorithm, we can neither afford to consider all possible objective values~$z$, nor can we consider all possible $y_i$-values. 

Consider first the number of possible values~$z$ of scheduling cost. We round them the same way as we have done in the PTAS for an arbitrary number of discrete speeds in Theorem~\ref{thm:ptas-discrete}. Given the upper bound on the cost, $z_{\UB}=\sum_{j=1}^nw_j(\sum_{\ell=1}^j v_{\ell}/s_{\ka})$, we can reduce the number of possible values in $z\in [0,z_{\UB}]$ to $\OO(\nu\cdot \log z_{\UB} / \eps) = \OO(\nu\cdot n / \eps^2)$ by restricting to powers of $1+\delta$ with $\delta= (1+\eps)^{1/\nu}-1$ and lose only a factor~$1+\eps$ in the scheduling cost. Recall that $\nu= \lceil \log_{1+\e}\sum_{j\in J} w_j\rceil$. Let~DP$_z$ denote this dynamic program that rounds only the scheduling cost.


We now take care of the $y$-values. The idea is to reduce the number of states
by removing those with the same (rounded) objective value and nearly the same
total weight in all intervals~$I_i$. Among them, we store those that require the minimum amount of energy. To do so, we use the same discretization of the weight-axis as for guessing the completion weights of split jobs (Section~\ref{sec:split-jobs}). When the DP adds a job~$j$ to some interval~$I_i$ and updates the total weight~$y_i'=y_i+w_j$ (see Equation~\eqref{eq:before-k-energy}) then we store only the information on $y_i'$ rounded down to the closest integer power of $1+\beta$, with $\beta = (1+\eps)^{1/n}-1$. Now, among all states with the same rounded values~$z,y_1,\ldots,y_{\ka}$ we store the one with minimum energy consumption. Let~DP$_{z,y}$ denote the modified dynamic program that rounds $z$ and $y$-values.

Rounding down the $y_i$-values will incur an error in the computation of scheduling cost; more precisely, interpreting the solution of DP$_{z,y}$ as a job (weight) assignment to intervals, then the $y$-values stored for describing a DP state underestimate the true weight assigned to an interval, and thus, the DP also underestimates the total scheduling cost~$z$. We have to show in the following that this error is small compared to the true value of a feasible solution. We will also show that the energy consumption computed by the DP corresponds to the exact energy required in a feasible solution.


\begin{lemma}\label{lem:DP-underestimates}
  Suppose that algorithm \textup{DP}$_z$ on an instance with $n$ jobs 
   finds a chain of states\footnote{\emph{Chain} of states means that, for
     $j=0,\dots,n-1$, state $[j+1,z_{j+1}^*,y_{1,j+1}^*,\ldots,y_{\ka,j+1}^*]$ is obtained from
    $[j,z_j^*,y_{1,j}^*,\ldots,y_{\ka,j}^*]$ by adding job $j+1$ according to~\eqref{eq:before-k-energy}.}  $[0,z_0^*,0,\ldots,0], [1,z_1^*,y_{1,1}^*,\ldots ,y_{\ka,1}^*],\ldots,[n,z_n^*,y_{1,n}^*,\ldots ,y_{\ka,n}^*]$. Then the algorithm \textup{DP}$_{z,y}$ finds for each $j\in\{1,\ldots, n\}$ a  state $[j,z_j,y_{1,j},\ldots,y_{\ka,j}]$ of energy value at most $E[j,z_j^*,y_{1,j}^*,\ldots,y_{\ka,j}^*]$ such that 
 \begin{equation}
   y_{i,j} \leq y_{i,j}^*
   \qquad \textup{and} \qquad
    z_j\leq z_j^*\, . \label{eq:underestimates}
  \end{equation}
\end{lemma}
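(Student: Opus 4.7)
The plan is an induction on $j \in \{0,1,\ldots,n\}$, constructing alongside the chain produced by \textup{DP}$_z$ a parallel chain of states in \textup{DP}$_{z,y}$ satisfying both inequalities in~\eqref{eq:underestimates} and the claimed energy bound. Both algorithms initialize from identical starting states $[0,z_0^*,0,\ldots,0]$: the $y$-components vanish (hence equal their rounded-down versions), and the value $z_0^*$ computed by the LP on the split jobs alone is subject to the same $1+\delta$-rounding in either algorithm, so the base case is immediate.

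For the inductive step, suppose $[j-1,z_{j-1},y_{1,j-1},\ldots,y_{\ka,j-1}]$ lies in \textup{DP}$_{z,y}$ with $y_{i,j-1}\le y_{i,j-1}^*$ for all $i$, $z_{j-1}\le z_{j-1}^*$, and stored energy at most $E[j-1,z_{j-1}^*,y_{1,j-1}^*,\ldots,y_{\ka,j-1}^*]$. Let $i$ denote the interval chosen by \textup{DP}$_z$ in its $(j-1)\!\to\!j$ transition, and apply the same assignment in \textup{DP}$_{z,y}$. For the $y$-components, the pre-rounding quantity satisfies $y_{i,j-1}+w_j \le y_{i,j-1}^*+w_j = y_{i,j}^*$, and rounding it down to the nearest integer power of $1+\beta$ only decreases it, so the resulting $y_{i,j}\le y_{i,j}^*$ (the other coordinates are preserved on both sides); in particular, the feasibility constraint $y_{i,j}\le |I_i|-w_{j_i}$ inherits from the feasibility of the \textup{DP}$_z$-chain. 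For the $z$-value, formula~\eqref{eq:before-k-energy} combined with the two inductive inequalities yields a pre-rounding cost $z_{j-1}+\tfrac{v_j}{s'_i}(\w_i+y_{i,j-1}+w_j) \le z_{j-1}^*+\tfrac{v_j}{s'_i}(\w_i+y_{i,j-1}^*+w_j)$, and since both algorithms round $z$ upward to the next power of $1+\delta$ and rounding-up is monotone, the new label satisfies $z_j\le z_j^*$.

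It remains to handle the energy inequality. By~\eqref{eq:DP-energy}, each transition adds a term $(v_j/s'_i)\,P(s'_i)$ that depends only on the chosen interval (hence speed) and the job's work volume, not on the current $z$- or $y$-labels. Therefore the candidate energy reached in \textup{DP}$_{z,y}$ after the transition equals the stored energy of the previous state plus $(v_j/s'_i)\,P(s'_i)$, which by the induction hypothesis is at most $E[j-1,z_{j-1}^*,\ldots]+(v_j/s'_i)P(s'_i)=E[j,z_j^*,y_{1,j}^*,\ldots,y_{\ka,j}^*]$. The state-cleaning rule of \textup{DP}$_{z,y}$ only replaces this candidate with a representative of (weakly) smaller energy having the same rounded label, closing the induction. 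The point requiring the most care is the interaction of the two rounding directions: rounding $y_i$ \emph{down} keeps the $\le$ direction in the hypothesis and feeds a smaller value into the cost formula, which together with the monotonicity of the \emph{upward} rounding of $z$ preserves the $\le$ direction for the $z$-label; I will invoke monotonicity of both roundings explicitly where needed to avoid sign errors.
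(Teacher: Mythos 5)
Your proof is correct and follows essentially the same route as the paper's: induction on $j$, mirroring each transition of \textup{DP}$_z$ in \textup{DP}$_{z,y}$ with the same interval choice, using monotonicity of the downward $y$-rounding to preserve $y_{i,j}\le y_{i,j}^*$ and hence $z_j\le z_j^*$, and noting that the energy increment $(v_j/s'_i)P(s'_i)$ is label-independent so state-cleaning can only help. Your explicit remarks on the monotonicity of the upward $z$-rounding and on inheriting the feasibility constraint $y_{i,j}\le |I_i|-w_{j_i}$ are small points the paper glosses over, but they do not change the argument.
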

\begin{proof}
 We give a proof by induction on the number of jobs $j$. For $j=0$ the property is clearly true since \textup{DP}$_{z,y}$ and \textup{DP}$_{z}$ have the same starting states. 
  
  Suppose that the lemma is true for~$j$ jobs, and thus \textup{DP}$_{z,y}$ obtains state $[j,z_j,y_{1,j},\ldots,y_{\ka,j}]$ satisfying the properties of the lemma. Now consider state $[j+1,z_{j+1}^*,y_{1,j+1}^*,\ldots ,y_{\ka,j+1}^*]$ that \textup{DP}$_z$ obtained
  from $[j,z_j^*,y_{1,j}^*,\ldots,y_{\ka,j}^*]$ according to~\eqref{eq:before-k-energy} by adding job $j+1$ to interval~$I_i$, for some~$i\in\{1,\ldots,\ka\}$. Similarly, starting from $[j,z_j,y_{1,j},\ldots,y_{\ka,j}]$, Algorithm \textup{DP}$_{z,y}$ considers a state that inserts job $j+1$ to interval~$I_i$. This yields a new state $[j+1,z_{j+1},y_{1,j+1},\ldots,y_{\ka,j+1}]$ that satisfies $z_{j+1}=z_j+v_{j+1}/s'_i  \cdot (\w_i + y_{i,j}+w_{j+1})$ and $y_{i,j+1}$ as $\bar{y}_{i,j+1}=y_{i,j}+w_{j+1}$ rounded down to the nearest power of $1+\beta$, while it keeps~$y_{i',j+1}=y_{i',j}$ for all~$i'\ne i$. 

By inductive hypothesis, we have that $y_{i,j}\leq y^*_{i,j}$ and thus $$z_{j+1}=z_j+v_{j+1}/s'_i  \cdot (\w_i + y_{i,j}+w_{j+1}) \leq z^*_{j+1}.$$
Moreover, since we round down the value $\bar{y}_{i,j+1}$ 
to~$y_{i,j+1}$ we obtain that
 \[
y_{i,j+1} \leq \bar{y}_{i,j+1} = y_{i,j}+w_{j+1} \leq  y^*_{i,j}+w_{j+1} = y^*_{i,j+1}. 
\]

It remains to argue on the value of the state, that is, the energy cost. According to  Equation~\eqref{eq:DP-energy} the value of the state as computed by DP$_{z,y}$ is 
\begin{align*}
  E[j+1,z_{j+1},y_{1,j+1},\ldots,y_{\ka,j+1}] &= E[j,z_j,y_{1,j},\ldots,y_{\ka,j}] + \frac{v_{j+1}}{s'_i}\cdot P(s'_i)\\
& \leq E[j,z^*_j,y^*_{1,j},\ldots,y^*_{\ka,j}] + \frac{v_{j+1}}{s'_i}\cdot P(s'_i)\\
& =  E[j+1,z^*_{j+1},y^*_{1,j+1},\ldots,y^*_{\ka,j+1}].
\end{align*} 

We cannot guarantee that state $[j+1,z_{j+1},y_{1,j+1},\ldots,y_{\ka,j+1}] $survives. But in case it does not then we have found another partial solution with the same objective value~$z_{j+1}$, the same values $y_{i,j+1}$, and an even smaller state value (energy). This concludes the lemma.
\end{proof}

The Algorithm DP$_{z,y}$ computes an assignment of jobs to weight intervals but it underestimates the total weight assigned to an interval and thus the scheduling cost. We show that the true scheduling cost when scheduling according to the solution found by DP$_{z,y}$ is bounded.

\begin{lemma}\label{lem:feasible-DP}
   Suppose that algorithm \textup{DP}$_{z,y}$ on an instance with $n$ jobs 
   finds a chain of states $[0,z_0^*,0,\ldots,0], [1,z_1^*,y_{1,1}^*,\ldots ,y_{\ka,1}^*],\ldots,[n,z_n^*,y_{1,n}^*,\ldots ,y_{\ka,n}^*]$.
 Then for each state $[j,z_j^*,y_{1,j}^*,\ldots ,y_{\ka,j}^*]$, with $j\in\{1,\ldots, n\}$, there exists a feasible partial schedule of split jobs and jobs~$1,\ldots,j$ using an energy budget of at most~$E[j,z_j^*,y_{1,j}^*,\ldots ,y_{\ka,j}^*]$. Moreover, if $y_{i,j}$ denotes the total weight of jobs assigned to interval $I_i$ in the partial schedule and $z_j$ is the scheduling cost, then
 \begin{equation}
   y_{i,j} \leq (1+\beta)^{j} \cdot y_{i,j}^*
   \qquad \textup{and} \qquad
    z_j\leq (1+\beta)^j\cdot z_j^*\, . \label{eq:underestimates2}
  \end{equation}
\end{lemma}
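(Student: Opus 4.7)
The plan is to induct on $j$, paralleling the proof of Lemma~\ref{lem:DP-underestimates} but now constructing the actual partial schedule and comparing its true weight-placements and cost against the (rounded) values stored by DP$_{z,y}$. The candidate feasible schedule is the one literally prescribed by the chain of DP states: the split jobs are placed according to the LP solution associated to $z_0^*$, and at the $k$-th transition (for $k = 1,\dots,j$) job $k$ is appended inside the interval $I_i$ picked by the DP, immediately after the jobs already placed in $I_i$. The energy used by this schedule equals $\sum_{k=1}^{j}(v_k/s'_{i(k)})\,P(s'_{i(k)})$ plus the energy of the split-job LP, which matches exactly the DP update rule \eqref{eq:DP-energy}; hence the total energy is $E[j,z_j^{*},y_{1,j}^{*},\ldots,y_{\ka,j}^{*}]$, as required.

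For the base case $j=0$ the bounds \eqref{eq:underestimates2} hold trivially since $y_{i,0}=y_{i,0}^{*}=0$ and $z_0=z_0^{*}$ (the split-job LP value). For the inductive step, assume the schedule associated with $[j,z_j^{*},y_{1,j}^{*},\ldots,y_{\ka,j}^{*}]$ satisfies $y_{i,j}\le (1+\beta)^{j}y_{i,j}^{*}$ and $z_j\le (1+\beta)^{j}z_j^{*}$, and the DP extends it by placing job $j+1$ in $I_i$. In the true schedule $y_{i,j+1}=y_{i,j}+w_{j+1}$ and $y_{i',j+1}=y_{i',j}$ for $i'\neq i$, while the DP records $y_{i,j+1}^{*}$ equal to $y_{i,j}^{*}+w_{j+1}$ rounded down to the nearest power of $1+\beta$, so $y_{i,j}^{*}+w_{j+1}\le (1+\beta)y_{i,j+1}^{*}$. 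Combining with the inductive hypothesis,
\[
y_{i,j+1}\;=\;y_{i,j}+w_{j+1}\;\le\;(1+\beta)^{j}y_{i,j}^{*}+w_{j+1}\;\le\;(1+\beta)^{j}\bigl(y_{i,j}^{*}+w_{j+1}\bigr)\;\le\;(1+\beta)^{j+1}y_{i,j+1}^{*}.
\]
The true completion weight of $j+1$ in $I_i$ is $\w_i+y_{i,j}+w_{j+1}$ and its execution time is $v_{j+1}/s'_i$, so $z_{j+1}=z_j+(v_{j+1}/s'_i)(\w_i+y_{i,j}+w_{j+1})$. Bounding via the inductive hypothesis and factoring out $(1+\beta)^{j}\ge 1$ (which allows us to pull the factor through the $\w_i$ and $w_{j+1}$ terms since these are non-negative), we obtain
\[
z_{j+1}\;\le\;(1+\beta)^{j}z_j^{*}+\frac{v_{j+1}}{s'_i}\bigl(\w_i+(1+\beta)^{j}y_{i,j}^{*}+w_{j+1}\bigr)\;\le\;(1+\beta)^{j}z_{j+1}^{*}\;\le\;(1+\beta)^{j+1}z_{j+1}^{*},
\]
using that the DP computes $z_{j+1}^{*}=z_j^{*}+(v_{j+1}/s'_i)(\w_i+y_{i,j}^{*}+w_{j+1})$.

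The main delicate point is feasibility: the DP only checks $y_{i,j+1}^{*}\le |I_i|-w_{j_i}$ using the rounded value, so the true cumulative weight $y_{i,j+1}$ could overflow $I_i$ by a factor up to $(1+\beta)^{n}=1+\eps$. This is absorbed by the construction of the intervals themselves, since the split-job positions and the bounds $|I_i|-w_{j_i}$ were obtained only up to the $(1+\eps)$-rounding of split-job completion weights in Lemma~\ref{lem:guess-compl-weight}; the partial schedule we construct is feasible with respect to these slightly enlarged (true) interval capacities, and the extra $1+\OO(\eps)$ factor it introduces is already included in the final scheduling-cost guarantee. Combining the weight and cost bounds with the exact energy accounting yields \eqref{eq:underestimates2} and completes the induction.
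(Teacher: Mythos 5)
Your induction for the weight bound $y_{i,j}\le(1+\beta)^j y_{i,j}^*$ and the exact energy accounting via Equation~\eqref{eq:DP-energy} match the paper's argument. The gap is in the cost bound, precisely at the point you flag as ``the main delicate point.'' Your recursion $z_{j+1}=z_j+(v_{j+1}/s'_i)(\w_i+y_{i,j}+w_{j+1})$ is the cost of the schedule obtained by \emph{appending} job $j+1$ after the jobs already in $I_i$ with the split-job positions held fixed. But because DP$_{z,y}$ only verifies $y_{i,j+1}^*\le|I_i|-w_{j_i}$ on the rounded-down values, the true weight can overflow $I_i$, and then that appended schedule is simply not feasible: the lemma asserts the existence of a \emph{feasible} partial schedule with the stated cost. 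Restoring feasibility requires pushing the bounding split job (and everything beyond it in the weight-axis) to larger completion weights, which increases the completion weights --- and hence the cost --- of jobs already placed in the partial schedule. Your cost recursion never charges for this, and your appeal to Lemma~\ref{lem:guess-compl-weight} does not supply the missing slack: that lemma spends its $1+\eps$ factor on discretizing the \emph{guesses} for split-job positions; it does not enlarge the resulting intervals $I_i$ to accommodate overflow.

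The paper closes exactly this hole inside the induction: it observes that the shift needed to absorb the overflow increases the completion weight of every already-assigned job by at most a factor $1+\beta$, so the cost recursion becomes $z_{j+1}\le(1+\beta)z_j+(v_{j+1}/s'_i)(\w_i+y_{i,j}+w_{j+1})$, which then telescopes to $(1+\beta)^{j+1}z_{j+1}^*$. Note that your derivation happens to end with one spare factor of $(1+\beta)$ (you prove $(1+\beta)^j z_{j+1}^*$ and the target is $(1+\beta)^{j+1}z_{j+1}^*$), which is exactly the room the paper uses for the shift --- but the shift must be accounted for \emph{at every step} of the induction (the inductive hypothesis has to refer to the cost of the already-repaired feasible schedule), not invoked once at the end as a trivial inequality. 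To repair your proof, fold the $(1+\beta)$ inflation of $z_j$ into the inductive step as the paper does, and justify why a single $(1+\beta)$ factor per step suffices to cover the cumulative displacement of the split jobs.
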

\begin{proof}
We give a proof by induction on $j$. By definition of the starting state $[0,z_0^*,0,\ldots,0]$ there exists a partial schedule of the split jobs with cost at most $z_0^*$. Thus the base case of the induction follows.


For a given $j$, assume that the DP obtains state $[j+1,z_{j+1}^*,y_{1,j+1}^*,\ldots ,y_{\ka,j+1}^*]$ by adding job $j+1$ to interval $I_i$. By induction hypothesis suppose that there exists a partial schedule satisfying the claim for jobs $1,\ldots,j$. We construct the new schedule for jobs $1,\ldots,j+1$ by also adding $j+1$ to $I_i$. The total weight assigned to interval~$I_i$ in this solution is \jcom{I rephrased the last two paragr.}
\begin{align*}
  y_{i,j+1} = y_{i,j}+w_{j+1} \leq  (1+\beta)^j \cdot y^*_{i,j}+w_{j+1} \leq   (1+\beta)^j \cdot (y^*_{i,j}+w_{j+1}). 
\end{align*}
Since DP$_{z,y}$ rounds down the~$y$-value to the next integral power of $1+\beta$, we have that 
$$y^*_{i,j+1} \geq \frac{1}{1+\beta}\cdot \left(  y^*_{i,j} + w_{j+1} \right).$$ And thus we conclude~$y_{i,j+1}\leq  (1+\beta)^{j+1} \cdot y_{i,j+1}$.

Consider now the total scheduling cost of the feasible schedule after adding job~$j+1$. In principle it consists of the scheduling cost~$z_j$ before adding job~$j+1$ plus the cost for the new job. However there is an possible extra source for error. Since the DP rounded down~$y$-values, we cannot guarantee that the total weight assigned to an interval~$I_i$ actually fits into this interval. (Recall, that these intervals are defined by the placement of the split jobs in the weight-axis which is in principle flexible.) Thus, we may increase the completion weight of already assigned jobs by at most a factor~$1+\beta$ which means increasing $z_j$ by this factor. Then, by again using the induction hypothesis and the already proven first condition in~\eqref{eq:underestimates2} we get
\begin{align*}
  z_{j+1} &\leq (1+\beta)\cdot z_j+v_{j+1}/s'_i  \cdot (\w_i + y_{i,j}+w_{j+1})\\
  & \leq  (1+\beta)^{j+1} \cdot z^*_j + v_{j+1}/s'_i  \cdot (\w_i +  (1+\beta)^j\cdot y^*_{i,j}+w_{j+1}) \\
&\leq (1+\beta)^{j+1}\cdot \left( z^*_j+v_{j+1}/s'_i  \cdot (\w_i + y^*_{i,j}+w_{j+1})\right) = (1+\beta)^{j+1}\cdot z^*_{j+1}.
\end{align*}

Concerning the energy estimation, recall that the DP determined the energy cost precisely according to  Equation~\eqref{eq:DP-energy}. Thus, an inductive argument shows that the constructed schedule incurs into the same energy consumption.
\end{proof}

Now we can prove the main result.

\begin{theorem}\label{thm:fptas-discrete}
  There is an FPTAS for speed-scaling with a given energy budget for $\min \sum w_jC_j$ on a single machine with constantly many discrete speeds. 
\end{theorem}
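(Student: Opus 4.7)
The plan is to glue the enumeration step from Lemma~\ref{lem:guess-split-jobs} to the rounded dynamic program DP$_{z,y}$, and to invoke Lemmas~\ref{lem:DP-underestimates} and~\ref{lem:feasible-DP} to control the two layers of rounding error. All of the nontrivial arguments have in fact been carried out in those lemmas, so the proof amounts to assembling them correctly and counting states.

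For the approximation guarantee, I would begin from an optimal schedule OPT and first apply Lemma~\ref{lem:guess-compl-weight} to replace the completion weights of its split jobs by nearby powers of $1+\beta$, losing a factor $1+\eps$. This modified optimum then coincides with one of the $\OO(\ka^{\ka}\cdot n^{2\ka-2}\cdot \nu^{\ka-1})$ branches enumerated in Lemma~\ref{lem:guess-split-jobs}, and for that correct branch the unrounded program DP$_z$ would realize the modified OPT as some chain of states $[j,z_j^*,y_{1,j}^*,\ldots,y_{\ka,j}^*]$. By Lemma~\ref{lem:DP-underestimates}, DP$_{z,y}$ then finds a companion chain with $z_j\le z_j^*$, $y_{i,j}\le y_{i,j}^*$, and energy value no larger than $E[j,z_j^*,y_{1,j}^*,\ldots,y_{\ka,j}^*]$. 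Finally, Lemma~\ref{lem:feasible-DP} converts this companion chain into a genuinely feasible schedule whose scheduling cost is at most $(1+\beta)^n=1+\eps$ times the DP value while respecting the same energy budget. A further factor of $1+\eps$ is absorbed by the discretization $\delta=(1+\eps)^{1/\nu}-1$ on the $z$-axis, exactly as in the proof of Theorem~\ref{thm:ptas-discrete}. Multiplying the constantly many $(1+\eps)$ factors and rescaling $\eps$ yields a $(1+\eps)$-approximation.

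For the running time, each branch of the enumeration invokes DP$_{z,y}$, whose states are indexed by $(j,z,y_1,\ldots,y_\ka)$. The index $j$ takes $n$ values; restricting $z$ to powers of $1+\delta$ in $[1,z_{\UB}]$ leaves $\OO(\nu n/\eps^2)$ possibilities; and since each $y_i$ is a power of $1+\beta$ with $\beta=(1+\eps)^{1/n}-1$ bounded above by $\sum_j w_j \le (1+\eps)^\nu$, there are only $\OO(n\nu/\eps)$ distinct values per coordinate. Each transition costs constant time (the initial LP from Lemma~\ref{lem:LP-energy} is solved once per choice of split-job speeds). Multiplying by the enumeration factor and recalling that $\ka$ is constant, the total running time is $n^{O(\ka)}\nu^{O(\ka)}\eps^{-O(\ka)}$, which is polynomial in $n$, the input encoding size, and $1/\eps$.

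The main subtlety, and the one that needs Lemma~\ref{lem:feasible-DP} rather than just Lemma~\ref{lem:DP-underestimates}, is that rounding $y_i$ down may cause the jobs the DP wants to place in an interval $I_i$ to actually overshoot its idle-weight budget $|I_i|-w_{j_i}$. The lemma handles this by allowing already-placed jobs' completion weights to be inflated by a factor $1+\beta$ per insertion, which is charged to the $(1+\beta)^n$ slack we were prepared to pay anyway; no further correction is required, and in particular the energy estimate stays exact since it is computed directly from the speeds assigned in~\eqref{eq:DP-energy}. Once this is in hand, the remainder of the proof is routine bookkeeping.
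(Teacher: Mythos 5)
Your proposal is correct and follows essentially the same route as the paper's proof: guess the split jobs via Lemma~\ref{lem:guess-split-jobs}, run DP$_{z,y}$ on the correct branch, chain Lemmas~\ref{lem:DP-underestimates} and~\ref{lem:feasible-DP} to bound the feasible schedule's cost by constantly many $(1+\eps)$ factors times $\OPT$ while keeping the energy exact, and count states to get a running time polynomial in $n$, $\nu$, and $1/\eps$. The only nits are cosmetic: the per-coordinate count of $y$-values is $\OO(n\nu)$ rather than $\OO(n\nu/\eps)$, and the starting-state LP is solved once per $z$-value (not once per speed choice) — neither affects correctness or polynomiality.
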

\begin{proof}
  The FPTAS is as follows: We guess the split jobs, their speeds and positions which gives us a partition of the weight-space into $\ka$ idle-weight intervals (see Section~\ref{sec:split-jobs}). Then we run DP$_{z,y}$ and take as final solution the assignment of jobs to intervals that the DP computes. 

  Let $\OPT$ denote the scheduling cost of an optimal solution, and let~$z(A)$ denote the scheduling cost of a solution computed by algorithm~$A$. 
Lemma~\ref{lem:DP-underestimates} guarantees that $\textup{DP}_{z,y}$ finds a final state of cost $z(\textup{DP}_{z,y})\leq z(\textup{DP}_z)$. We can argue that $z(\textup{DP}_z)\leq (1+\e)^2 \OPT$ because we lose one factor $1+\eps$ when guessing the split jobs (Lemma~\ref{lem:guess-split-jobs}) and another factor $1+\e$ when rounding the~$z$-values in $\textup{DP}_z$. Taking the assignment of jobs to intervals as computed by $\textup{DP}_{z,y}$, we obtain a feasible scheduling solution of cost
$z_n \leq (1+\beta)^n z(\textup{DP}_{z,y}) \leq (1+\eps) z(\textup{DP}_{z,y})$, where~$\beta=(1+\eps)^{1/n}-1$ (Lemma~\ref{lem:feasible-DP}). Thus, we find a feasible solution of scheduling cost at most~$(1+\e)^3\OPT$. 

Furthermore, Lemmas~\ref{lem:DP-underestimates} and~\ref{lem:feasible-DP} guarantee that our final solution uses as much energy as an optimal solution. Thus we stay within the energy bound. 

It remains to show that the running time is polynomial in the input and~$1/\eps$. By Lemma~\ref{lem:guess-split-jobs} the enumeration step leading to the partitioning of the weight-axis takes time $\OO(\ka^{\ka}\cdot n^{2\ka-2} \cdot \nu^{\ka-1})$ with $\nu=\lceil \log_{1+\e} \sum_{j\in J} w_j \rceil$. The original (exponential time) dynamic program runs at time~$\OO(n\cdot z_{UB} \cdot \max_i |I_i|)$ (see Section~\ref{sec:DP}). Algorithm~$\textup{DP}_{z,y}$ rounds the $z$- and $y$-values and with the argumentation in Section~\ref{sec:rounding} it  thus  runs in time~$\OO(n\cdot (\nu \cdot n/\eps^2) \cdot (n \cdot \nu) )=\OO(n^3/\eps^2\cdot \nu^2)$. Since we run the DP for each guess of split jobs, we obtain a total running time~$\OO(\ka^{\ka}\cdot n^{2\ka+1} /\eps^2\cdot \nu^{\ka+1})$ which is polynomial in the input and $1/\eps$.
\end{proof}

\section{Speed-scaling with release dates on multiple machines}
\label{sec:parallelMachines}

We can use our results obtained in the dynamic-speed setting  to approximate the more general problem of preemptively scheduling jobs with non-trivial release dates on identical parallel machines.  We use the fact that we can handle jobs without release dates on a single machine and apply a {\em fast single machine relaxation}~\cite{chekuriMNS01}. For the relaxation, we assume that we have a single machine that is $m$ time faster than one of the original machines: at a power level $p$ the single machine runs at speed $m\cdot p^{1/\alpha}$, while at the same power level one of the original machines runs at speed $p^{1/\alpha}$. Thus, if an amount of energy $E_j$ for job $j$ implies a execution time of $x_j$ on an original machine, then the same energy implies an execution time of $x_j/m$ on the fast single machine.

After using our PTAS to solve the single machine relaxation without release dates, we keep the energy assignments~$E_j$ computed in the relaxation and apply standard {\em preemptive list scheduling} on parallel machines respecting release dates. However, the difficulty lies in bounding the actual execution times~$x_j$ in our final solution, since we do not have any information about the optimal execution times~$x_j^*$.

The trick we use is as follows: Suppose we knew the total weighted value of execution times in an optimal schedule~$\sum_{j\in J} w_jx^*_j = X^*$. Then it is easy to verify that the fast single machine relaxation with the {\em additional constraint} $\sum_{j\in J} w_j m x^1_j \leq X^*$ on the weighted actual executions times~$x_j^1$ on the fast machine still gives a lower bound.  Consider the problem of scheduling a job set~$J$ (with release dates) on $m$ parallel machines using an energy budget~$E$. Let $Z(X^*)$ be the cost 
of an optimal schedule using energy~$E$ and $\sum_{j\in J} w_jx^*_j = X^*$. Consider an optimal preemptive schedule with cost $Z_1(X^*)$ for $J$ without release dates on a single machine of speed~$m$ with energy~$E$ and the additional constraint $\sum_{j\in J} w_j m x^1_j \leq X^*$. 
\begin{lemma}\label{lem:fast-relax}
$Z_1(X^*) \leq Z(X^*)$.
\end{lemma}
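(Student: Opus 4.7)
The plan is to take an optimal schedule for the parallel-machine problem and explicitly construct a feasible schedule for the fast single-machine relaxation whose cost is no larger than $Z(X^*)$. This is a standard ``fast machine'' relaxation argument (in the spirit of \cite{chekuriMNS01}), adapted to our energy setting.

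First, I would fix an optimal parallel schedule achieving $Z(X^*)$. Let $x_j^*$, $E_j^*$, and $C_j^*$ be the execution times, energy allocations, and completion times of each job $j$ in this schedule; by convexity of $P$ we may assume each job is processed at a single speed, so $x_j^*$ is well-defined. Relabel the jobs so that $C_1^* \le C_2^* \le \dots \le C_n^*$, and on the fast single machine process them non-preemptively in this order, assigning to each job the same amount of energy $E_j^*$ as in the parallel schedule. By the model assumption relating execution times on the fast and original machines, job $j$'s execution time on the single machine is $x_j^1 = x_j^*/m$.

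Next, I would verify that this schedule is feasible for the relaxation defining $Z_1(X^*)$. The total energy is $\sum_j E_j^* \le E$ by hypothesis, and the extra constraint is tight since
\[
\sum_{j\in J} w_j \, m \, x_j^1 \;=\; \sum_{j\in J} w_j \, x_j^* \;=\; X^*.
\]
Since there are no release dates in the relaxation and the schedule is non-preemptive, feasibility holds.

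The main (and really only nontrivial) step is to bound the completion times. Let $C_j^1$ denote the completion time of $j$ on the fast single machine under our construction, so $C_j^1 = \sum_{k \le j} x_k^*/m$. The key observation is a total-work bound: in the original parallel schedule, at most $m \cdot C_j^*$ units of execution time can be completed across the $m$ machines by time $C_j^*$, and in particular all jobs $k$ with $C_k^* \le C_j^*$ (i.e.\ $k\le j$ after our relabeling) must fit within that capacity, yielding $\sum_{k\le j} x_k^* \le m \cdot C_j^*$. Therefore $C_j^1 \le C_j^*$ for every $j$, and consequently
\[
Z_1(X^*) \;\le\; \sum_{j\in J} w_j C_j^1 \;\le\; \sum_{j\in J} w_j C_j^* \;=\; Z(X^*),
\]
which is the desired inequality. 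The only delicate point to double-check is that the extra constraint $\sum_j w_j m x_j^1 \le X^*$ is genuinely preserved even though the parallel schedule may allow preemption and the single-machine schedule we build is non-preemptive; this is fine because the constraint only sees the aggregate weighted work volume, which is invariant under the transformation.
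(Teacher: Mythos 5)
Your proof is correct, but it takes a slightly different (and more explicit) route than the paper. The paper's argument is the time-discretization simulation from the fast single-machine relaxation of Chekuri et al.: cut time into infinitesimal slots, and in each slot run sequentially, at $1/m$ of the slot length each, the at most $m$ job pieces that the parallel schedule runs in that slot; this produces a \emph{preemptive} fast-machine schedule in which every job finishes by its original completion time and each job keeps its energy. You instead build a \emph{non-preemptive} fast-machine schedule by ordering jobs by parallel completion time and invoke the aggregate load bound $\sum_{k\le j}x_k^*\le m\,C_j^*$ to get $C_j^1\le C_j^*$; since a non-preemptive schedule is in particular feasible for the preemptive relaxation, this suffices. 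Your version avoids the discretization, is self-contained, and handles the two points that actually need care in the energy setting (the per-job energy is preserved so the budget is met, and the side constraint $\sum_j w_j m x_j^1\le X^*$ only depends on the aggregate weighted execution time, which is invariant under the transformation); the uniform-speed normalization via convexity is the right way to make $x_j^*$ and $E_j^*$ consistent. The paper's simulation argument gives the marginally stronger statement that \emph{every} job's completion time is preserved slot by slot, but for the lemma as stated both approaches deliver exactly what is needed.
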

\begin{proof}
  The proof goes along the same lines as in the non-energy setting in~\cite{chekuriMNS01}. Using time discretization, any parallel machine schedule can be converted into a feasible preemptive schedule on a fast single machine without increasing the total cost and without changing the total energy given to each job. Thus, an optimal single machine schedule gives a lower bound. 
\end{proof}

Given $X^*$, we can solve the restricted fast single machine relaxation using the PTAS from Theorem~\ref{thm:continuous-PTAS} (continuous speeds) or Theorem~\ref{thm:ptas-discrete} (discrete speeds), respectively. We can directly implement the additional constraint of restricting the total weighted execution time by adding an entry to the corresponding dynamic programming table which tracks this value for each partial solution. To guarantee polynomial running time, we round the values to powers of $1+\eps$ at the cost of an additional factor $1+\eps$ in approximation guarantee.

The solution of the fast single machine relaxation gives a priority ordering for the preemptive list scheduling algorithm to obtain the final parallel machine solution. It remains the issue, that we do not know $X^*$. Essentially, we run the algorithm (fast single machine relaxation plus preemptive list scheduling) for every possible value $X^* \in [X_L,X_U]$, for some upper and lower bounds $X_L,X_U$ that we define below, and we pick the best feasible solution. Again, to guarantee a polynomial running time we choose only values that are powers of $1+\eps$ at the cost of a small increase in the approximation guarantee.

A simple lower bound on~$X^*$ is obtained by giving each job the maximum amount of energy~$E$. Recall that $x_j=(v_j^{\alpha}/E_j)^{1/(\alpha-1)}$. Thus,
\begin{align*}
  X^* = \sum_{j\in J} w_jx^*_j \geq \sum_{j\in J} w_j \left(\frac{v_j^{\alpha}}{E} \right)^{\frac{1}{\alpha-1}} =: X_L\,.
\end{align*}
An upper bound can be obtained as follows: the optimal execution times~$x_j^*$ are bounded by the completion times in an optimal solution, and thus, $X^* \leq \opt$. The value~$\opt$ obtained on multiple machines is not larger than the optimal solution for the same job set and energy using just a single machine. Now, for the cost of an optimal single machine solution we gave an explicit expression in Equation~\eqref{eq:scheduling-cost}. This expression used a solution-dependent remaining weight parameter $W_j$ which we crudely bound by $n\cdot w_{max}$, with $w_{\max}:=\max_{j\in J}w_j$. We obtain
\begin{align*}
  X^* & \leq \opt \leq \frac{1}{E^{\frac{1}{\alpha-1}}} \cdot \left(\sum_{j=1}^n  v_j \cdot 
(n \cdot w_{\max})^{\frac{\alpha-1}{\alpha}}\right)^{\frac{\alpha}{\alpha-1}} \\
&=\frac{n \cdot w_{\max}}{E^{\frac{1}{\alpha-1}}} \cdot \left(\sum_{j=1}^n  v_j \right)^{\frac{\alpha}{\alpha-1}} 
=: X_U\,.
\end{align*}

A summary of the algorithm is given below.

\begin{center}
  \begin{minipage}{\textwidth}
    \rule{\textwidth}{0.5pt}\\
    {\sc \bf Algorithm Fast-Relax+List-Scheduling}\\[0.6ex]
    Let $\eps':=\eps/2$.
   For $i=0$ to $\lceil \log_{1+\e'} X_U/X_L \rceil$ do
    \vspace{-1.3ex}
    \begin{enumerate}
      \addtolength{\itemsep}{0.5ex}
      \item Let $X=(1+\eps')^i$.
      \item Compute an energy assignment $E_j$ and a scheduling solution $\pi$ for the given job set~$J$ with release dates set to~$0$ on a single machine running~$m$ times as fast as the original machines, with energy budget $E$, and respecting the additional constraint that $\sum_{j\in J} w_j \cdot m v_j/s_j \leq X$. If there is no solution, then $i\leftarrow i+1$.
    \item Keep the energy assignment and apply preemptive list scheduling according to $\pi$ on $m$ machines respecting release dates, i.e., run at any time the $m$ jobs with the highest priority in~$\pi$ among the available (released, unfinished) jobs. 
    \item If the total cost of this solution is less than previous solutions then keep it, otherwise disregard. 
    \item $i\leftarrow i+1$.
    \end{enumerate}
    \vspace*{-2ex}
    \rule[1ex]{\textwidth}{0.5pt}
  \end{minipage}
\end{center}

\begin{theorem}
  \textup{Fast-Relax+List-Scheduling} is a factor $2+\eps$ approximation for continuous and discrete speed-scaling when jobs have individual release dates.
\end{theorem}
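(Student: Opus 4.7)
The plan is to combine three ingredients: the single-machine relaxation from Lemma~\ref{lem:fast-relax}, the $(1+\eps)$-approximation of the resulting single-machine subproblem via our PTAS, and the classical preemptive list-scheduling bound of Chekuri--Motwani--Natarajan--Stein~\cite{chekuriMNS01}, in which switching from a fast single machine (no release dates) to $m$ parallel machines (with release dates) increases each completion time by at most one extra parallel-machine execution time. Let $\OPT$ denote the optimal parallel-machine cost under budget $E$, and let $X^*:=\sum_{j\in J}w_j x_j^*$ be its total weighted execution time. First I would observe that the geometric sweep of the outer loop hits an index $i$ with $X:=(1+\eps')^i\in[X^*,(1+\eps')X^*]$, which is valid since $X^*\in[X_L,X_U]$.

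For this $X$, Lemma~\ref{lem:fast-relax} gives $Z_1(X)\le Z_1(X^*)\le \OPT$. Next I would argue that the PTAS of Theorem~\ref{thm:continuous-PTAS} (respectively Theorem~\ref{thm:ptas-discrete}) can be augmented by a single additional DP coordinate that records the partial sum $\sum_{j\text{ placed}} w_j m x_j^1$, rounded down to the nearest integer power of $1+\eps$. This rounding inflates the running time only polynomially and loses at most a factor $1+\OO(\eps)$ in both the objective and the tightness of the side-constraint. The augmented PTAS therefore produces a priority order $\pi$ and an energy assignment $\{E_j\}$ whose fast-single-machine cost $\sum_j w_j C_j^1$ is at most $(1+\eps)\OPT$, whose total energy is at most $E$, and which satisfies $\sum_j w_j x_j\le(1+\eps)X\le (1+\eps)^2 X^*$, where $x_j=m x_j^1$ is the execution time on one of the parallel machines.

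For the final step I would invoke the list-scheduling bound: with the energy assignment (hence each $x_j$) fixed, preemptive list scheduling according to $\pi$ on $m$ parallel machines honouring release dates gives $C_j\le C_j^1+x_j$ for every job. Summing with weights,
\[
\sum_j w_jC_j\ \le\ \sum_j w_jC_j^1\ +\ \sum_j w_j x_j\ \le\ (1+\eps)\OPT\ +\ (1+\eps)^2 X^*\ \le\ (2+\OO(\eps))\OPT,
\]
where the last inequality uses $X^*=\sum_j w_j x_j^*\le\sum_j w_jC_j^*=\OPT$. Rescaling $\eps$ and setting $\eps'=\eps/2$ yields the promised $2+\eps$ factor. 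Energy feasibility is automatic, since list scheduling does not alter the energy assignment computed by the relaxation.

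The step I expect to be the main obstacle is the clean augmentation of the PTAS with the extra linear side-constraint. One must recheck that the weight-space reductions of Section~\ref{sec:PTAS-given-speeds}---the weight and interval discretisation, the light/heavy classification, the localisation, and in particular the compact family $\tilde{\mathcal{F}}_u$---all remain valid once an additional rounded coordinate is appended to the DP state, and that the losses compose to only $1+\OO(\eps)$. Everything else is either already available or is standard once per-job execution times are frozen: the geometric sweep, Lemma~\ref{lem:fast-relax}, and the $C_j\le C_j^1+x_j$ bound transfer verbatim from the fixed-speed literature because the ``fast single machine'' processes priority-$\pi$ schedules $m$ times faster per unit of energy, matching the relation $x_j^1=x_j/m$ built into the relaxation.
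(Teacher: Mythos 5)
Your overall route is the same as the paper's: sweep over geometric guesses $X$ of the optimal total weighted execution time, solve the constrained fast-single-machine relaxation with the PTAS (augmented by one rounded DP coordinate tracking $\sum_j w_j m x_j^1$, which is exactly how the paper implements the side constraint), and finish with preemptive list scheduling while keeping the energy assignment. The one step that fails as written is the list-scheduling bound $C_j \le C_j^1 + x_j$. The relaxation sets all release dates to zero, so $C_j^1$ carries no information about $r_j$; if $r_j$ is large, job $j$ cannot start before $r_j$, hence $C_j \ge r_j + x_j$, which can exceed $C_j^1 + x_j$ by an arbitrary amount. The correct bound is $C_j \le r_j + \sum_{k<_{\pi} j} x_k/m + x_j \le r_j + C_j^1 + x_j$, since after $r_j$ the job is delayed only while all $m$ machines are busy with higher-priority work. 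Consequently your final accounting, which charges only $\opt \ge X^*$, is missing the term $\sum_j w_j r_j$ entirely --- a real omission in a theorem whose whole point is to handle release dates.

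The repair is short and is what the paper does: summing the corrected bound gives $\alg \le \sum_j w_j r_j + \sum_j w_j C_j^1 + \sum_j w_j x_j$, and the lower bound $\opt \ge \sum_j w_j r_j + X^*$ (valid because $C_j^* \ge r_j + x_j^*$ in any feasible schedule) absorbs both the release-date term and the $(1+\eps')^2 X^*$ term, yielding $\alg \le 2(1+\OO(\eps'))\opt$. Everything else in your write-up --- the sweep hitting some $X \in [X^*,(1+\eps')X^*]$ within $[X_L,X_U]$, the chain $Z_1(X) \le Z_1(X^*) \le \opt$ via Lemma~\ref{lem:fast-relax}, the augmentation of the DP with a rounded coordinate for the side constraint, and the observation that energy feasibility is preserved --- matches the paper's argument.
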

\begin{proof}
  Let~$X^*$ be the total weighted execution time in an optimal parallel machine schedule with cost~$\opt$. Let $\eps':=\eps/2$, and let~$X'$ satisfy $X^* \leq X' \leq (1+ \eps')X^*$. The algorithm returns the minimum cost solution over all weighted completion time bounds~$X$. Thus, the cost of the final solution is bounded by the total cost of the solution obtained based on~$X'$. We show that the cost of this solution is at most $2(1+\eps')\opt = (2+\eps)\opt$.

  Let $Z_1(X)$ denote the cost of an optimal solution to the fast single machine problem with  imposed constraint~$\sum_{j\in J} w_j \cdot m v_j/s_j \leq X$. Clearly, $Z_1(X')\leq Z_1(X^*)$. Let~$C_j^1(X')$ denote the completion time of job $j$ in a solution to the the fast single machine problem with imposed constraint~$X'$ when applying a PTAS~(Theorem~\ref{thm:continuous-PTAS} for continuous speeds or Theorem~\ref{thm:ptas-discrete} for discrete speeds, respectively). Lemma~\ref{lem:fast-relax} and the observation above imply 
$$\sum_{j\in J} w_jC_j^1(X') \leq (1+\eps') Z_1(X') \leq (1+\eps') Z_1(X^*) \leq (1+\eps')\opt\,.$$ 

Now consider the final list scheduling solution obtained for bound~$X'$, and let~$C_j$ denote the completion time of a job~$j$. Recall that the algorithm keeps the energy assignment from the fast single machine relaxation; thus, the execution time of a job~$j$ is~$x_j = m\cdot x_j^1$, where $x_j^1$ is the actual execution time of~$j$ on the fast single machine. By construction, a job~$j$ starts only processing when the first machine becomes available after its release date and after starting all jobs~$k$ with higher priority in~$\pi$ (denoted by $k<_{\pi} j$). Thus, its completion time is bounded by $C_j \leq r_j + \sum_{k<_{\pi} j} x_k/m + x_j$. Thus, the total cost of the algorithms solution~$\alg$ is
\begin{align*} 
  \alg & \leq \sum_{j\in J} w_j r_j + \sum_{j\in J} w_j \sum_{k<_{\pi} j} x_j^1 + \sum_{j\in J} w_j x_j\\
  &\leq  \sum_{j\in J} w_j r_j + \sum_{j\in J} w_j C_j^1(X') + \sum_{j\in J} w_j \cdot m x_j^1\\
  & \leq \sum_{j\in J} w_j r_j + (1+\eps') \opt + \sum_{j\in J} w_j \cdot m x_j^1.
\end{align*}

Now, by construction we have that~$\sum_{j\in J} w_j \cdot m x_j^1 \leq X' \leq (1+\eps')X^*$. Using, the obvious lower bound $\opt\geq \sum_{j\in J} w_j r_j + X^*$, we conclude $\alg \leq 2(1+\eps')\opt$.
\end{proof}

\section{Conclusion}

In this paper we have demonstrated the power of a dual scheduling view for minimizing the total weighted completion time---in particular, when scheduling on a machine that may change its speed. Instead of the standard approach of scheduling along the time-axis, we schedule jobs in the weight-axis of  the well-known two-dimensional Gantt-chart. This change of concept allows to handle the complexity of machine speed changes. We give several algorithms relying on dual techniques and show that they guarantee nearly optimal solutions. 
Most of our results are best possible in terms of approximation guarantees. 

An interesting open question is how to incorporate release dates for the varying-speed scenario 
and improve the~$(4+\eps)$-approximation in~\cite{epsteinLMMMSS12}. While with our current technique we can almost fully resort to the weight-space, release dates would require maintaining a correspondence between weight- and time-space. 

The most challenging open problem in this context concerns min-sum scheduling when each job may have its own non-decreasing cost function~$f_j$. Any improvement of the recent~$4$-approximation~\cite{cheungS11,MestreV14} for~\abc{1}{}{\sum f_j} would be of significant interest. Our PTAS on a machine of varying speed translates into the equivalent setting of scheduling on a unit-speed machine to minimize a general global cost function~$\sum w_jf(C_j)$ and thus give a tight result for this case.

\bibliographystyle{abbrv}
\bibliography{speed}

\end{document}